\documentclass[a4paper, fleqn]{article}

\usepackage{authblk}
\usepackage{epigraph}
\usepackage{lettrine}
\usepackage{type1ec}
\usepackage{latexsym}
\usepackage{amsmath}
\usepackage{amssymb}
\usepackage{amsthm}
\usepackage{setspace}
\usepackage{afterpage}
\usepackage{newlfont}
\usepackage{array}
\usepackage{tabularx}
\usepackage{dcolumn}
\usepackage{graphicx}
\usepackage{titlesec}
\usepackage{booktabs}
\usepackage[bf]{caption}
\usepackage{tikz}
\usepackage{bussproofs}
\usepackage{proof}
\usepackage{rotating}
\usepackage{multirow}
\usepackage{mathrsfs}
\usepackage{synttree}
\usepackage{forest}
\usepackage{stmaryrd}
\usepackage[show]{ed}
\usepackage{hyperref}

\usepackage[lined,linesnumbered,algoruled]{algorithm2e}

\theoremstyle{definition}
\newtheorem{theorem}{Theorem}[section]
\theoremstyle{definition}
\newtheorem{lemma}[theorem]{Lemma}
\theoremstyle{definition}

\theoremstyle{definition}
\newtheorem{proposition}[theorem]{Proposition}
\theoremstyle{definition}
\newtheorem{corollary}[theorem]{Corollary}
\theoremstyle{definition}
\newtheorem{definition}{Definition}[section]
\theoremstyle{definition}
\newtheorem{example}{Example}[section]
\theoremstyle{definition}

\newcommand{\hyp}{\mid}

\newcommand{\mc}{\mathcal}
\newcommand{\mf}{\mathsf}

\newcommand{\vart}{\mathsf{t}}

\newcommand{\ufor}{\Vd^{\forall}}
\newcommand{\efor}{\Vd^{\exists}}

\newcommand{\nbr}{\vartriangleright}

\newcommand{\p}[1]{{#1}}
\newcommand{\n}[1]{\ov{#1}}

\newcommand{\Wcompl}[1]{\W\setminus{#1}}

\newcommand{\leftrule}[1]{$\mathsf{L#1}$}
\newcommand{\rightrule}[1]{$\mathsf{R#1}$}

\newcommand{\lufor}{\leftrule{\ufor}}
\newcommand{\rufor}{\rightrule{\ufor}}
\newcommand{\lefor}{\leftrule{\efor}}
\newcommand{\refor}{\rightrule{\efor}}

\newcommand{\init}{$\mf{init}$}
\newcommand{\lbot}{\leftrule{\bot}}
\newcommand{\rtop}{\rightrule{\top}}

\newcommand{\lbox}{\leftrule{\Box}}
\newcommand{\rbox}{\rightrule{\Box}}

\newcommand{\lneg}{\leftrule{\neg}}
\newcommand{\rneg}{\rightrule{\neg}}
\newcommand{\lto}{\leftrule{\to}}
\newcommand{\rto}{\rightrule{\to}}
\newcommand{\lland}{\leftrule{\land}}
\newcommand{\rland}{\rightrule{\land}}
\newcommand{\llor}{\leftrule{\lor}}
\newcommand{\rlor}{\rightrule{\lor}}

\newcommand{\labelrule}[1]{\textsf{#1}}
\newcommand{\ruleM}{\labelrule M}
\newcommand{\ruleC}{\labelrule C}
\newcommand{\ruleNtau}{\labelrule N}
\newcommand{\ruleNovtau}{$\ov\tau^\0$}
\newcommand{\ruledec}{\labelrule{dec}}
\newcommand{\ruleovdec}{$\overline{\labelrule{dec}}$}

\newcommand{\hyplwk}{\hyprule{Lwk}}
\newcommand{\hyprwk}{\hyprule{Rwk}}
\newcommand{\hyplctr}{\hyprule{Lctr}}
\newcommand{\hyprctr}{\hyprule{Rctr}}
\newcommand{\hypcut}{\hyprule{cut}}
\newcommand{\hyprule}[1]{\textsf{#1}}

\newcommand{\ltrset}{\llbracket}
\newcommand{\rtrset}{\rrbracket}
\newcommand{\ltr}{\llbracket}
\newcommand{\rtr}{\rrbracket}

\newcommand{\ov}{\overline}

\newcommand{\G}{\Gamma}
\newcommand{\D}{\Delta}

\newcommand{\N}{\mathcal N}

\newcommand{\W}{\mathcal W}
\newcommand{\V}{\mathcal V}
\newcommand{\C}{\mathcal C}

\newcommand{\M}{\mathcal M}

\newcommand{\lan}{\mathcal L}

\newcommand{\atm}{Atm}

\newcommand{\s}{\mathcal S}
\newcommand{\logic}{\logicnamestyle L}

\newcommand{\pow}{\mathcal P}

\newcommand{\Mb}{\M_\bi}
\newcommand{\Mn}{\M_\st}
\newcommand{\Nb}{\N_\bi}
\newcommand{\Nn}{\N_\st}
\newcommand{\Mr}{\M_r}
\newcommand{\Wimp}{\W^i}
\newcommand{\Wnorm}{\W\setminus\W^i}

\newcommand{\0}{\emptyset}

\newcommand{\logicnamestyle}[1]{\textbf{#1}}
\newcommand{\Estar}{\logicnamestyle{E}^*}

\newcommand{\E}{\logicnamestyle{E}}
\newcommand{\EN}{\logicnamestyle{EN}}
\newcommand{\EC}{\logicnamestyle{EC}}
\newcommand{\ECN}{\logicnamestyle{ECN}}
\newcommand{\EM}{\logicnamestyle{M}}
\newcommand{\EMN}{\logicnamestyle{MN}}
\newcommand{\EMC}{\logicnamestyle{MC}}
\newcommand{\EMCN}{\logicnamestyle{MCN}}
\newcommand{\EP}{\logicnamestyle{EP}}
\newcommand{\ENP}{\logicnamestyle{ENP}}
\newcommand{\ECP}{\logicnamestyle{ECP}}
\newcommand{\ECNP}{\logicnamestyle{ECNP}}
\newcommand{\EMP}{\logicnamestyle{MP}}
\newcommand{\EMNP}{\logicnamestyle{MNP}}
\newcommand{\EMCP}{\logicnamestyle{MCP}}

\newcommand{\ED}{\logicnamestyle{ED}}
\newcommand{\END}{\logicnamestyle{END}}
\newcommand{\ECD}{\logicnamestyle{ECD}}
\newcommand{\ECND}{\logicnamestyle{ECND}}
\newcommand{\EMD}{\logicnamestyle{MD}}
\newcommand{\EMND}{\logicnamestyle{MND}}
\newcommand{\EMCD}{\logicnamestyle{MCD}}

\newcommand{\EDoneplus}{\logicnamestyle{ED}_1^+}
\newcommand{\EDtwoplus}{\logicnamestyle{ED}_2^+}

\newcommand{\EMDtwoplus}{\logicnamestyle{MD}_2^+}

\newcommand{\EDnplus}{\logicnamestyle{ED}_n^+}

\newcommand{\ECDnplus}{\logicnamestyle{ECD}_n^+}

\newcommand{\EMDnplus}{\logicnamestyle{MD}_n^+}
\newcommand{\EMNDnplus}{\logicnamestyle{MND}_n^+}
\newcommand{\EMCDnplus}{\logicnamestyle{MCD}_n^+}

\newcommand{\EPD}{\logicnamestyle{EPD}}

\newcommand{\EMstar}{\logicnamestyle{M}^*}
\newcommand{\ECstar}{\logicnamestyle{EC}^*}
\newcommand{\ENstar}{\logicnamestyle{EN}^*}
\newcommand{\ETstar}{\logicnamestyle{ET}^*}
\newcommand{\EDstar}{\logicnamestyle{ED}^*}
\newcommand{\EPstar}{\logicnamestyle{EP}^*}

\newcommand{\EMCstar}{\logicnamestyle{MC}^*}
\newcommand{\EMNstar}{\logicnamestyle{MN}^*}
\newcommand{\EDnplusstar}{\hEstar{D_n^+}}
\newcommand{\EMTstar}{\logicnamestyle{MT}^*}
\newcommand{\EMDstar}{\logicnamestyle{MD}^*}
\newcommand{\EMPstar}{\logicnamestyle{MP}^*}
\newcommand{\EMDnplusstar}{\hMstar{D_n^+}}

\newcommand{\K}{\logicnamestyle{K}}

\newcommand{\KD}{\logicnamestyle{KD}}
\newcommand{\KP}{\logicnamestyle{KP}}
\newcommand{\KDnplus}{\logicnamestyle{KD}_n^+}

\newcommand{\CPL}{\logicnamestyle{CPL}}

\newcommand{\MCstar}{\logicnamestyle{MC}^*}

\newcommand{\EX}{\logicnamestyle{EX}}
\newcommand{\MX}{\logicnamestyle{MX}}

\newcommand{\Lstar}{\logic^*}

\newcommand{\hEstar}[1]{\mathbf{E{#1}^*}}
\newcommand{\hMstar}[1]{\mathbf{M{#1}^*}}

\newcommand{\hilbertaxiomstyle}[1]{${#1}$}

\newcommand{\RE}{\hilbertaxiomstyle{RE}}
\newcommand{\RM}{\hilbertaxiomstyle{RM}}
\newcommand{\RN}{\hilbertaxiomstyle{RN}}

\newcommand{\MP}{\hilbertaxiomstyle{MP}}

\newcommand{\axM}{\hilbertaxiomstyle{M}}
\newcommand{\axC}{\hilbertaxiomstyle{C}}
\newcommand{\axN}{\hilbertaxiomstyle{N}}
\newcommand{\axK}{\hilbertaxiomstyle{K}}
\newcommand{\axT}{\hilbertaxiomstyle{T}}
\newcommand{\axD}{\hilbertaxiomstyle{D}}
\newcommand{\axP}{\hilbertaxiomstyle{P}}
\newcommand{\axfour}{\hilbertaxiomstyle{4}}

\newcommand{\axX}{\hilbertaxiomstyle{X}}

\newcommand{\RDnplus}{\hilbertaxiomstyle{RD}$_n^+$}

\newcommand{\ax}{\AxiomC}
\newcommand{\uinf}{\UnaryInfC}
\newcommand{\binf}{\BinaryInfC}
\newcommand{\llab}{\LeftLabel}
\newcommand{\rlab}{\RightLabel}
\newcommand{\disp}{\DisplayProof}

\newcommand{\bin}{bi-neighbourhood}

\newcommand{\st}{{st}}
\newcommand{\bi}{{bi}}

\newcommand{\CstEstar}{\C^{\st}_{\Estar}}

\newcommand{\Cstlogic}{\C^{\st}_{\logic}}
\newcommand{\Cblogic}{\C^{\bi}_{\logic}}

\newcommand{\semcond}[1]{{#1}}
\newcommand{\cM}{\semcond M}

\newcommand{\cN}{\semcond N}

\newcommand{\cC}{\semcond C}
\newcommand{\cT}{\semcond T}
\newcommand{\cP}{\semcond P}
\newcommand{\cD}{\semcond D}
\newcommand{\cRDnplus}{\semcond RD$_n^+$}

\newcommand{\cX}{\semcond X}

\newcommand{\set}[1]{\mathsf{set}(#1)}

\newcommand{\R}{\mathcal R}

\newcommand{\w}{wg}
\newcommand{\seq}{\Rightarrow}

\newcommand{\Nat}{\mathbb{N}}

\newcommand{\AND}{\bigwedge}

\newcommand{\vd}{\vdash}
\newcommand{\Vd}{\Vdash}
\newcommand{\eg}{e.g.}
\newcommand{\ie}{i.e.}
\newcommand{\ih}{inductive hypothesis}

\newcommand{\hp}{height-preserving}


\newcommand{\varR}{\mathscr R}

%

%




\newcommand{\angA}{\langle A \rangle}
\newcommand{\angB}{\langle B \rangle}

\newcommand{\angSigma}{\langle \Sigma \rangle}

\newcommand{\str}[1]{\langle #1 \rangle}
\newcommand{\fint}{i}
\newcommand{\hG}{G}
\newcommand{\hH}{H}

\newcommand{\rboxm}{$\mathsf{R\Box m}$}

\newcommand{\rulen}{$\mf N$}

\newcommand{\rulec}{$\mf C$}
\newcommand{\rulep}{$\mf P$}
\newcommand{\rulet}{$\mf T$}
\newcommand{\ruled}{$\mf D_2$}
\newcommand{\ruledaux}{$\mf D_1$}
\newcommand{\ruledm}{$\mf D_{\mathsf M}$}
\newcommand{\rulednplus}{$\mf D_n^+$}
\newcommand{\ruledmplus}{$\mf D_m^+$}
\newcommand{\rulediplus}{$\mf D_i^+$}

\newcommand{\ruledoneplus}{$\mf D_1^+$}
\newcommand{\ruledtwoplus}{$\mf D_2^+$}
\newcommand{\hypruleT}{\hyprule T}
\newcommand{\hypruleP}{\hyprule P}

\newcommand{\hypcalc}[1]{\mathbf{H.#1}}

\newcommand{\HL}{\hypcalc L}
\newcommand{\HE}{\hypcalc E}

\newcommand{\HEC}{\hypcalc{EC}}
\newcommand{\HMC}{\hypcalc{MC}}

\newcommand{\HMCN}{\hypcalc{MCN}}
\newcommand{\HMCNT}{\hypcalc{MCNT}}

\newcommand{\HEstar}{\hypcalc E^*}
\newcommand{\HEXstar}{\hypcalc{EX}^*}

\newcommand{\HMCstar}{\hypcalc{MC}^*}

\newcommand{\wk}{$\mathsf{wk}$}
  
\newcommand{\lwk}{$\mathsf{Lwk}$}
\newcommand{\rwk}{$\mathsf{Rwk}$}
\newcommand{\lctr}{$\mathsf{Lctr}$}

\newcommand{\angctr}{\lctr}

\newcommand{\exctr}{\hyprule{Ectr}}    
\newcommand{\exwk}{\hyprule{Ewk}}
\newcommand{\subrule}{$\mathsf{sub}$}
\newcommand{\subrulem}{$\mathsf{sub_M}$}

\newcommand{\lanLS}{\mc L_{\mathsf{LS}}}
\newcommand{\hl}[3]{\left[#1\right]^{#2}_{#3}}
\newcommand{\oneEs}{\mathbf{LSE^*}}
\newcommand{\Lbox}{\mathsf{L}\Box}
\newcommand{\Rbox}{\mathsf{R}\Box}

\newcommand{\MRbox}{\Rbox\mathsf{m}}
\newcommand{\iset}[2]{\{#1\}_{#2}}
\newcommand{\Lufor}{\mathsf{L}\ufor}
\newcommand{\Rufor}{\mathsf{R}\ufor}
\newcommand{\Lefor}{\mathsf{L}\efor}
\newcommand{\Refor}{\mathsf{R}\efor}
\newcommand{\dec}{\mathsf{dec}}
\newcommand{\ovdec}{\ov{\dec}}
\newcommand{\RLM}{\mathsf{M}}
\newcommand{\RLC}{\mathsf{C}}
\newcommand{\RLN}{\mathsf{N}}

\newcommand{\Lwedge}{\mathsf{L}\wedge}

\newcommand{\Init}{\mf{init}}
\hyphenation{bi-neigh-bour-hood}

\newcommand {\hide}[1]{}

\newcommand{\size}[1]{\mathsf{size}(#1)}

\title{Hypersequent calculi for non-normal modal and deontic
	logics: Countermodels and optimal complexity\thanks {This work has been partially supported by the ANR project TICAMORE ANR-16-
CE91-0002-01. Lellmann has been supported by WWTF project MA16-28. Pimentel has been partially supported by CNPq.}}
\author[1]{Tiziano Dalmonte}
\author[2]{Bj\"{o}rn Lellmann}
\author[1]{Nicola Olivetti}
\author[3]{Elaine Pimentel}

\affil[1]{Aix Marseille Univ, Universit\'{e} de Toulon, CNRS, LIS,
Marseille, France\\
\{tiziano.dalmonte,nicola.olivetti\}@lis-lab.fr}
\affil[2]{
bj.lellmann@gmail.com}
\affil[3]{Departamento de Matem\'{a}tica, UFRN, Natal, Brazil\\
elaine.pimentel@gmail.com}

\begin{document}
\maketitle

\begin{abstract}
  We present some hypersequent calculi for all systems of the
  classical cube and their extensions with axioms \axT, \axP, \axD,
  and, for every $n \geq 1$, rule \RDnplus.  The calculi are internal
  as they only employ the language of the logic, plus additional
  structural connectives.  We show that the calculi are complete with
  respect to the corresponding axiomatisation by a syntactic proof of
  cut elimination.  Then we define a terminating root-first proof
  search strategy based on the hypersequent calculi and show that it
  is optimal for \textsf{coNP}-complete logics.  Moreover, we obtain
  that from every saturated leaf of a failed proof it is possible to
  define a countermodel of the root hypersequent in the \bin\
  semantics, and for regular logics also in the relational
  semantics. We finish the paper by giving a translation between
  hypersequent {\em rule applications} and {\em derivations} in a
  labelled system for the classical cube.

\bigskip
\noindent
\textbf{Keywords} \quad Non-normal modal logic, deontic logic,
hypersequent calculus, neighbourhood semantics, optimal complexity.
\end{abstract}


\section{Introduction}
Non-normal modal logics--NNMLs for short--have a long history, going
back to the seminal works by Kripke, Montague, Segeberg, Scott, and
Chellas (see \cite{Chellas:1980} for an introduction). They are
``non-normal'' as they do not contain all axioms of minimal normal
modal logic \K. NNMLs find an interest in several areas, from
epistemic to deontic reasoning. They also play a r\^{o}le in
multi-agent reasoning and strategic reasoning in games.  For instance
in epistemic reasoning they offer a simple (although partial) solution
to the problem of logical omniscience (see \cite{Vardi:1986}); in
deontic logic, they allow avoiding well-known paradoxes (such as
Ross's Paradox) and to represent conflicting obligations
(see~\cite{Goble:2013}); multi-agent logics with non-normal modalities
have been proposed to capture agency and ability: $\Box A$ is read as
the agent can bring about $A$ (see~\cite{Elgesem:1997}); a related
interpretation is the game-theoretical interpretation of $\Box A$ as
``the agent has a winning strategy to bring about $A$'' (indeed,
non-normal monotonic logic \EM{} can be seen as a 2-agent case of
coalition logic with determinacy~\cite{DBLP:journals/logcom/Pauly02}).
Finally, NNMLs are needed also when $\Box A$ is interpreted as ``$A$
is true in most of the cases''~\cite{Askounis}.
 
In this work, we consider the {\em classical cube} of NMMLs, given by
the extensions of minimal modal logic \E, containing only the {\em
  congruence rule} \RE, with axioms \axC, \axM\ and \axN.  We also
consider extensions with axioms/rules T, D, P, and D$_n^+$, where T is
the {\em reflexivity} axiom in classical normal modal logic, and the
others axioms are significant in deontic logic. More precisely,
reading $\Box A$ as ``it is obligatory that $A$'', D is the
characteristic axiom of deontic logic
$\neg (\Box A \land \Box \neg A)$, expressing that something and its
negation cannot at the same time be obligatory; and P is the axiom
$\neg \Box \bot$, expressing that something impossible cannot be
obligatory. Although the axioms P and D are equivalent in normal modal
logic, this does not hold in the non-normal setting.  The system EMNP
is considered as a meaningful {\em minimal system of Deontic
  Logic}~\cite{Goble:2013,Orlandelli:2014}.  Finally, although the
rules D$_n^+$ have never been considered ``officially'' in the
literature, (but see \cite{Goble:2013} and \cite{Hansson3}), they
properly generalise the axiom D for systems without C, expressing that
there cannot be $n$ incompatible obligations: if
$\neg (A_1 \land \ldots \land A_n)$ then
$\neg(\Box A_1 \land \ldots \land \Box A_n)$.

NNMLs have a well-understood semantics defined in terms of
neighbourhood models \cite{Chellas:1980,Pacuit:2017}: in these models
each world $w$ has an associated set of neighbourhoods $\N(w)$, each
one of them being a set of worlds/states.  If we accept the
traditional interpretation of a ``proposition'' as a set of worlds (=
its truth set), we can think of each neighbourhood in $\N(w)$ as the
proposition: a formula $\Box A$ is true in a world $w$ if ``the
proposition'' $A$, \ie\ the truth-set of $A$, belongs to $\N(w)$.  The
classical cube and all mentioned extensions can be modelled by
imposing additional closure properties of the set of neighbourhoods.

In this work we adopt a variant of the neighbourhood semantics defined
in terms of bi-neighbourhood models~\cite{Dalmonte:2018}: in these
structures each world has associated a set of \emph{pairs} of
neighbourhoods. The intuition is that the two components of a pair
provide positive and negative support for a modal formula. This
variant is significant and more natural for ``non-monotonic'' logics
(\ie\ not containing axiom \axM). The reason is that, instead of
specifying exactly the truth sets in $\N(w)$, the pairs of
neighbourhoods specify \emph{lower} and \emph{upper} bounds of truth
sets, so that the same pair may be a ``witness'' for several
propositions.  For this reason, the generation of
\emph{countermodels}, one of the goals of the present work, is easier
in the bi-neighbourhood semantics than in the standard one.
Bi-neighbourhood models can be transformed into standard ones and
vice-versa.

The proof-theory of NNMLs is not quite as developed as their
semantics, apart from early works on regular modal logics
like~\cite{Fitting:1983book}. In particular, it is curious to note
that, although some proof-systems for NNMLs have been proposed in the
past, countermodel generation has been rarely addressed and
computational properties of proof systems are seldom analysed.
Indeed, the
works~\cite{Lavendhomme:2000,Gilbert,Negri:2017,Dalmonte:2018} propose
countermodel extraction in the neighbourhood semantics, but all of
them require either a complicated procedure or an extended language
with labels.  The recent~\cite{DBLP:conf/tableaux/Lellmann19} presents
a nested sequent calculus for a logic combining normal and monotone
non-normal modal logic that supports countermodel extraction. However
the nested sequent structure is not suitable for logics lacking
monotonicity.  In contrast, cut-free sequent/linear nested calculi for
the classical cube and its extensions with standard axioms of normal
modal logics (the non-normal counterpart of logics from \K\ to
$\mathbf{S5}$), including deontic axioms D and P, are studied in
\cite{Indr:2005,Indr:2011,Lellmann:2019,Orlandelli:2019}.  In
particular, \cite{Orlandelli:2019} focuses on cut-free sequent calculi
on calculi for deontic logic, partially covering the family of systems
defined in this paper.  However, neither semantic completeness, nor
countermodel extraction, nor complexity are studied in the mentioned
papers.

In this work, we intend to fill this gap by proposing \emph{modular}
calculi for the classical cube and the mentioned deontic extensions
that provide {\em direct countermodel extraction} and are of {\em
  optimal complexity}.  Our calculi are semantically based on
bi-neighbourhood models, and have two syntactic features: they
manipulate hypersequents and sequents may contain blocks of $\Box$-ed
formulas in the antecedent.  A hypersequent~\cite{Avron} is just a
multiset of sequents and can be understood as a (meta-logical)
disjunction of sequents. Sequents within hypersequents can be read as
formulas of the logic and, for this reason, our calculi are ``almost''
internal.  Blocks of formulas are interpreted as conjunctions of
negative $\Box$-ed formulas.  Intuitively, each block represents a
neighbourhood satisfying one or more $\Box$-ed formulas, and this
allows for the formulation of modular calculi. It is worth noticing
that the calculi have also good proof-theoretical properties, as they
support a syntactic proof of cut admissibility.

We make clear that, for the purpose of having sound and complete
calculi for NNMLs, neither hypersequents, nor blocks are necessary, as
for instance the sequent calculi in \cite{Lavendhomme:2000,
  Orlandelli:2019, Indr:2005, Indr:2011} show.  But as we shall see,
the hypersequent framework is very adequate to extract countermodels
from a single failed proof, ensuring at the same time good
computational and structural properties.  As a matter of fact, even in
the bi-neighbourhood semantics, non-normal modal logics, in particular
without monotonicity, ultimately need to consider truth sets of
formulas. Hence, in order to obtain calculi suitable for a reasonably
straightforward countermodel construction, we need to be able to
represent essentially all worlds of a possible model in the data
structure used by the calculus.  While this could also be accomplished
by other types of calculi, for obtaining \emph{small} countermodels in
non-monotonic logics it is crucial that every world (represented by a
component of the hypersequent) has access to {\em all} other worlds
which have been constructed so far.  This very strongly suggests a
{\em flat} structure, as given by hypersequents, in contrast for
instance, with the tree-like structure of nested sequents.

A further advantage of using hypersequents is that all rules become
invertible, thus there is no need for backtracking in proof
search. For the same reason, the hypersequent calculi provide
\emph{direct} countermodel extraction: from \emph{one} failed proof we
can \emph{directly} extract a countermodel in the bi-neighbourhood
semantics of the sequent/formula at the root of the derivation.  A
particular case is the one of regular logics, i.e., logics containing
both M and C (whence normal modal logic K as well). These systems
admit a relational semantics. We show how to extract a relational
countermodel from a failed proof-search in the calculi for these
logics as well.

We also consider the problem of obtaining optimal decision
procedures. The known complexity bounds are not the same for logics
without and with axiom \axC{}. Namely (see~\cite{Vardi:1986}), the
former are \textsf{coNP}, whereas the latter are \textsf{PSPACE}, a
fact that also follows by a general result on non-iterative modal
logics~\cite{Schroeder:2008}. For logics with \axC{} (belonging to our
cube), a \textsf{PSPACE} decision procedure can be obtained by
standard proof-search in sequent calculi, like those ones
in~\cite{Lavendhomme:2000}. Therefore we concentrate on the more
significant case of logics \emph{without} \axC{}: it turns out that
for these logics our calculi provide an \emph{optimal} \textsf{coNP}
decision procedure. For logics including \axC{}, we can still obtain
an optimal \textsf{PSPACE} decision procedure by adopting an {\em
  unkleene'd} version of our calculi, which ``sacrifices'' the
invertibility of rules.

We finish this work by presenting a formal translation between
hypersequent calculi, restricted to the classical cube, and the
labelled calculi $\oneEs$, presented in~\cite{Dalmonte:2018}.  As
mentioned above, our calculi have an internal flavour, since sequents
have an interpretation within the logics (although hypersequents do
not). Labelled systems, on the other hand, are intrinsically external
due to the use of symbols that are not in the base logical language,
in the form of labels.  Establishing translations between sequent
based systems and labelled systems is often a hard
task~\cite{simpson94phd,DBLP:conf/tableaux/CiabattoniMS13,DBLP:conf/aiml/GirlandoON18}. Indeed,
hyper/nested sequents typically carry the semantical information
within their structure, while labels explicitly mark semantical
behaviours to formulas.  The results presented in this work shows that
our hypersequent calculi provide a compact encoding of derivations in
the labelled framework.

All in all, we believe that the structure of our calculi, namely  
hypersequents with blocks,  is adequate for NNMLs from a semantical, computational and a proof-theoretical point of view since it:
 (i) has a semantic interpretation;
(ii) allows direct countermodel generation; (iii) supports optimal
complexity decision procedures;
(iv) has good proof-theoretical properties; and (v) has a natural translation to labelled systems present in the literature.

This article is a thoroughly revised and significantly extended
version of the conference
paper~\cite{DBLP:conf/lfcs/DalmonteLOP20}. Some of the most
significant extensions with respect to that work are the modular
extension of all the results to twice the amount of axioms, the
extension to the relational semantics for regular logics, and the
investigation of the formal relation with the calculi presented
in~\cite{Dalmonte:2018}, in the form of mutual simulation.

The plan of the paper is as follows: In Section 2 we introduce the
logical systems considered in this work. In Section 3, we present both
standard neighbourhood semantics and its bi-neighbourhood variant.  In
Section 4, we introduce hypersequent calculi and we prove the main
proof-theoretical properties, including cut-admissibility, from which
also follows their syntactic completeness.  In Section 5, we show how
the calculi can provide a decision procedure for the respective logics
and we analyse their complexity.  In Section 6, we show how the
calculi can be used to extract directly countermodels from failed
proofs, which is one of the main goals of this work; additionally,
this directly yields semantic completeness.  Finally in Section 7 we
explore the relation of hypersequent calculi with previously
introduced labelled calculi for the classical cube of NNML, while
Section 8 contains some final discussion and conclusions.



\section{Non-normal modal logics as axiom systems}
\label{section:classical logics axioms}

In this section we introduce, axiomatically, the class of non-normal
modal logics we consider in this work.
 
\begin{definition} {\em Non-normal modal logics} are defined over a
  propositional modal language $\lan$, based on a set
  $\atm = \{p_1, p_2, p_3, ...\}$ of countably many propositional
  variables.  The well-formed \emph{formulas} of $\lan$ are defined by
  the following grammar
  \begin{center}
    $A ::= p \mid \bot \mid \top \mid A\land A \mid A \lor A \mid A\to A \mid \Box A$.
  \end{center}
\end{definition}

\noindent
In the following, we use $A, B, C, D, E$ and $p, q, r$ as
metavariables for, respectively, arbitrary formulas and atoms of
$\lan$.  We consider the standard abbreviations
$\neg A := A \to \bot$, $A\leftrightarrow B := (A\to B)\land(B\to A)$,
and $\Diamond A := \neg\Box\neg A$.

\begin{definition}\label{def:axioms}
The family of non-normal modal logics  is generated by
\begin{enumerate}
\item  any axiomatization of  classical propositional logic (\CPL) 
  formulated in the language $\lan$, comprising the rule 
  of  \emph{modus ponens} (\MP)\label{item:axiomspc}
  \begin{center}	
    \ax{$A \rightarrow B$}
    \ax{$A$}
    \llab{\MP}
    \binf{$B$}
    \disp
  \end{center}
\item The rule RE of Figure  \ref{figure:hilbert axioms}.\label{item:re}	
\item Any or none of the other axioms or  rules of Figure \ref{figure:hilbert axioms}.
\end{enumerate}
The minimal non-normal modal logic is $\E$, defined by only
items~\ref{item:axiomspc} and~\ref{item:re} above. We denote
non-normal modal logics by $\EX$, where $\mathbf X$ stands for the
(possibly empty) additional set of axioms and rules from Figure
\ref{figure:hilbert axioms}.  We adopt the convention of replacing E
with M for systems containing axiom M, which are consequently denoted
by $\MX$. We also drop the ``R'' of rule \RDnplus. E.g., we write
$\mathbf{MD_3^+}$ for the logic given by extending $\E$ with the axiom
\axM{} and rule $RD_3^+$. In addition, given a non-normal modal logic
$\logic$, we will write $\Lstar$ to indicate an arbitrary extension of
$\logic$ with some other axioms.

As usual, we say that a formula $A$ is {\em provable} in $\logic$
(denoted by $ \vd_{\logic} A$) if it is an instance of an axiom of
$\logic$ or it is obtained from previous formulas by applying the
rules of $\logic$.
\end{definition}
\begin{figure}[t]
  \centering
  \fbox{\begin{minipage}{33em}
      \vspace{0.1cm}

      \begin{tabular}{lllllll}
        \vspace{0.2cm}
        &\ax{$A \leftrightarrow B$}
          \llab{\RE}
          \uinf{$\Box A \leftrightarrow \Box B$}
          \disp
        &&
           \multicolumn{4}{l}{
           \ax{$\neg(A_1 \land ... \land A_n)$}
           \llab{\RDnplus}
           \rlab{$(n \geq 1)$}
           \uinf{$\neg(\Box A_1 \land ... \land \Box A_n)$}
           \disp} \\

        \vspace{0.2cm}
        &\axM \quad $\Box (A \land B)\to \Box A$ &&
                                                    \axC \quad $\Box A \land \Box B \to \Box (A \land B)$ &&
                                                                                                             \axN \quad $\Box \top$ \\

        &\axT \quad $\Box A \to A$ &&
                                      \axD \quad $\neg(\Box A \land \Box \neg A)$ &&
                                                                                     \axP \quad $\neg\Box \bot$ \\
      \end{tabular}
    \end{minipage}}
  \caption{Modal axioms and rules.}\label{figure:hilbert axioms}
\end{figure}

Logic $\E$ is the weakest system of the so-called \emph{classical
  cube} \cite{Chellas:1980,Lellmann:2019}, generated by any
combination of axioms \axM, \axC, and \axN, as shown in Figure
\ref{figure:classical cube}.

As it is well known, the axioms \axM{} and \axN{} are respectively equivalent to  the rules
of \emph{monotonicity} \RM\ and \emph{necessitation} \RN
$$RM \ \frac{A\to B}{\Box A \to \Box B} \quad\quad RN \ \frac{A}{\Box A}$$
Moreover, axiom \axK:
$\Box( A\rightarrow B)\rightarrow \Box A\rightarrow \Box B$ is
derivable in $\EMC$.  As a consequence, the top system $\EMCN$
coincides with the minimal normal modal logic $\K$.

In the following, we call \emph{monotonic}\index{monotonic logic} any
system containing axiom \axM\ (and \emph{non-monotonic} otherwise), we
call \emph{regular}\index{regular logic} any system containing both
axioms \axM{} and \axC, and we call \emph{normal}\index{normal logic}
any system containing \axM, \axC, and \axN{}.

\begin{figure}
\qquad \hfill
\resizebox{4cm}{3cm}{
\begin{tikzpicture}
	\node (a) at  (0,0)  {\E};
    \node (b) at (0, 2.1) {\EM};
    \node  (c) at (-1.5, 0.7) {\EC};
    \node (d) at (2.3, 0.7) {\EN};
    \node (e) at (-1.5, 2.8) {\EMC};
    \node (f) at (2.3, 2.8) {\EMN};
    \node (g) at (0.8, 1.4) {\ECN};
    \node (h) at (0.8, 3.5) {\EMCN{} (\K)};

\draw (a) -- (b);
\draw (a) -- (c);
\draw (a) -- (d);
\draw (b) -- (e);
\draw (b) -- (f);
\draw (c) -- (e);
\draw [dashed] (c) -- (g);
\draw (d) -- (f);
\draw [dashed] (d) -- (g);
\draw (e) -- (h);
\draw (f) -- (h);
\draw [dashed] (g) -- (h);
\end{tikzpicture}
}
\hfill \qquad
\caption{\label{figure:classical cube} The classical cube.}
\end{figure}
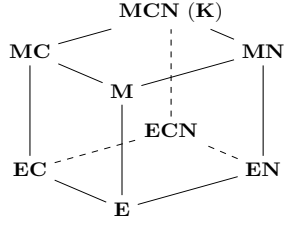
In the present work, we will in particular consider extensions of the
basic classical cube with any combination of the axioms \axT, \axD,
\axP, and for every $n \in \Nat$, $n \geq 1$, the rule \RDnplus\
(Figure~\ref{figure:classical cube}).

The logics given by the rules \RDnplus\ have a peculiar interest in
deontic logic since the latter exclude the possiblity of having $n$
obligations that cannot be realised all together.  Let us consider the
following example, essentially from Hansson~\cite[p. 41]{Hansson3}:
(1) I have to keep my mobile switched on (as I'm waiting for an urgent
call), (2) I have to attend my child schoolplay, (3) being in the
audience of a schoolplay I must keep my mobile switched
off. Representing these three claims by $mobile\_on$, $schoolplay$,
and $\neg (mobile\_on \land schoolplay)$, by using rule $RD_3^+$, it
can be concluded that the three obligations are incompatible:
$$\neg (\Box mobile\_on \land \Box schoolplay \land  \Box \neg (mobile\_on \land schoolplay))$$
This conclusion cannot be obtained in any non-normal modal logic
without $RD_3^+$ or \axC{}, even if it contains both deontic axioms 
\axD\ and \axP. 

While rules \RDnplus\ are entailed by axioms \axD\ and \axP\ in normal
modal logics, this is not the case in non-normal modal logics,
therefore they may be considered explicitly.  It is also worth noting
that the axioms \axD\ and \axP\ are equivalent in normal modal logics,
but not in non-normal ones.

The relations among the systems defined by adding \axD, \axP, and
\RDnplus\ to the systems of the classical cube are displayed in
Figure~\ref{figure:deontic systems}.  Finally, observe also that all
\axD, \axP, and \RDnplus\ are entailed by axiom \axT.

\begin{figure}[t]
\begin{center}
\resizebox{10cm}{9cm}{ 
\begin{small}
\begin{tikzpicture}
    \node (E) at  (0,0)  {\E};
    \node (EM) at (0, 4.7) {\EM};
    \node  (EC) at (-3, 1.5) {\EC};
    \node (EN) at (4.9, 1.5) {\EN};
    \node (EMC) at (-3, 6.1) {\EMC};
    \node (EMN) at (4.9, 6.1) {\EMN};
    \node (ECN) at (1.9, 3) {\ECN};
    \node (EMCN) at (1.9, 7.5) {\EMCN}; 
    \node  (EP) at (-1.1, 1.1) {$\EP \equiv \EDoneplus$};
    \node  (ED) at (2.5, 1.3) {\ED};
    \node  (EDtwo) at (1.1, 3.5) {$\EDtwoplus$};
    \node  (EDn) at (1.1, 4.6) {$\EDnplus$};
    \node  (MD) at (2.5, 6.5) {$\EMD \equiv \EMDtwoplus$};
    \node  (END) at (3.7, 4) {\END};
    \node  (ENP) at (6, 2.5) {\ENP};
    \node  (ECD) at (-4.1, 2.5) {\ECD};
    \node  (ECP) at (-4.1, 7.9) {$\ECP \equiv \ECDnplus$};
    \node  (ECND) at (-4.1, 8.9) {$\ECND \equiv \ECNP$};
    \node  (EPD) at (1.1, 2.1) {\EPD};
    \node  (MCD) at (-1.4, 9.4) {$\EMCD \equiv \EMCP \equiv \EMCDnplus$};
    \node (MP) at (-1.3, 5.7) {\EMP};
    \node  (MND) at (3.7, 8.6) {\EMND};
    \node  (MCND) at (0.7, 11.8) {$\KD \equiv \KP \equiv \KDnplus$}; 
    \node (MNP) at (6, 7.1) {\EMNP};
    \node (midECD-ECP) at (-2.88, 3.94) {};    
    \node  (MDn) at (0.2, 8.1) {$\EMDnplus$};     
    \node  (MNDn) at (3.7, 9.7) {$\EMNDnplus$}; 
\draw  (E) -- (EM);
\draw  (E) -- (EC);
\draw  (E) -- (EN);
\draw  (EM) -- (EMC);
\draw  (EM) -- (EMN);
\draw  (EC) -- (EMC);
\draw [dotted] (EC) -- (ECN);
\draw  (EN) -- (EMN);
\draw [dotted] (EN) -- (ECN);
\draw [dotted] (EMC) -- (EMCN);
\draw [dotted] (EMN) -- (EMCN);
\draw [dotted] (ECN) -- (EMCN);
   \draw [dashed] (E) -- (EP);
   \draw [dashed] (E) -- (ED);
   \draw [dashed] (EDtwo) -- (EPD);
   \draw [dashed] (EDtwo) -- (EDn);
   \draw [dashed] (ED) -- (MD);
   \draw [dashed] (ED) -- (END);
   \draw [dashed] (EN) -- (END);
   \draw [dashed] (EP) -- (ECP);
   \draw  (EC) -- (ECP);
   \draw [dashed] (ED) -- (EPD);
   \draw [dashed] (ED) -- (ECD);
   \draw [dashed] (EP) -- (EPD);
   \draw  (EC) -- (ECD);
   \draw  (EN) -- (ENP);
   \draw  (ENP) -- (MNP);
   \draw [dashed] (END) -- (ENP);
   \draw [dashed] (MD) -- (MDn);
   \draw [dashed] (MDn) -- (MCD);
   \draw  (MP) -- (MD);
   \draw  (EM) -- (MP);
   \draw [dashed] (EP) -- (MP);
   \draw [dashed] (MD) -- (EDtwo);
   \draw [dashed] (END) -- (MND);
   \draw  (EMN) -- (MND);
   \draw  (MD) -- (MND);
   \draw  (MND) -- (MNDn);
   \draw  (MNDn) -- (MCND);
   \draw [dashed] (EMCN) -- (MCND);
   \draw  (MCD) -- (MCND);
   \draw  (EMN) -- (MNP);
   \draw [dashed] (EDn) -- (ECP);
   \draw  (MP) -- (MCD);
   \draw  (EMC) -- (MCD);
   \draw  (ECP) -- (MCD);
   \draw  (MNP) -- (MND);
   \draw (ECP) -- (ECD);
   \draw (ECP) -- (ECND);
   \path[-]          (ECND)  edge   [bend left]   node {} (MCND);
   \draw [dashed] (EP) -- (ENP);
   \draw [dashed] (END) -- (ECND);
   \draw [dashed] (EDn) -- (MDn);
   \draw [dashed] (MDn) -- (MNDn);
\end{tikzpicture}
\end{small}
}
\end{center}
\caption{\label{figure:deontic systems} Diagram of deontic systems (
``Pantheon'').}
\end{figure}
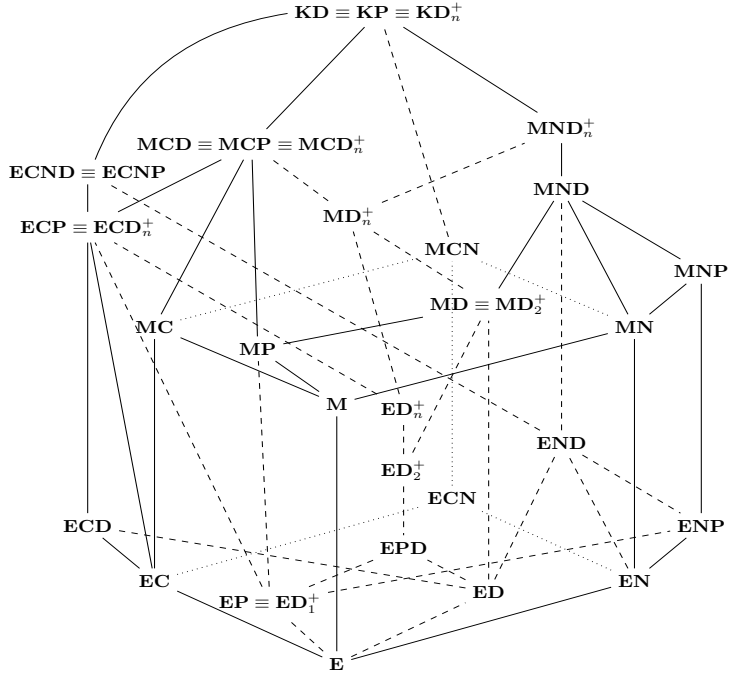



\section{Semantics}
\label{section:standard semantics}

The standard semantic characterisation of non-normal modal logics is
given in terms of \emph{neighbourhood} models \cite{Pacuit:2017}, also
called \emph{minimal} \cite{Chellas:1980}, or \emph{Scott-Montague}.
In this work they will be called \emph{standard neighbourhood}, or
just \emph{standard} models.  Standard neighbourhood models are a
generalisation of Kripke models for normal modal logics, where the
binary relation is replaced by a so-called neighbourhood function,
which assigns to each world a set of sets of worlds.  Intuitively, the
neighbourhood function assigns to each world the propositions that are
necessary/obligatory/etc.~in it.  Standard neighbourhood models are
defined as follows.

\begin{definition}[Standard Neighbourhood Model]
\label{def:standard semantics}
A \emph{standard neighbourhood
  model} is a triple $\M = \langle \W,\N,\V \rangle$, where $\W$ is a
non-empty set, whose elements are called worlds, $\N$ is a function
$\W \longrightarrow \mathcal{PP}(\W)$, called neighbourhood function,
and $\V : \atm \longrightarrow \mathcal P(\W)$ is a valuation function
for propositional variables of $\mathcal L$.  The forcing relation
$\M, w \Vd_{\st} A$ is defined as follows, where
$\ltrset A\rtrset_{\M}$ denotes the set
$\{v \mid \M, v \Vd_{\st} A\}$, also called the \emph{truth set} of
$A$:

\begin{center}
\begin{tabular}{l l l}
$\M, w \Vd_{\st} p$ & iff & $w \in V(p)$; \\

$\M, w \not\Vd_{\st}\bot$; \\

$\M, w \Vd_{\st}\top$; \\

$\M, w \Vd_{\st} A \to B$ & iff & $\M, w \not\Vd_{\st} A$ or $\M, w \Vd_{\st} B$; \\

$\M, w \Vd_{\st} \Box A$ & iff & $\ltrset A\rtrset_{\M}\in \N(w)$.
\end{tabular}
\end{center}

\end{definition}
We adopt the standard definitions of validity.

\begin{definition}\label{def:validity}
  We say that a formula $A$ is \emph{valid} in a model
  $\M=\langle \W, \N, \V \rangle$, written $\M \models A$, if
  $\M, w \Vd A$ for all $w\in\W$.  We say that $A$ is \emph{valid} in
  a class of models $\C$, written $\C \models A$, if $\M\models A$ for
  all $\M\in\C$.
\end{definition}

The class of all standard models characterises the basic logic \E. For
the extensions of logic $\E$, we need to consider some additional
closure properties of the neighbourhood function, as specified in the
next definition.

\begin{definition}[Semantic conditions for Extensions]
  Given a standard neighbourhood model
  $\M = \langle \W,\N,\V \rangle$, We consider the following
  conditions on $\N$
  \begin{center}
    \begin{tabular}{l l l}
      (\axM) & If $\alpha\in \N(w)$ and $\alpha\subseteq\beta$, then $\beta\in\N(w)$. \\

      (\axC) & If $\alpha, \beta\in \N(w)$, then $\alpha\cap\beta \in \N(w)$. \\

      (\axN) & $\W \in \N(w)$. \\

      (\axT) & If $\alpha\in \N(w)$, then $w\in\alpha$. \\

      (\axP) & $\0 \notin \N(w)$.  \\

      (\axD) & If $\alpha\in \N(w)$, then $\Wcompl\alpha\notin \N(w)$. \\

      (\RDnplus) & If $\alpha_1, ..., \alpha_n\in \N(w)$, 
                   then $\alpha_1 \cap ... \cap \alpha_n \not= \emptyset$. \\

    \end{tabular}
  \end{center}
\end{definition}

\noindent
The properties (\axM), (\axC), (\axN) are respectively called
\emph{supplementation}, \emph{closure under intersection}, and
\emph{containing the unit}; accordingly, a standard model is
supplemented, closed under intersection, or contains the unit if it
satisfies the corresponding property.

In the following, for every logic $\logic$ we denote by $\Cstlogic$
the class of standard models for $\logic$.

\begin{theorem}\label{standard completeness}
  Logic $\Estar$ is sound and complete with respect to the
  corresponding standard neighbourhood models, that is:
  $\models_{\CstEstar} A$ if and only if $\vd_{\Estar} A$.
\end{theorem}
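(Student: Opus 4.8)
The plan is to establish the two directions separately, with soundness by a routine induction on derivations and completeness by a canonical model construction. For \emph{soundness}, I would argue that every axiom of $\Estar$ is valid on the class $\CstEstar$ and that each inference rule preserves validity on that class. The propositional axioms and \MP\ are handled by the standard truth-functional semantics, so the work concentrates on the modal principles. For the congruence rule \RE, if $A \leftrightarrow B$ is valid then $\ltrset A\rtrset_\M = \ltrset B\rtrset_\M$ in every model $\M \in \CstEstar$, whence $\ltrset A\rtrset_\M \in \N(w)$ iff $\ltrset B\rtrset_\M \in \N(w)$, giving validity of $\Box A \leftrightarrow \Box B$. For each of \axM, \axC, \axN, \axT, \axP, \axD, and \RDnplus, I would check that the corresponding frame condition in Definition (Semantic conditions for Extensions) forces the axiom; for instance, supplementation gives \axM\ since $\ltrset{A\land B}\rtrset_\M \subseteq \ltrset A\rtrset_\M$, and the intersection condition $\alpha_1 \cap \cdots \cap \alpha_n \neq \emptyset$ under $\neg(A_1 \land \cdots \land A_n)$ (which forces $\ltrset{A_1}\rtrset \cap \cdots \cap \ltrset{A_n}\rtrset = \emptyset$) yields \RDnplus. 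Since $\CstEstar$ is by definition the class satisfying exactly the conditions matching the axioms present in $\Estar$, soundness follows by the induction.

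For \emph{completeness}, the approach is the standard canonical (Lindenbaum--Henkin) model construction adapted to neighbourhood semantics. I would take $\Wc$ to be the set of all maximal $\Estar$-consistent sets of formulas, define the canonical valuation by $w \in \Vc(p)$ iff $p \in w$, and define the canonical neighbourhood function by the proof-set (truth-set) recipe: letting $|A| := \{ w \in \Wc \mid A \in w \}$ denote the proof set of $A$, set
\[
  \Nc(w) := \{\, |A| \mid \Box A \in w \,\}.
\]
The crucial lemma is the \emph{Truth Lemma}: for every formula $A$ and every $w \in \Wc$, one has $\Mc, w \Vd_\st A$ iff $A \in w$, proved by induction on $A$. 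The modal case requires that $|A| = \ltrset A\rtrset_{\Mc}$ (which is exactly the inductive hypothesis applied pointwise) together with the fact that $\Box A \in w$ iff $\ltrset A\rtrset_{\Mc} \in \Nc(w)$. The left-to-right direction here is immediate from the definition of $\Nc$, while the right-to-left direction is precisely where congruence is needed: if $\ltrset A\rtrset_{\Mc} \in \Nc(w)$ then $\ltrset A\rtrset_{\Mc} = |B|$ for some $B$ with $\Box B \in w$, and since $|A| = |B|$ means $\vd_{\Estar} A \leftrightarrow B$ (a separate proof-set lemma), rule \RE\ gives $\vd_{\Estar} \Box A \leftrightarrow \Box B$, so $\Box A \in w$ by maximal consistency.

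The remaining obligation, and the one I expect to be the main obstacle, is to verify that the canonical model $\Mc$ actually lies in the class $\CstEstar$, that is, that $\Nc$ satisfies every frame condition corresponding to an axiom of $\Estar$. Each such verification translates a semantic closure property into a derivability fact about proof sets. The supplementation condition \axM\ is the delicate one, because from $|A| \in \Nc(w)$ and $|A| \subseteq \beta$ one must produce a witness formula whose proof set is exactly $\beta$ and whose box lies in $w$; this is handled using axiom \axM\ together with the observation that $|A| \subseteq |C|$ entails $\vd_{\Estar} A \to C$, but it genuinely requires care since arbitrary supersets $\beta$ need not be proof sets of any single formula. For the monotone logics one resolves this by closing $\Nc$ under supersets (adding a supplemented canonical model), and one checks this does not disturb the Truth Lemma; for \axC, \axN, \axT, \axP, \axD, and \RDnplus\ the corresponding checks reduce, respectively, to the derivability of the matching axiom applied to witnessing formulas, and are comparatively routine. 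Once $\Mc \in \CstEstar$ is confirmed, completeness follows: if $\not\vd_{\Estar} A$ then $\{\neg A\}$ extends to some $w \in \Wc$ with $A \notin w$, so by the Truth Lemma $\Mc, w \not\Vd_\st A$, witnessing $\not\models_{\CstEstar} A$.
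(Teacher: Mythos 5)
Your proposal is correct and follows exactly the route the paper relies on: the paper does not prove this theorem itself but cites Chellas for the canonical-model argument (proof-set neighbourhood function, Truth Lemma via \RE, supplementation of the canonical model for the monotone systems) and notes that the extension to the rules \RDnplus\ is routine, which is precisely what you spell out. Your identification of supplementation as the one delicate verification, and your treatment of it by closing $\Nc$ under supersets and re-checking the Truth Lemma, matches the standard construction the citation points to.
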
       

A proof of this result can be found in \cite{Chellas:1980} for the
systems of the classical cube and their extensions with axioms \axT,
\axP, \axD.  The proof can be easily extended also to the rules
\RDnplus.

A special case is given by regular logics $\MCstar$ (i.e. possibly
lacking the necessitation axiom N) For these logics there exists a
relational semantics which goes back to Kripke himself: in
\cite{KripkeII} he introduces relational models with so-called
non-normal worlds, with the aim of characterising a family of Lewis'
and Lemmon's systems in which necessitation fails or is validated only
in a restricted form.  Here we consider a definition from Priest
\cite[Section~4.2]{Priest:2008}.

\begin{definition}
  \label{def:relational models classical nnmls}
  A \emph{relational model} with \emph{non-normal worlds} is a tuple
  $\M = \langle \W, \Wimp, \R, \V \rangle$, where $\W$ is a non-empty
  set of worlds, $\Wimp \subseteq \W$ is the set of non-normal worlds,
  $\R \subseteq \W \times \W$ is a binary relation, and
  $\V: \atm \longrightarrow \pow(\W)$ is a valuation function for
  propositional variables.  The forcing relation $w \Vd_r A$ is defined
  as in Definition~\ref{def:standard semantics}, except for boxed
  formulas, for which it is defined as follows:
  \begin{center}
    $w \Vd_r \Box A$ \quad iff \quad $w \notin\Wimp$ and for all
    $v \in \W$, if $w\R v$ then $v \Vd A$.
  \end{center}
\end{definition}

Observe that in impossible worlds all boxed formulas are falsified.
Validity is defined as usual: we say that a formula is valid in a
model if it is true in all worlds (no matter if they are normal or
non-normal).  \footnote{This is one of the \emph{two} definitions of
  validity considered by Priest \cite{Priest:2008}.}  It is easy to
verify that non-normal relational models validate axioms \axM\ and
\axC\ but do not validate axiom \axN.  Notice also that in case
$\Wimp$ is empty we have the standard definition of Kripke models for
normal modal logics.

A semantic characterisation of some extensions of logic $\EMC$ by
means of non-normal relational models can be given by considering the
usual frame properties of Kripke semantics.

\begin{theorem}[{\cite[pp.300]{Fitting:1983book}}]\label{th:completeness relational semantics}
  Every \emph{regular} logic $\MCstar$ is sound and complete with
  respect to the corresponding relational models.
\end{theorem}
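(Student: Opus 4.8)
The plan is to establish the two halves separately: soundness by induction on derivations, and completeness by a canonical-model construction adapted to regular logics. I read ``the corresponding relational models'' as the non-normal relational models whose frames satisfy the usual Kripke conditions matching the extra axioms of $\MCstar$: $\Wimp=\0$ for \axN, reflexivity (at the normal worlds) for \axT, and seriality of the normal worlds for \axD, \axP, and every \RDnplus. Both directions are then proved relative to these conditions.

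For soundness I would verify that each axiom is valid and each rule preserves validity on the relevant frames. The propositional part and \MP\ are routine, and for \RE\ it suffices to note that $A\leftrightarrow B$ valid forces $\ltrset A\rtrset_\M=\ltrset B\rtrset_\M$ in every model, so $\Box A$ and $\Box B$ always agree. Axioms \axM\ and \axC\ fall out directly from the $\Box$-clause: at a normal world, $\R$-successors forcing $A\land B$ all force $A$, and successors forcing $A$ and forcing $B$ all force $A\land B$; at impossible worlds every boxed formula is false and both implications hold vacuously. The remaining axioms are matched to their frame properties in the usual way.

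For completeness I would form the canonical model $\Mc=\langle\Wc,\Wimp,\R,\Vc\rangle$, where $\Wc$ is the set of maximal $\MCstar$-consistent sets, $\Vc$ is the standard valuation, $w\in\Wimp$ iff $\{B\mid\Box B\in w\}=\0$, and, for $w\notin\Wimp$, $w\R v$ iff $\{B\mid\Box B\in w\}\subseteq v$. The core is the Truth Lemma, proved by induction on $A$, whose only nontrivial case reads $\Box A\in w$ iff $w\Vd_r\Box A$. The left-to-right direction is immediate from the definition of $\R$. For the converse, with $w\notin\Wimp$ and $\Box A\notin w$, I would show $\{B\mid\Box B\in w\}\cup\{\neg A\}$ consistent: otherwise $B_1\land\dots\land B_n\vd A$ with all $\Box B_i\in w$, and --- using that $w\notin\Wimp$ makes the box-set nonempty, so $n\geq1$ --- closure under \axC\ yields $\Box(B_1\land\dots\land B_n)\in w$, which \axM\ (via \RM) lifts to $\Box A\in w$, a contradiction. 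Any maximal consistent extension of that set is then an $\R$-successor refuting $A$. When $w\in\Wimp$ both sides are false, since such worlds contain and force no boxed formula. Completeness follows by extending $\{\neg A\}$ to a maximal consistent set and reading off the countermodel, after checking that $\Mc$ meets the frame conditions of the extra axioms (for instance \axT\ gives $\{B\mid\Box B\in w\}\subseteq w$, i.e.\ reflexivity at normal worlds).

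I expect the modal case of the Truth Lemma to be the crux, and specifically the successor construction: it is here that \axM\ and \axC\ are both indispensable, turning a finite propositional derivation into membership of a boxed formula, and here that impossible worlds --- those with empty box-set --- must be handled so that the relational clause reproduces the neighbourhood behaviour without assuming \axN\ (the degenerate case where $A$ is a theorem is absorbed precisely because a nonempty box-set lets \RM\ apply). An alternative would avoid a fresh canonical model by transforming a standard neighbourhood countermodel from Theorem~\ref{standard completeness} via $w\R v\Leftrightarrow v\in\bigcap\N(w)$ and $\Wimp=\{w\mid\N(w)=\0\}$; the catch is that $\bigcap\N(w)\in\N(w)$ is guaranteed only for finitely generated neighbourhoods, so this route would first require restricting to finite models.
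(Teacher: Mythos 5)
The paper does not actually prove this statement: Theorem~\ref{th:completeness relational semantics} is imported wholesale from Fitting's 1983 book (the citation on p.~300), so there is no in-paper argument to compare against. Your proposal supplies the standard self-contained proof, and it is correct. The reading of ``corresponding relational models'' (empty $\Wimp$ for \axN, reflexivity at normal worlds for \axT, seriality of normal worlds for \axD, \axP, \RDnplus) matches what the paper gestures at with ``the usual frame properties of Kripke semantics''; the soundness check is routine; and the canonical model with $\Wimp=\{w\mid\{B\mid\Box B\in w\}=\emptyset\}$ and $w\R v$ iff $\{B\mid\Box B\in w\}\subseteq v$ is exactly the classical construction for regular logics. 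You correctly isolate the crux: in the right-to-left direction of the modal case of the Truth Lemma, \axC\ collapses the finitely many premisses $\Box B_i$ into a single $\Box(B_1\land\dots\land B_n)$ and \RM\ lifts the propositional entailment, and the degenerate case $\vdash A$ is indeed absorbed by padding with one element of the nonempty box-set. Your closing caveat about the alternative transformation via $\bigcap\N(w)$ is also well taken. The only place you defer work is the verification that $\Mc$ satisfies the frame conditions for \axD, \axP, and \RDnplus: each of these reduces to showing that the box-set of a normal world is consistent (respectively, that finitely many box-sets cannot jointly derive $\bot$), which is the same \axC-plus-\RM\ manoeuvre as in the Truth Lemma; it would be worth stating that explicitly, but it is not a gap in the argument, merely an unexpanded routine check.
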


A variant of the neighbourhood semantics, called \emph{\bin}
semantics, was introduced in \cite{Dalmonte:2018}.  Instead of a set
of neighbourhoods, worlds in bi-neighbourhood models are equipped with
a set of \emph{pairs} of neighbourhoods.  The intuition is that the
two components of a pair provide ``positive'' and ``negative'' support
for a modal formula.  As we shall see, a technical motivation to
consider bi-neighhbourhood models is that they are more suitable for
countermodel generation than standard ones, but they also have an
interest in their own.

\begin{definition}\label{bi-neighbourhood semantics}
  A \emph{bi-neighbourhood model} is a triple
  $\M= \langle \W, \N, \V\rangle$, where $\W$ is a non-empty set, $\V$
  is a valuation function and $\N$ is a function assigning to each
  world $w$ a subset of $\mathcal P(\W) \times\mathcal P(\W)$.  The
  forcing relation $\M,w\Vd_{bi} A$ is defined as in Definition
  \ref{def:standard semantics} except for the modality, for which the
  clause is:
  \begin{center}
    $\M,w\Vd_{bi} \Box A$ \quad iff \quad
    there is $(\alpha,\beta)\in \N(w)$ 
    s.t.~$\alpha\subseteq\llbracket A\rrbracket_{\M}\subseteq\Wcompl\beta$. 
  \end{center}
\end{definition}

The above definition introduces the general class of \bin\ models,
thus characterising the basic logic \E. For extensions we need to
consider the further conditions contained in next definition.

\begin{definition}[Bi-neighbourhood conditions for extensions]
  Given a bi-neigh-bourhood model $\M= \langle \W, \N, \V\rangle$, we
  consider the following conditions:
  \begin{center}
    \begin{tabular}{l l}
      (\cM) & If $(\alpha, \beta)\in\N(w)$, then $\beta=\emptyset$. \\

      (\cN) & There is $\alpha\subseteq\W$ such that for all $w\in\W$, $(\alpha,\emptyset)\in\N(w)$. \\
      
      (\cC) & If $(\alpha, \beta), (\gamma, \delta)\in \N(w)$, then $(\alpha\cap\gamma, \beta\cup\delta)\in \N(w)$. \\
      
      (\cT) & If $(\alpha, \beta)\in \N(w)$, then $w\in\alpha$. \\
      
      (\cP) & If $(\alpha, \beta)\in\N(w)$, then $\alpha\not=\emptyset$. \\
      
      (\cD) & If $(\alpha, \beta), (\gamma, \delta)\in \N(w)$, then $\alpha\cap\gamma\not=\emptyset$ or $\beta\cap\delta\not=\emptyset$. \\
      
      (\cRDnplus) & If $(\alpha_1, \beta_1), ..., (\alpha_n, \beta_n)\in\N(w)$,
                    then $\alpha_1 \cap ... \cap \alpha_n \not= \0$.  \\
    \end{tabular}
  \end{center}
\end{definition}

In the following, we simply write $w \Vd A$ and
$\llbracket A\rrbracket$, omitting both the model $\M$ and the
subscript $\bi$, $\st$, or $r$ whenever they are clear from the
context.

For every condition (\cX) above, we call \emph{\cX-model} any \bin{}
model statisfying (\cX).  The class of \bin{} models for a given
non-normal modal logic $\logic$ is determined by the conditions
corresponding to the axioms of $\logic$.  We denote by $\Cblogic$ the
class of \bin{} models for $\logic$.

We now prove that non-normal modal logics are sound and complete with
respect to the corresponding classes of \bin\ models.  For the systems
of the classical cube, a direct proof based on the canonical model
construction can be found in \cite{Dalmonte:2018}.  Here we give an
indirect argument that relies on the completeness of non-normal modal
logics with respect to standard models and the mutual transformation
between standard and \bin-models.

First of all, given a standard model, an equivalent \bin\ model can be
obtained simply by taking as pairs each neighbourhood and its
complement (for the classical cube this transformation is already
introduced in \cite{Dalmonte:2018}).

\begin{proposition}\label{prop:transformation from standard to bin}
  Let $\Mn=\langle \W, \Nn, \V\rangle$ be a standard model, and
  $\Mb=\langle \W, \Nb, \V\rangle$ be the \bin\ model defined by
  taking the same $\W$ and $\V$ and, for all $w\in \W$,
  \begin{center}
    $\Nb(w) = \left\lbrace
      \begin{array}{lll} \{(\alpha, \Wcompl\alpha) \mid \alpha \in\Nn(w)\} && \textup{if $\Mn$ is not supplemented.} \\ 
        \{(\alpha, \0) \mid \alpha \in\Nn(w)\} && \textup{if $\Mn$ is supplemented.} \end{array} \right.$
  \end{center}
  \noindent
  Then for every formula $A$ of $\lan$ and every $w\in \W$,
  $\Mb, w \Vd A$ if and only if $\Mn, w \Vd A$.  Moreover, for every
  \axX $\in$ \{\axM, \axC, \axN, \axT, \axP, \axD, \RDnplus\}, if
  $\Mn$ satisfies the condition corresponding to \axX{} in the
  standard semantics, then $\Mb$ is a \bin\ \cX-model.
\end{proposition}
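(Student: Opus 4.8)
The plan is to establish the two assertions of the proposition separately. For the forcing equivalence I would argue by induction on the structure of $A$. Since $\Mn$ and $\Mb$ share the same carrier $\W$ and the same valuation $\V$, the atomic case and all the propositional cases are immediate, and they also guarantee that the two truth sets coincide: writing $\llbracket A\rrbracket$ for the common set obtained from the \ih\ (that is, $\{v \mid \Mn, v\Vd A\} = \{v\mid \Mb, v\Vd A\}$), the only case requiring work is $\Box A$.

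For the modal case I would split according to the definition of $\Nb$. If $\Mn$ is not supplemented, then $\Mb, w\Vd\Box A$ means there is $\alpha\in\Nn(w)$ with $\alpha\subseteq\llbracket A\rrbracket\subseteq\Wcompl{(\Wcompl\alpha)} = \alpha$, which forces $\alpha = \llbracket A\rrbracket$; conversely, if $\llbracket A\rrbracket\in\Nn(w)$ the pair $(\llbracket A\rrbracket, \Wcompl{\llbracket A\rrbracket})$ witnesses $\Mb, w\Vd\Box A$. Either way $\Mb, w\Vd\Box A$ iff $\llbracket A\rrbracket\in\Nn(w)$ iff $\Mn, w\Vd\Box A$. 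If $\Mn$ is supplemented, then $\Mb, w\Vd\Box A$ means there is $\alpha\in\Nn(w)$ with $\alpha\subseteq\llbracket A\rrbracket\subseteq\Wcompl\0 = \W$, i.e.\ simply $\alpha\subseteq\llbracket A\rrbracket$; the left-to-right direction then uses supplementation to conclude $\llbracket A\rrbracket\in\Nn(w)$, while the converse is again witnessed by $(\llbracket A\rrbracket, \0)$.

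For the preservation of the semantic conditions I would check each property in turn. The conditions $\cM$, $\cT$, $\cP$ and $\cRDnplus$ are essentially immediate, since in both definitions the first component of every pair in $\Nb(w)$ is a member of $\Nn(w)$, and (for $\cM$) every second component equals $\0$ precisely when $\Mn$ is supplemented. For $\cN$ one takes the constant first component $\alpha = \W$, noting that $\Wcompl\W = \0$, so that $(\W, \0)\in\Nb(w)$ in both definitions whenever $\W\in\Nn(w)$. For $\cC$ the non-supplemented subcase uses De Morgan, $\Wcompl{(\alpha\cap\gamma)} = \Wcompl\alpha\cup\Wcompl\gamma$, together with closure under intersection, while the supplemented subcase is direct.

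The delicate point, and the one I expect to be the main obstacle, is $\cD$ in the supplemented subcase: there both second components are $\0$, so $\beta\cap\delta = \0$ and the disjunction in $\cD$ must be met by its first disjunct $\alpha\cap\gamma\not=\0$. To obtain this I would argue by contradiction: if $\alpha\cap\gamma = \0$ then $\gamma\subseteq\Wcompl\alpha$, whence supplementation yields $\Wcompl\alpha\in\Nn(w)$, contradicting condition (\axD) applied to $\alpha\in\Nn(w)$. The non-supplemented subcase of $\cD$ is analogous but slightly different: assuming both $\alpha\cap\gamma = \0$ and $\Wcompl\alpha\cap\Wcompl\gamma = \0$ forces $\gamma = \Wcompl\alpha$, again contradicting (\axD). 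This interplay between supplementation and the deontic condition (\axD) is the only place where the argument is not purely mechanical.
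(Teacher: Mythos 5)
Your proposal is correct and follows essentially the same route as the paper: induction on the formula with a case split on supplementation for $\Box$, followed by a condition-by-condition verification using the fact that first components of pairs in $\Nb(w)$ are exactly the members of $\Nn(w)$. You are in fact slightly more careful than the paper on condition (\cD) in the supplemented subcase — the paper dismisses all supplemented subcases as "easy simplifications," whereas you correctly observe that there one genuinely needs supplementation (not just condition (\axD)) to rule out $\alpha\cap\gamma=\0$, since (\axD) alone only forbids $\gamma$ being the exact complement of $\alpha$.
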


\begin{proof} 
  The equivalence is proved by induction on $A$.  The basic cases
  $A= p, \bot, \top$ are trivial since the evaluation $\V$ is the same
  in the two models, and the inductive cases of boolean connectives
  are straightforward by applying the induction hypothesis.  We
  consider the case $A = \Box B$.  If $\Mn$ is not supplemented we
  have: $\Mb, w \Vd \Box B$ iff
  $(\ltr B \rtr_{\bi}, \W\setminus \ltr B \rtr_{\bi}) \in \Nb(w)$ iff
  $\ltr B \rtr_{\bi} \in \Nn(w)$ iff (\ih)
  $\ltr B \rtr_{\st} \in \Nn(w)$ iff $\Mn, w \Vd \Box B$.  If $\Mn$ is
  supplemented we have: $\Mb, w \Vd \Box B$ iff there is
  $(\alpha, \0) \in \Nb(w)$ such that
  $\alpha\subseteq \ltr B \rtr_{\bi}$ iff $\alpha \in \Nn(w)$ and
  (\ih) $\alpha\subseteq \ltr B \rtr_{\st}$ iff (by supplementation)
  $\ltr B \rtr_{\st} \in \Nn(w)$ iff $\Mn, w \Vd \Box B$.
	
  Now we show that $\Mn$ satisfies the right properties.  For axiom
  \axM\ the proof is immediate by definition of $\Nb$.  For the
  following conditions we just consider the non-supplemented case,
  whereas the supplemented case is an easy simplification.
	
  (\axN) $(\W, \0)\in\Nb(w)$ because $\W \in\Nn(w)$.
	
  (\axC) If $(\alpha, \W\setminus\alpha), (\beta,\W\setminus\beta)\in\Nb(w)$,
  then $\alpha,\beta\in\Nn(w)$, that implies $\alpha \cap\beta\in\Nn(w)$.
  Thus $(\alpha\cap\beta, \W\setminus(\alpha\cap\beta)) = (\alpha\cap\beta, \W\setminus\alpha\cup\W\setminus\beta)\in\Nb(w)$.
	
  (\axT) If $(\alpha, \W\setminus\alpha)\in\Nb(w)$,
  then $\alpha\in\Nn(w)$, thus $w\in\alpha$.
  
  (\axP) If $(\alpha, \W\setminus\alpha)\in\Nb(w)$,
  then $\alpha\in\Nn(w)$, thus $\alpha\not=\0$.
  
  (\axD) If $(\alpha, \W\setminus\alpha), (\beta,\W\setminus\beta)\in\Nb(w)$,
  then $\alpha,\beta\in\Nn(w)$. Thus $\beta\not=\W\setminus\alpha$, that implies
  $\alpha\cap\beta\not=\0$ or $\W\setminus\alpha\cap\W\setminus\beta\not=\0$.
  
  (\RDnplus) If $(\alpha_1, \W\setminus\alpha_1), ..., (\alpha_n, \W\setminus\alpha_n)\in\Nb(w)$,
  then $\alpha_1, ..., \alpha_n\in\Nn(w)$, thus $\alpha_1 \cap ... \cap \alpha_n \not=\0$.
\end{proof}

For the opposite direction we propose two transformations:
a more general one, independent of the language,
and a ``finer'' one which is defined  with respect to a set of formulas.
The general transformation is new, whereas the second one is already introduced in \cite{Dalmonte:2018} for the classical cube, where the equivalence proof is also sketched.
The general transformation is as follows.

\begin{proposition}\label{prop:rough transformation}
  Let $\Mb=\langle \W, \Nb, \V\rangle$ be a bi-neighbourhood model,
  and $\Mn=\langle \W, \Nn, \V\rangle$ be the standard model defined
  by taking the same $\W$ and $\V$ and, for all $w\in \W$,
  \begin{center}
    $\Nn(w)=\{\gamma \subseteq\W \mid \textup{there is } (\alpha, \beta) \in\Nb(w)
    \textup{ such that } 
    \alpha\subseteq \gamma \subseteq \Wcompl\beta\}$.
  \end{center}
  \noindent
  Then for every formula $A$ of $\lan$ and every $w\in \W$,
  $\Mn, w \Vd A$ if and only if $\Mb, w \Vd A$.  Moreover, for every
  \axX $\in$ \{\axM, \axC, \axN, \axT, \axP, \axD, \RDnplus\}, if
  $\Mb$ is a \bin\ \cX-model, then $\Mn$ satisfies the condition
  corresponding to \axX{} in the standard semantics.
\end{proposition}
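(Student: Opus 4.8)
The plan is to mirror the proof of Proposition~\ref{prop:transformation from standard to bin}, but run it in the reverse direction: the forcing equivalence will be established by induction on the structure of $A$, and the preservation of the semantic conditions will be verified case by case. The definition of $\Nn(w)$ is tailored so that a set belongs to it exactly when it is sandwiched between the two components of some bi-neighbourhood pair, and this is precisely what makes both halves go through cleanly.

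For the equivalence, the base cases $A = p, \bot, \top$ are immediate since $\Mn$ and $\Mb$ share the valuation $\V$, and the boolean cases follow from the \ih{}. The only case of substance is $A = \Box B$. By the \ih{} the truth sets coincide, $\ltr B \rtr_{\st} = \ltr B \rtr_{\bi}$, so unfolding the definition of $\Nn(w)$ gives
$$\Mn, w \Vd \Box B \iff \ltr B \rtr_{\st} \in \Nn(w) \iff \Mb, w \Vd \Box B,$$
where the second equivalence holds because $\ltr B \rtr_{\st} \in \Nn(w)$ means exactly that there is $(\alpha, \beta) \in \Nb(w)$ with $\alpha \subseteq \ltr B \rtr_{\bi} \subseteq \Wcompl\beta$, which is the bi-neighbourhood forcing clause for $\Box B$.

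For the conditions, I assume $\Mb$ satisfies the bi-neighbourhood property and derive the standard one on $\Nn$, always unpacking $\gamma \in \Nn(w)$ into a witnessing pair $(\alpha, \beta) \in \Nb(w)$ with $\alpha \subseteq \gamma \subseteq \Wcompl\beta$ and pushing the hypothesis through this sandwich. For \axM{} the condition $\beta = \0$ makes the upper bound $\Wcompl\beta = \W$ vacuous, so supplementation is immediate; \axN{} and \axT{} are direct; \axP{} is a one-line contradiction using $\alpha \neq \0$; and \axC{} and \RDnplus{} use only the monotonicity of intersection, via $\alpha_1 \cap \dots \cap \alpha_n \subseteq \gamma_1 \cap \dots \cap \gamma_n$ together with $\gamma_1 \cap \gamma_2 \subseteq \Wcompl{(\beta_1 \cup \beta_2)}$ in the case of \axC{}.

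The main obstacle is the case \axD{}, the only one coupling the positive and negative components through complementation. To show that $\gamma \in \Nn(w)$ forces $\Wcompl\gamma \notin \Nn(w)$, I would assume both $\gamma$ and $\Wcompl\gamma$ lie in $\Nn(w)$, witnessed by pairs $(\alpha_1, \beta_1)$ and $(\alpha_2, \beta_2)$, and apply (\cD) to obtain $\alpha_1 \cap \alpha_2 \neq \0$ or $\beta_1 \cap \beta_2 \neq \0$. The crux is reading off from the two sandwiches that $\alpha_1 \subseteq \gamma$ and $\alpha_2 \subseteq \Wcompl\gamma$, while dually $\beta_2 \subseteq \gamma$ and $\beta_1 \subseteq \Wcompl\gamma$ (the latter from $\gamma \subseteq \Wcompl{\beta_1}$); in either disjunct the resulting intersection lies inside $\gamma \cap \Wcompl\gamma = \0$, a contradiction. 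I expect this to be the only genuinely new computation, the remaining cases being routine bookkeeping.
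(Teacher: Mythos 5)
Your proposal is correct and follows essentially the same route as the paper's own proof: induction on $A$ with the $\Box B$ case unfolding the definition of $\Nn(w)$, followed by the same case-by-case verification of the closure conditions, including the identical key computation for \axD{} (both $\alpha_1\cap\alpha_2$ and $\beta_1\cap\beta_2$ land inside $\gamma\cap\Wcompl\gamma=\0$, contradicting (\cD)). The only difference is presentational: the paper phrases \axD{} contrapositively while you derive a contradiction from the disjunction, but these are the same argument.
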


\begin{proof}
  The equivalence is proved by induction on $A$.  As before, we only
  consider the inductive step where $A\equiv \Box B$.  We have:
  $\Mn, w \Vd \Box B$ iff $\llbracket B\rrbracket_{\st} \in \Nn(w)$
  iff (i.h.) $\llbracket B\rrbracket_{\bi} \in \Nn(w)$ iff there is
  $(\alpha, \beta) \in\Nb(w)$ such that
  $\alpha\subseteq \llbracket
  B\rrbracket_{\bi}\subseteq\W\setminus\beta$ iff $\Mb, w \Vd \Box B$.
	
  Now we prove that $\Mn$ satisfies the right properties.
	
  (\axM) Let $\Mb$ be a \cM-model, and assume $\gamma\in\Nn(w)$ and
  $\gamma\subseteq\delta$.  Then there is $(\alpha, \0)\in\Nb(w)$ such
  that $\alpha\subseteq\gamma\subseteq\W\setminus\0$.  Thus
  $\alpha\subseteq\delta\subseteq\W\setminus\0$, which implies
  $\delta\in\Nn(w)$.
	
  (\axN) Let $\Mb$ be a \cN-model.  Then there is
  $(\alpha, \0)\in\Nb(w)$.  Since
  $\alpha\subseteq\W\subseteq\W\setminus\0$, by definition
  $\W\in\Nn(w)$.
	
  (\axC) Let $\Mb$ be a \cC-model, and assume
  $\gamma, \delta\in\Nn(w)$.  Then there are
  $(\alpha_1, \beta_1), (\alpha_2, \beta_2)\in\Nb(w)$ such that
  $\alpha_1\subseteq\gamma\subseteq\W\setminus\beta_1$,
  $\alpha_2\subseteq\delta\subseteq\W\setminus\beta_2$.  By condition
  (\cC), $(\alpha_1\cap\alpha_2, \beta_1\cup \beta_2)\in\Nb(w)$, where
  $\alpha_1\cap\alpha_2 \subseteq\gamma\cap\delta$, and
  $\gamma\cap\delta \subseteq\W\setminus\beta_1\cap\W\setminus\beta_2
  = \W\setminus\beta_1\cup\beta_2$.  Then
  $\gamma\cap \delta\in\Nn(w)$.
	
  (\axT) Let $\Mb$ be a \cT-model, and assume $\gamma\in\Nn(w)$.  Then
  there is $(\alpha, \beta)\in\Nb(w)$ such that
  $\alpha\subseteq\gamma\subseteq\W\setminus\beta$.  By condition
  (\cT), $w\in\alpha$, then $w\in\gamma$.
	
  (\axP) Let $\Mb$ be a \cP-model, and assume towards contradiction
  that $\0\in\Nn(w)$.  Then there is $(\alpha, \beta)\in\Nb(w)$ such
  that $\alpha\subseteq\0\subseteq\W\setminus\beta$.  Thus
  $\alpha=\0$, against condition (\cP).
	
  (\axD) Let $\Mb$ be a \cD-model, and assume towards contradiction
  that $\gamma, \W\setminus\gamma\in\Nn(w)$.  Then there are
  $(\alpha_1, \beta_1), (\alpha_2, \beta_2)\in\Nb(w)$ such that
  $\alpha_1\subseteq\gamma\subseteq\W\setminus\beta_1$,
  $\alpha_2\subseteq\W\setminus\gamma\subseteq\W\setminus\beta_2$.
  Then $\alpha_1\cap\alpha_2=\0$ and $\beta_1\cap\beta_2=\0$, against
  condition (\cD).
	
  (\RDnplus) Let $\Mb$ be a \cRDnplus-model, and assume
  $\gamma_1, ..., \gamma_n\in\Nn(w)$.  Then there are
  $(\alpha_1, \beta_1), ..., (\alpha_n, \beta_n)\in\Nb(w)$ such that
  $\alpha_i\subseteq\gamma\subseteq\W\setminus\beta_i$ for all
  $1\leq i \leq n$.  By condition (\cRDnplus),
  $\alpha_1 \cap ... \cap \alpha_n \not=\0$.  Then
  $\gamma_1 \cap ... \cap \gamma_n \not=\0$.
\end{proof}

For the non-monotonic case, we provide another transformation, defined
with respect to any set of formulas $\s$ closed under subformulas.
This transformation in general produces standard models with a
\emph{smaller} neighbourhood function.

\begin{proposition}\label{prop:finer transformation}
  Let $\Mb=\langle \W, \Nb, \V\rangle$ be a bi-neighbourhood model and
  $\s$ be a set of formulas of $\lan$ closed under subformulas.  We
  define the standard model $\Mn=\langle \W, \Nn, \V\rangle$ with the
  same $\W$ and $\V$ and by taking, for all $w\in \W$,
  \begin{center}
    $\Nn(w)=\{\ltrset A\rtrset_{\bi} \mid \Box A \in \s \textup{ and } \Mb, w \Vd \Box A\}$.
  \end{center}
  Then for every formula $A\in \s$ and every world $w\in \W$,
  $\Mn, w \Vd A$ if and only if $\Mb, w \Vd A$.  Moreover, (\cN) if
  $\Box\top\in \s$ and $\Mb$ is a \cN-model, then $\Mn$ contains the
  unit; (\cC) if $\Box A, \Box B \in \s$ implies
  $\Box(A\land B) \in \s$ and $\Mb$ is a \cC-model, then $\Mn$ is
  closed under intersection; (\cT/\cP/\cD/\cRDnplus) If $\Mb$ is a
  \cT/\cP/\cD/\cRDnplus-model, then $\Mn$ satisfies the corresponding
  condition in the standard semantics.
\end{proposition}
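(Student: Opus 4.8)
The plan is to mirror the two-part structure of the proof of Proposition~\ref{prop:rough transformation}: first I would prove the forcing equivalence by induction on $A\in\s$, and then I would check each closure condition on $\Nn$ in turn. Because $\W$ and $\V$ are shared by the two models and $\s$ is closed under subformulas, the atomic cases $A=p,\bot,\top$ and the Boolean inductive steps are routine and carry over unchanged; crucially, the \ih\ applied to the immediate subformulas of $A$ gives $\ltrset B\rtrset_{\st}=\ltrset B\rtrset_{\bi}$ for each such $B$, which is exactly what the modal step relies on.

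The decisive step is $A=\Box B$, where the ``finer'' definition of $\Nn$ makes the argument differ from that of Proposition~\ref{prop:rough transformation}. Here $\Box B\in\s$ and, by subformula closure, $B\in\s$, so $\ltrset B\rtrset_{\st}=\ltrset B\rtrset_{\bi}$ by \ih. The direction from $\Mb$ to $\Mn$ is immediate: if $\Mb,w\Vd\Box B$ then, since $\Box B\in\s$, the set $\ltrset B\rtrset_{\bi}$ is by definition an element of $\Nn(w)$, hence $\ltrset B\rtrset_{\st}\in\Nn(w)$ and $\Mn,w\Vd\Box B$. The converse is the point that needs care: from $\Mn,w\Vd\Box B$ we only obtain $\ltrset B\rtrset_{\bi}\in\Nn(w)$, i.e.\ that there is some $C$ with $\Box C\in\s$, $\Mb,w\Vd\Box C$ and $\ltrset C\rtrset_{\bi}=\ltrset B\rtrset_{\bi}$ --- not that $C=B$. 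I would close the gap by exploiting that bi-neighbourhood forcing of a box depends only on the truth set of its argument: $\Mb,w\Vd\Box C$ provides a pair $(\alpha,\beta)\in\Nb(w)$ with $\alpha\subseteq\ltrset C\rtrset_{\bi}\subseteq\Wcompl\beta$, and substituting the equal set $\ltrset B\rtrset_{\bi}$ yields $\alpha\subseteq\ltrset B\rtrset_{\bi}\subseteq\Wcompl\beta$, hence $\Mb,w\Vd\Box B$.

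For the closure conditions I would reuse the computations from Proposition~\ref{prop:rough transformation} almost verbatim, after unfolding membership ``$\gamma\in\Nn(w)$'' as ``$\gamma=\ltrset A\rtrset_{\bi}$ for some $\Box A\in\s$ with $\Mb,w\Vd\Box A$''. For (\cN) the witness is $C=\top$, which is exactly why the hypothesis $\Box\top\in\s$ is needed. For (\cC), given witnesses $\gamma=\ltrset A\rtrset_{\bi}$ and $\delta=\ltrset B\rtrset_{\bi}$, the hypothesis that $\Box A,\Box B\in\s$ implies $\Box(A\land B)\in\s$ lets me take $\Box(A\land B)$ as the new witness: $\ltrset A\land B\rtrset_{\bi}=\gamma\cap\delta$, and combining the two witnessing pairs through condition (\cC) shows $\Mb,w\Vd\Box(A\land B)$, so $\gamma\cap\delta\in\Nn(w)$. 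The conditions (\cT), (\cP), (\cD) and (\cRDnplus) require no side-condition on $\s$: in each case I would extract the witnessing pairs $(\alpha_i,\beta_i)\in\Nb(w)$ with $\alpha_i\subseteq\gamma_i\subseteq\Wcompl{\beta_i}$ and propagate the corresponding bi-neighbourhood condition along these inclusions to the $\gamma_i$, exactly as in the earlier proof.

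The only genuine obstacle is the converse half of the modal case; everything else is a notational variant of Proposition~\ref{prop:rough transformation}. What makes it go through is that $\Nb$ is sensitive only to truth sets and never to formulas, so recording in $\Nn(w)$ just the truth sets of the boxed formulas of $\s$ true at $w$ loses nothing relevant to the forcing of those very formulas; and the side-conditions on $\s$ for (\cN) and (\cC) are precisely what guarantees that the truth set demanded by the target closure property is again of the admitted form $\ltrset D\rtrset_{\bi}$ with $\Box D\in\s$.
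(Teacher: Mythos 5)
Your proposal is correct and follows essentially the same route as the paper: induction on $A\in\s$ for the forcing equivalence, with the witness $\Box\top$ for (\cP{}N) and the witness $\Box(A\land B)$ for (\cC), and extraction of bi-neighbourhood pairs for the remaining conditions. You are in fact more explicit than the paper on the one delicate point, namely that $\Mb,w\Vd\Box C$ together with $\ltrset C\rtrset_{\bi}=\ltrset B\rtrset_{\bi}$ yields $\Mb,w\Vd\Box B$ because bi-neighbourhood forcing of a box depends only on the truth set of its argument; the paper's proof uses this silently in its chain of equivalences (and, for (\cT)--(\cRDnplus), appeals to soundness of the corresponding axioms where you unfold the pairs directly, which is an equivalent argument).
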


\begin{proof}
  The equivalence is proved by induction on $A$.  The basic cases are
  immediate.  If $A\equiv B\circ C$, where
  $\circ\in\{\land, \lor, \to\}$, the claims holds by applying the
  inductive hypothesis since $B,C\in \s$ because $\s$ is closed under
  subformulas.  If $A\equiv \Box B$, then $B\in \s$ and, by \ih,
  $\ltrset B\rtrset_{\st}=\ltrset B\rtrset_{\bi}$.  Thus
  $\Mn, w \Vd\Box B$ iff $\ltrset B\rtrset_{\st}\in \Nn(w)$ iff
  $\ltrset B\rtrset_{\bi}\in \Nn(w)$ iff there is $\Box C \in \s$ such
  that $\ltrset C\rtrset_{\bi} = \ltrset B\rtrset_{\bi}$ and
  $\Mn, w \Vd\Box C$ iff $\Mn, w \Vd\Box B$.
	
  (\cN) Let $\Mb$ be a \cN-model.  Then $\Mb,w \Vd\Box\top$. Since
  $\Box\top \in \s$, by definition
  $\ltrset\top\rtrset_{\bi} = \W \in \Nn(w)$.
	
  (\cC) Assume $\alpha, \beta \in \Nn(w)$.  Then there are
  $\Box A, \Box B\in \s$ such that $\alpha=\ltrset A\rtrset_{\bi}$,
  $\beta=\ltrset B\rtrset_{\bi}$, and $\Mb,w \Vd \Box A$,
  $\Mb,w \Vd \Box B$, that is $\Mb,w \Vd \Box A\land\Box B$.  Since
  $\Mb$ is a \cC-model we have $\Mb,w \Vd \Box(A\land B)$.  By the
  properties of $\s$, $\Box(A\land B)\in\s$.  Then by definition
  $\ltrset A\land B\rtrset_{\bi}\in \Nn(w)$, where
  $\ltrset A\land B\rtrset_{\bi}= \ltrset A\rtrset_{\bi} \cap \ltrset
  B\rtrset_{\bi} = \alpha\cap\beta$.
	
  (\cT) Assume $\alpha\in\Nn(w)$.  Then
  $\alpha=\llbracket A\rrbracket_{\bi}$ for some $A$ such that
  $\Box A \in \s$ and $\Mb,w \Vd \Box A$.  Since $\Mb$ is a \cT-model,
  $\Mb,w \Vd A$, that is $w\in\llbracket A\rrbracket_{\bi}=\alpha$.
	
  (\cP) Assume by contradiction that $\0\in\Nn(w)$.  Then there is
  $\Box A \in \s$ such that $\Mb,w \Vd \Box A$ and
  $\llbracket A\rrbracket_{\bi}=\0=\llbracket\bot\rrbracket_{\bi}$.
  Thus $\Mb,w \Vd \Box \bot$, against the soundness of axiom \axP\
  with respect to \cP-models.
	
  (\cD) Assume $\alpha, \W\setminus\alpha \in \Nn(w)$.  Then there are
  $\Box A, \Box B\in \s$ such that $\alpha=\ltrset A\rtrset_{\bi}$,
  $\W\setminus\alpha=\ltrset B\rtrset_{\bi}$, and $\Mb,w \Vd \Box A$,
  $\Mb,w \Vd \Box B$.  Then
  $\ltrset A\rtrset_{\bi} = \W\setminus\ltrset B\rtrset_{\bi} =
  \ltrset \neg B\rtrset_{\bi}$, that is $\Mb,w \Vd \Box \neg B$,
  against the soundness of axiom \axD\ with respect to \cD-models.
	
  (\cRDnplus) Assume $\alpha_1, ..., \alpha_n \in \Nn(w)$.  Then there
  are $\Box A_1, ... , \Box A_n\in \s$ such that
  $\alpha_i=\ltrset A_i\rtrset_{\bi}$ and $\Mb,w \Vd \Box A_i$ for
  every $1 \leq i \leq n$, that is
  $\Mb,w \Vd \Box A_1 \land ... \land \Box A_n$.  Then
  $\Mb \not\models \neg(\Box A_1 \land ... \land \Box A_n)$, and since
  $\Mb$ is a \cRDnplus-model,
  $\Mb \not\models \neg(A_1 \land ... \land A_n)$, that is
  $\ltrset A_1\rtrset_{\bi} \cap ... \cap \ltrset A_n\rtrset_{\bi} =
  \alpha_1\cap... \cap \alpha_n \not=\0$.
\end{proof}

For the monotonic case, an analogous result could be obtained by
considering the \emph{supplementation} of the neighbourhood function
in the above proposition.  That is we can consider
\begin{center}
  $\Nn'(w)=\{\alpha\subseteq \W \mid \textup{there is } \Box A \in \s
  \textup{ such that }  \Mb, w \Vd \Box A \textup{ and } \ltrset A\rtrset_{\bi} \subseteq \alpha\}$.
\end{center}
\noindent
However, in this case the advantage in the size of the neighbourhood
function with respect to the transformation in
Proposition~\ref{prop:rough transformation} is not as relevant as for
the non-monotonic case.

From Propositions \ref{prop:transformation from standard to bin} and
\ref{prop:rough transformation}, standard and \bin\ semantics are
equivalent, in the sense that a formula is valid in a certain class of
standard models if and only if it is valid in the corresponding class
of \bin\ models.  Since non-normal modal logics are characterised by
standard models, we obtain the following result.

\begin{theorem}[Characterisation]\label{completeness non-monotonic}
  Every non-normal modal logic $\logic$ is sound and complete with
  respect to the corresponding class of \bin{} models, that is: for
  every $A \in \lan$, $\models_{\Cblogic} A$ if and only if
  $\vd_{\logic} A$.
\end{theorem}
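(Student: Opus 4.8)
The plan is to reduce the statement to the already-established standard completeness (Theorem~\ref{standard completeness}) via the two model transformations proved above. Since Theorem~\ref{standard completeness} gives $\vd_{\logic} A$ if and only if $\models_{\Cstlogic} A$, it suffices to prove the purely semantic equivalence $\models_{\Cblogic} A$ if and only if $\models_{\Cstlogic} A$, and then chain the two biconditionals together.

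First I would treat the direction from standard validity to \bin\ validity. Assume $\models_{\Cstlogic} A$ and let $\Mb$ be an arbitrary \bin\ $\logic$-model, that is, a \bin\ model satisfying the conditions corresponding to every axiom of $\logic$. Applying Proposition~\ref{prop:rough transformation} to $\Mb$ yields a standard model $\Mn$ with the same worlds and valuation; by the ``moreover'' clause of that proposition, since $\Mb$ satisfies each relevant bi-neighbourhood condition, $\Mn$ satisfies the corresponding standard condition, so $\Mn\in\Cstlogic$. The forcing-equivalence part of the proposition gives $\Mn, w \Vd A$ if and only if $\Mb, w \Vd A$ for every $w$; combined with $\Mn\models A$ this yields $\Mb\models A$. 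As $\Mb$ was arbitrary, $\models_{\Cblogic} A$.

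The converse direction is symmetric, using Proposition~\ref{prop:transformation from standard to bin} in place of Proposition~\ref{prop:rough transformation}: starting from an arbitrary standard $\logic$-model $\Mn$, one builds the \bin\ model $\Mb$ (taking each neighbourhood together with its complement, or with empty second component in the supplemented case). Its ``moreover'' clause guarantees $\Mb\in\Cblogic$, and its equivalence clause transfers validity of $A$ from $\Mb$ back to $\Mn$; hence $\models_{\Cstlogic} A$. Together the two directions give $\models_{\Cblogic} A$ iff $\models_{\Cstlogic} A$ iff $\vd_{\logic} A$.

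I expect no genuinely hard step here, as the real work resides in the two transformation propositions, which are already proved. The only point requiring care is \emph{modularity}: one must check that a transformation preserves the whole set of conditions defining $\logic$ at once, not merely one axiom in isolation. This is immediate because both propositions establish preservation of each condition among \axM, \axC, \axN, \axT, \axP, \axD, \RDnplus\ independently, so intersecting over all axioms of $\logic$ lands the transformed model in exactly the class for $\logic$. A minor subtlety worth flagging is that the rough transformation of Proposition~\ref{prop:rough transformation} is language-independent and preserves all conditions unconditionally, which is precisely why it---rather than the finer transformation of Proposition~\ref{prop:finer transformation}, whose \cC- and \cN-clauses carry side conditions on $\s$---is the appropriate tool for this completeness argument.
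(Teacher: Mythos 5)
Your proposal is correct and follows essentially the same route as the paper, which derives the theorem in one step from Theorem~\ref{standard completeness} together with Propositions~\ref{prop:transformation from standard to bin} and~\ref{prop:rough transformation}. Your additional remarks on modularity and on why the rough transformation (rather than the finer one of Proposition~\ref{prop:finer transformation}) is the appropriate tool are accurate and merely make explicit what the paper leaves implicit.
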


We conclude this section with a few observations. First of all, the
bi-neighbourhood/neighbourhood semantics is (more) significant for the
non-mono-tonic systems, that is lacking the axiom \axM: for the
monotonic ones, the truth condition for $\Box$ in \bin-models boils
down to the well-known $\exists\forall$-definition found in the
literature (see \eg~\cite{Pacuit:2017}).  Moreover, by the
transformation presented in Proposition~\ref{prop:rough
  transformation}, elements of \bin\ pairs can be seen as \emph{lower}
and \emph{upper} bounds of neighbourhoods of standard models.

These transformations have also an interest in proof-search: as we
shall see in the following, given a failed proof in our calculi, it is
possible to directly extract a countermodel in the \bin\ semantics,
that can be transformed into an equivalent standard countermodel as a
later step.  Furthermore, whereas in a standard model each
(non-equivalent) $\Box$-ed formula must be ``witnessed'' by a
different neighbourhood, the same \bin\ pair can ``witness'' several
boxed formulas.  For this reason, \bin\ countermodels extracted from
failed proofs have typically a smaller neighbourhood function than the
corresponding standard models.



\section{Hypersequent calculi}
\label{sec:def hyp calc}

In order to define our calculi, we extend the structure of sequents in
two ways.  Firstly, sequents can contain so-called \emph{blocks of
  formulas} in addition to formulas of $\lan$.  Secondly, we use {\em
  hypersequents} rather than simple sequents.

\begin{definition}
  A \emph{block} \index{block} is a structure $\str\Sigma$, where
  $\Sigma$ is a finite multiset of formulas of $\lan$.  A
  \emph{sequent} is a pair $\G\seq \D$, where $\G$ is a finite
  multiset of formulas and blocks, and $\D$ is a finite multiset of
  formulas.  A \emph{hypersequent} is a finite multiset of sequents,
  and is written
  \begin{center}
    $\G_1 \seq \D_1 \hyp \ldots \hyp \G_n \seq \D_n$.
  \end{center}
  Given a hypersequent
  $\hH = \G_1 \seq \D_1 \hyp \ldots \hyp \G_n \seq \D_n$, we call
  \emph{components} of $\hH$ the sequents
  $\G_i \seq \D_i,\;1\leq i\leq n$.
\end{definition}

Observe that blocks can occur only in the antecedent of sequents and
not in their succedent.  Both blocks and sequents, but not
hypersequents, can be interpreted as formulas of $\lan$.  The
\emph{formula interpretation} of sequents is as follows:
\begin{center}
  $\fint(A_1, \ldots, A_n, \str\Sigma_1, \ldots, \str\Sigma_m \seq B_1, \ldots, B_k) =
  A_1 \land \ldots \land A_n \land \Box\AND \Sigma_1 \land \ldots \land \Box\AND\Sigma_m  \to
  B_1 \lor \ldots \lor B_k$.
\end{center}
By contrast, there is no formula interpretation for hypersequents in
$\lan$.  The reason is that non-normal modalities are not \emph{strong
  enough} to express the structural connective ``$\hyp$'' of
hypersequents: Intuitively, every component of a hypersequent
corresponds to a world in a model, and all worlds of a model are
potentially relevant for calculating the truth set of a formula, so we
would need a global modality to express the hypersequent structure.

The semantic interpretation of sequents and hypersequents is as
follows.

\begin{definition}
  We say that a sequent $S$ is \emph{valid} in a possible-worlds model
  $\M$ (written $\M\models S$) if for every world $w$ of $\M$,
  $\M, w\Vd \fint(S)$.  We say that a hypersequent $H$ is \emph{valid}
  in $\M$ if for some component $S$ of $\hH$, $\M \models S$.
  Finally, we say that an inference rule $\varR$ is \emph{sound} with
  respect to a model $\M$ (resp.~a class of models $\C$) if in case
  all premisses of $\varR$ are valid in $\M$ (resp.~$\C$), then the
  conclusion of $\varR$ is also valid in $\M$ (resp.~$\C$).
\end{definition}

\begin{figure}
\noindent
\fbox{\begin{minipage}{\textwidth}
\begin{small}
\vspace{0.1cm}

\noindent
\textbf{Propositional rules} 

\vspace{0.3cm}
\noindent
\ax{}
\llab{\init}
\uinf{$\hG \hyp \G, p \seq p, \D$}
\disp
\hfill
\ax{}
\llab{\lbot}
\uinf{$\hG \hyp \G, \bot \seq \D$}
\disp 
\hfill
\ax{}
\llab{\rtop}
\uinf{$\hG \hyp \G \seq \top, \D$}
\disp

\vspace{0.4cm}
\noindent
\resizebox{\textwidth}{!}{
\ax{$\hG \hyp \G, A \to B \seq A, \D$}
\ax{$\hG \hyp \G, A \to B, B \seq \D$}
\llab{\lto}
\binf{$\hG \hyp \G,  A \to B \seq \D$}
\disp
\hfill
\ax{$\hG \hyp \G, A \seq B, A\to B, \D$}
\llab{\rto}
\uinf{$\hG \hyp \G \seq  A \to B, \D$}
\disp}

\vspace{0.4cm}
\noindent
\ax{$\hG \hyp \G, A \land B, A, B \seq \D$}
\llab{\lland}
\uinf{$\hG \hyp \G, A \land B \seq \D$}
\disp
\hfill
\ax{$\hG \hyp \G \seq A, A \land B, \D$}
\ax{$\hG \hyp \G \seq B, A \land B, \D$}
\llab{\rland}
\binf{$\hG \hyp \G \seq A \land B, \D$}
\disp

\vspace{0.4cm}
\noindent
\ax{$\hG \hyp \G, A \lor B, A \seq \D$}
\ax{$\hG \hyp \G, A \lor B, B \seq \D$}
\llab{\llor}
\binf{$\hG \hyp \G, A \lor B \seq \D$}
\disp
\hfill
\ax{$\hG \hyp \G \seq A, B, A \lor B, \D$}
\llab{\rlor}
\uinf{$\hG \hyp \G \seq A \lor B, \D$}
\disp

\vspace{0.5cm}
\noindent
\textbf{Modal rules for the classical cube}

\vspace{0.4cm}
\noindent
\ax{$\hG \hyp \G, \Box A, \str A \seq \D$}
\llab{\lbox}
\uinf{$\hG \hyp \G, \Box A \seq \D$}
\disp
\hfill
\ax{$\hG \hyp \G, \langle\Sigma\rangle \seq  \Box B, \D \hyp \Sigma \seq B$}
\llab{\rboxm} 
\uinf{$\hG \hyp \G, \langle\Sigma\rangle \seq \Box B, \D$}
\disp

\vspace{0.4cm}
\noindent
\ax{$\hG \hyp \G, \langle\Sigma\rangle \seq \Box B, \D \hyp \Sigma \seq B$}
\ax{$\{  \hG \hyp \G,  \langle\Sigma\rangle \seq \Box B, \D \hyp B \seq A  \}_{A\in\Sigma}$}
\llab{\rbox} 
\binf{$\hG \hyp \G,  \langle\Sigma\rangle \seq \Box B, \D$}
\disp 

\vspace{0.4cm}
\noindent
\ax{$\hG \hyp \G, \langle \top\rangle \seq \D$}
\llab{\rulen}
\uinf{$\hG \hyp \G \seq \D$}
\disp
\hfill
\ax{$\hG \hyp \G, \langle\Sigma\rangle, \langle\Pi\rangle,  \langle\Sigma,\Pi\rangle \seq \D$}
\llab{\rulec} 
\uinf{$\hG \hyp \G, \langle\Sigma\rangle, \langle\Pi\rangle \seq \D$}
\disp

\vspace{0.5cm}
\noindent
\textbf{Modal rules for extensions} 

\vspace{0.4cm}
\noindent
\ax{$\hG \hyp \G, \str\Sigma, \Sigma \seq \D$}
\llab{\rulet}
\uinf{$\hG \hyp \G, \str\Sigma \seq \D$}
\disp
\hfill
\ax{$\hG \hyp \G, \str\Sigma \seq \D \hyp \Sigma \seq$}
\llab{\rulep}
\uinf{$\hG \hyp \G, \str\Sigma \seq \D$}
\disp

\vspace{0.4cm}
\noindent
\ax{$\hG \hyp \G, \str{\Sigma_1}, \ldots, \str{\Sigma_n} \seq \D \hyp \Sigma_1, \ldots, \Sigma_n \seq$}
\llab{\rulednplus}
\uinf{$\hG \hyp \G, \str{\Sigma_1}, \ldots, \str{\Sigma_n} \seq \D$}
\disp 

\vspace{0.4cm}
\noindent
\ax{$\hG \hyp \G, \str\Sigma, \str\Pi \seq \D \hyp \Sigma, \Pi \seq$}
\ax{$\{ \hG \hyp \G,  \str\Sigma, \str\Pi \seq \D \hyp \ \seq A, B\}_{A\in\Sigma, B\in\Pi}$}
\llab{\ruled}
\binf{$\hG \hyp \G, \str\Sigma, \str\Pi \seq \D$}
\disp 

\vspace{0.4cm}
\noindent
\ax{$\hG \hyp \G, \str\Sigma \seq \D \hyp \Sigma \seq$}
\ax{$\{ \hG \hyp \G,  \str\Sigma \seq \D \hyp \ \seq A\}_{A\in\Sigma}$}
\llab{\ruledaux}
\binf{$\hG \hyp \G, \str\Sigma \seq \D$}
\disp 

\end{small}
\end{minipage}}
\caption{\label{fig:hypersequent rules} Rules of the hypersequent calculi $\HEstar$.}  
\end{figure}

For every logic $\Estar$, the corresponding hypersequent calculus
$\HEstar$ is defined by a subset of the rules in
Figure~\ref{fig:hypersequent rules}, as summarised in
Table~\ref{tab:hyp calculi}.

\begin{table}[t]
\centering
\begin{tabular}{l l l}
$\hypcalc\E$ := \{propositional rules, \lbox, \rbox\}.
&
$\hypcalc\EM$ := \{propositional rules, \lbox, \rboxm\}.
\\

$\hypcalc\ENstar$ := $\hypcalc\Estar$ $\cup$ \{\rulen\}.
&
$\hypcalc\EMNstar$ := $\hypcalc\EMstar$ $\cup$ \{\rulen\}.
\\

$\hypcalc\ECstar$ := $\hypcalc\Estar$ $\cup$ \{\rulec\}.
&
$\hypcalc\EMCstar$ := $\hypcalc\EMstar$ $\cup$ \{\rulec\}.
\\

$\hypcalc\ETstar$ := $\hypcalc\Estar$ $\cup$ \{\rulet\}.
&
$\hypcalc\EMTstar$  := $\hypcalc\EMstar$ $\cup$ \{\rulet\}.
\\

$\hypcalc\EPstar$ := $\hypcalc\Estar$ $\cup$ \{\rulep\}.
&
$\hypcalc\EMPstar$ := $\hypcalc\EMstar$ $\cup$ \{\rulep\}.
\\

$\hypcalc\EDstar$ := $\hypcalc\Estar$ $\cup$ \{\ruledaux, \ruled\}.
&
$\hypcalc\EMDstar$  := $\hypcalc\EMstar$ $\cup$ 
\{\ruledoneplus, \ruledtwoplus\}.
\\

$\hypcalc\EDnplusstar$ := $\hypcalc\Estar$ $\cup$
\{\rulediplus $\mid$ $1 \leq i \leq n$\}.
&
$\hypcalc\EMDnplusstar$  := $\hypcalc\EMstar$ $\cup$ 
\{\rulediplus $\mid$ $1 \leq i \leq n$\}.
\\
\end{tabular}
\caption{\label{tab:hyp calculi} Hypersequent calculi $\HEstar$.}
\end{table}

The rules are given in their cumulative, or \emph{kleene'd}, versions,
\ie\ the principal formulas or blocks are copied to the premiss(es).
The propositional rules are just the hypersequent versions of kleene'd
rules of sequent calculi.

As mentioned in the introduction, the hypersequent structure \emph{is
  not needed } to obtain a sound and complete calculus for the logics
under investigation. Moreover, it can be checked that whenever a
hypersequent $\hH = \G_1 \seq \D_1 \hyp \ldots \hyp \G_n \seq \D_n$ is
derivable, then there is some component $\G_i \seq \D_i$ which is
derivable.

The choice of both the hypersequent structure and also of cumulative
rules is motivated by the possibility of \emph{directly} obtaining
countermodels of non-valid formulas.  In particular, the hypersequent
structure allows us to make all rules invertible.  In this respect,
observe that backward applications of rules \rbox, \rboxm, \rulep,
\ruledaux, \ruled, and \rulednplus\ create new components, but the
principal component in the conclusion is kept in the premiss in order
keep the possibility of potential alternative rule applications.

Similarly to propositional connectives, boxed formulas are handled by
separate left and right rules.  Observe that rule \rbox\ has multiple
premisses, but the number of the premisses is fixed by the cardinality
of the principal block $\str\Sigma$.  The rule \rboxm\ is a right rule
for $\Box$ which replaces \rbox\ in the definition of monotonic
calculi.  Apart from the distinction between monotonic and
non-monotonic calculi, the calculi are modular; in particular,
extensions of $\HE$ and $\hypcalc\EM$ do not require to modify the
basic rules for $\Box$, being defined by simply adding the rules
corresponding to the additional axioms.

Every axiom has a corresponding rule, with the only exceptions of
axiom \axD\ and rules \RDnplus: axiom \axD\ needs both \ruledaux\ and
\ruled\ in the non-monotonic case, whereas it needs \ruledoneplus\ and
\ruledtwoplus\ in the monotonic case.  Moreover, the rules \RDnplus\
need \ruledmplus\ for every $1 \leq m \leq n$.  These requirements
makes contraction admissible (see
Proposition~\ref{prop:adm-struct-rules} and
Section~\ref{sec:struct-prop}), and we could forego them by instead
adopting explicit contraction rules.  Finally, as we shall see in the
countermodel extraction (see Section~\ref{sec:countermodels hyp}), the
rule \ruledaux\ is the syntactic counterpart of the property
$(\0,\0)\notin\N(w)$, which is satified by every \bin\ \cD-model.

Blocks have a central role in all modal rules.  Modal rules
essentially state how to handle blocks.  Notice that the only rule
which expands blocks is \rulec, thus in absence of this rule the
blocks occurring in a proof for a single formula contain only one
formula.  The possibility of collecting formulas by means of blocks
allows us to avoid rules with $n$ principal boxed formulas, as are
common in standard sequent calculi (compare~\cite{Lellmann:2019}).  As
we shall see, blocks also allow for an easy computation of the \bin\
function for the definition of countermodels.

\begin{figure}
\begin{small}

\noindent
\begin{tabular}{l c}
\vspace{1.0cm}
(\RE) &
\ax{$A \seq B$}
\llab{\exwk}
\uinf{$\Box A, \str A \seq \Box B \hyp A \seq B$}
\ax{$B \seq A$}
\rlab{\exwk}
\uinf{$\Box A, \str A \seq \Box B \hyp B \seq A$}
\rlab{\rbox}
\binf{$\Box A, \str A \seq \Box B$}
\rlab{\lbox}
\uinf{$\Box A \seq \Box B$}
\disp \\

\vspace{1.0cm}
(\axM) &
\ax{$\Box (A\land B), \str{A \land B} \seq \Box A \hyp A\land B, A, B \seq A$}
\rlab{\lland}
\uinf{$\Box (A\land B), \str{A \land B} \seq \Box A \hyp A\land B \seq A$}
\rlab{\rboxm}
\uinf{$\Box (A\land B), \str{A \land B} \seq \Box A$}
\rlab{\lbox}
\uinf{$\Box (A\land B) \seq \Box A$}
\disp \\

\vspace{1.0cm}
(\axN) &
\ax{$\str\top \seq \Box \top \hyp \top\seq\top$}
\ax{$\str\top \seq \Box \top \hyp \top\seq\top$}
\rlab{\rbox}
\binf{$\str\top \seq \Box \top$}
\rlab{\rulen}
\uinf{$\seq \Box \top$}
\disp \\

(\axC) &
\ax{$\ldots,  \str{A,B} \seq \Box (A \land B) \hyp A, B \seq A\land  B$}
\ax{$\ldots \hyp A\land B \seq A$}
\ax{$\ldots \hyp A \land B \seq B$}
\rlab{\rbox}
\TrinaryInfC{$\Box A \land \Box B, \Box A, \Box B, \str A, \str B, \str{A,B} \seq \Box (A \land B)$}
\rlab{\rulec}
\uinf{$\Box A \land \Box B, \Box A, \Box B, \str A, \str B \seq \Box (A\land B)$}
\rlab{\lbox}
\uinf{$\Box A \land \Box B, \Box A, \Box B, \str A \seq \Box (A\land B)$}
\rlab{\lbox}
\uinf{$\Box A \land \Box B, \Box A, \Box B \seq \Box (A\land B)$}
\rlab{\lland}
\uinf{$\Box A \land \Box B \seq \Box (A\land B)$}
\disp
\end{tabular}

\vspace{1.0cm}

\begin{tabular}{l}
(\axT)\qquad
\ax{$\Box A, \str A, A \seq A$}
\rlab{\rulet}
\uinf{$\Box A, \str A \seq A$}
\rlab{\lbox}
\uinf{$\Box A \seq A$}
\disp
\qquad\qquad\qquad\qquad
(\axP)\qquad
\ax{$\Box \bot, \str\bot \seq \ \hyp \bot \seq $}
\rlab{\rulep}
\uinf{$\Box \bot, \str\bot \seq$}
\rlab{\lbox}
\uinf{$\Box \bot \seq$}
\disp
\end{tabular}

\vspace{1.0cm}

\resizebox{\textwidth}{!}{
\begin{tabular}{l}
\begin{large}(\axD)\end{large}
\ax{$\Box A \land \Box \neg A, \Box A, \Box \neg A, \str A, \str{\neg A} \seq \ \hyp A \seq A$}
\llab{\lneg}
\uinf{$\Box A \land \Box \neg A, \Box A, \Box \neg A, \str A, \str{\neg A} \seq \ \hyp A, \neg A \seq$}
\ax{$\Box A \land \Box \neg A, \Box A, \Box \neg A, \str A, \str{\neg A} \seq \ \hyp A \seq A$}
\rlab{\rneg}
\uinf{$\Box A \land \Box \neg A, \Box A, \Box \neg A, \str A, \str{\neg A} \seq \ \hyp \  \seq A, \neg A$}
\rlab{\ruled}
\binf{$\Box A \land \Box \neg A, \Box A, \Box \neg A, \str A, \str{\neg A} \seq$}
\rlab{\lbox}
\uinf{$\Box A \land \Box \neg A, \Box A, \Box \neg A, \str A \seq$}
\rlab{\lbox}
\uinf{$\Box A \land \Box \neg A, \Box A, \Box \neg A \seq$}
\rlab{\lland}
\uinf{$\Box A \land \Box \neg A \seq$}
\disp
\end{tabular}}

\vspace{1.0cm}

\begin{tabular}{l}
(\RDnplus)\qquad\qquad
\ax{$A_1, \ldots, A_n \seq$}
\rlab{\exwk}
\uinf{$\Box A_1 \land \ldots \land \Box A_n, \Box A_1, \ldots, \Box A_n, \str{A_1}, \ldots, \str{A_n} \seq \ \hyp A_1, \ldots, A_n \seq$}
\rlab{\rulednplus}
\uinf{$\Box A_1 \land \ldots \land \Box A_n, \Box A_1, \ldots, \Box A_n, \str{A_1}, \ldots, \str{A_n} \seq$}
\rlab{\lland\ $\times$ $n$}
\uinf{$\Box A_1 \land \ldots \land \Box A_n, \Box A_1, \ldots, \Box A_n \seq$}
\rlab{\lland\ $\times$ $n$}
\uinf{$\Box A_1 \land \ldots \land \Box A_n \seq$}
\disp
\end{tabular}
\end{small}
\caption{\label{fig:hypersequent derivations} Derivations of modal axioms and rules.}  
\end{figure}

Derivations of modal axioms and rules are displayed in
Figure~\ref{fig:hypersequent derivations}. Note that the simulations
of the rules make use of the external weakening rule \exwk, which is
shown to be admissible in Prop.~\ref{prop:adm-struct-rules}.  In the
derivations we further implicitly make use of the following lemma,
which states that initial hypersequents can be generalised to
arbitrary formulas.

\begin{proposition}\label{prop:extended initial sequents hyp}
  $\hG \hyp \G, A \seq A, \D$ is derivable in $\HEstar$ for every
  $A, \G, \D, \hG$.
\end{proposition}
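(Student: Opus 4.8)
The plan is to argue by induction on the structure of the formula $A$, proving the statement simultaneously for all side-contexts $\hG, \G, \D$. Because the claim is already quantified over arbitrary contexts, the induction hypothesis will be available in exactly the generalised form needed to close every leaf of the derivations I build.

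For the base cases: when $A$ is an atom $p$, the hypersequent $\hG \hyp \G, p \seq p, \D$ is literally an instance of \init; when $A = \bot$ it is an instance of \lbot, whose succedent is arbitrary (here $\bot, \D$); and when $A = \top$ it is an instance of \rtop, whose antecedent is arbitrary (here $\G, \top$). For the propositional inductive cases I would apply the right and left rule for the outermost connective and reduce each resulting leaf to the induction hypothesis on an immediate subformula. For $A = B \land C$, applying \rland and then \lland yields leaves of the form $\hG \hyp \G, B\land C, B, C \seq B, B\land C, \D$ and the analogue with $C$, each matching the induction hypothesis for $B$ respectively $C$. The cases $A = B\lor C$ and $A = B\to C$ are symmetric, using \llor/\rlor and \rto/\lto; in the implication case the two leaves produced by \lto carry $B$ and $C$ on both sides of the turnstile and are closed by the induction hypothesis on $B$ and $C$.

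The one case demanding care is the modal one, $A = \Box B$. Here I would first apply \lbox to the target $\hG \hyp \G, \Box B \seq \Box B, \D$, obtaining $\hG \hyp \G, \Box B, \str B \seq \Box B, \D$ with the block $\str B$ in the antecedent. I would then introduce $\Box B$ on the right \emph{using} that block: in a non-monotonic calculus via \rbox, taking principal block $\Sigma = \{B\}$ and right formula $B$; in a monotonic calculus via \rboxm. The key observation is that in either case every newly created component collapses to $B \seq B$ (this covers the $\Sigma \seq C$ premise and, for \rbox, each universal premise $C \seq A$ with $A\in\Sigma$, since $\Sigma = \{B\}$ and $C = B$), and each such leaf is derivable by the induction hypothesis on the strict subformula $B$, with the remaining sequent absorbed into the $\hG$-context.

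I expect the modal case to be the only real obstacle, together with checking that the argument is uniform across all calculi $\HEstar$ of Table~\ref{tab:hyp calculi}. That uniformity holds because the whole construction uses only the propositional rules and the base modal rules \lbox and \rbox (respectively \rboxm), all of which are present in every calculus; none of the extension rules \rulen, \rulec, \rulet, \rulep, \ruled, \ruledaux, \rulednplus\ is invoked. I also rely on the fact that each rule in play carries arbitrary side-contexts $\hG, \G, \D$ and passes them unchanged to its premisses, so that the generalised induction hypothesis applies verbatim at each leaf.
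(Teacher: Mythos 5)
Your proposal is correct and follows essentially the same route as the paper's proof: structural induction on $A$ with the contexts kept arbitrary, closing the base cases by \init, \lbot, \rtop, the propositional cases by the kleene'd left/right rules, and the modal case by \lbox{} followed by \rbox{} (or \rboxm) on the singleton block $\str B$, where all premisses reduce to the induction hypothesis on $B$. Your explicit remarks on the monotonic variant and on uniformity across all calculi $\HEstar$ are points the paper leaves implicit, but they do not change the argument.
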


\begin{proof}
  By structural induction on $A$.  If $A = p, \bot, \top$, then
  $\hG \hyp \G, A \seq A, \D$ is an initial hypersequent, whence it is
  derivable.  If $A = B \land C$ we consider the following derivation

  \begin{center}
    \ax{$\hG \hyp \G, B \land C, B, C \seq B, B \land C, \D$}
    \ax{$\hG \hyp \G, B \land C, B, C \seq C, B \land C, \D$}
    \rlab{\rland}
    \binf{$\hG \hyp \G, B \land C, B, C \seq B \land C, \D$}
    \rlab{\lland}
    \uinf{$\hG \hyp \G, B \land C \seq B \land C, \D$}
    \disp
  \end{center}

  \noindent
  where the premisses are derivable by \ih.  The cases $A = B \lor C$
  or $A = B \land C$ are analogous.  If $A = \Box B$ we consider the
  following derivation

  \begin{center}
    \ax{$\hG \hyp \G,\Box B, \str B \seq \Box B, \D \hyp B \seq B$}
    \ax{$\hG \hyp \G,\Box B, \str B \seq \Box B, \D \hyp B \seq B$}
    \rlab{\rbox}
    \binf{$\hG \hyp \G,\Box B, \str B \seq \Box B, \D$}
    \rlab{\lbox}
    \uinf{$\hG \hyp \G,\Box B \seq \Box B, \D$}
    \disp
  \end{center}

  \noindent
  where the premisses are derivable by \ih.
\end{proof}

The hypersequent calculi are sound with respect to the corresponding
\bin\ models.

\begin{theorem}[Soundness]\label{th:soundness hyp} 
  If $\hH$ is derivable in $\HEXstar$, then it is valid in all \cX-models.
\end{theorem}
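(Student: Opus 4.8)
The plan is to argue by induction on the height of the derivation of $\hH$ in $\HEXstar$, showing that the rule applied at the last step preserves validity in every $\cX$-model — i.e.\ that each rule of the calculus is \emph{sound} in the sense of the definition preceding the statement. For the base case, the initial hypersequents \init, \lbot, and \rtop\ are valid in every model, since the distinguished component already has a valid formula interpretation: the interpretation of the \init-component has the shape $\ldots\land p\to p\lor\ldots$, that of the \lbot-component has $\bot$ in the antecedent, and that of the \rtop-component has $\top$ in the succedent.

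For the inductive step the cleanest formulation is the contrapositive. First I would fix an $\cX$-model $\M$ and assume the conclusion of the rule is \emph{not} valid in $\M$. Since hypersequent validity is existential over components, this means every component of the conclusion fails at some world; in particular the principal component $\G,\ldots\seq\D$ fails at some specific world $w$, so $w$ forces every formula and every block interpretation in its antecedent and falsifies every formula in its succedent. The shared context $\hG$ stays invalid in the premiss(es) by assumption, so it suffices to exhibit a world at which the freshly created or modified component of some premiss also fails. The propositional rules are routine and independent of the modality, so the real content lies in the modal rules, handled through the bi-neighbourhood clause for $\Box$.

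The base rule \lbox\ is immediate, as the block $\str A$ has the same interpretation $\Box A$ as the formula $\Box A$ already forced at $w$, so the premiss fails at $w$. For \rboxm\ in a \cM-model, $w\Vd\Box\AND\Sigma$ yields a pair $(\alpha,\emptyset)\in\N(w)$ with $\alpha\subseteq\llbracket\AND\Sigma\rrbracket$, while $w\not\Vd\Box B$ forces $\alpha\not\subseteq\llbracket B\rrbracket$; any witnessing $v\in\alpha$ then satisfies $\Sigma$ but not $B$, so the new component $\Sigma\seq B$ fails at $v$. The general rule \rbox\ is the crux: here $w\Vd\Box\AND\Sigma$ gives $(\alpha,\beta)$ with $\alpha\subseteq\llbracket\AND\Sigma\rrbracket\subseteq\Wcompl\beta$, and $w\not\Vd\Box B$ splits into two cases — either $\alpha\not\subseteq\llbracket B\rrbracket$, yielding a $v\in\alpha$ that falsifies the first premiss's component $\Sigma\seq B$, or $\llbracket B\rrbracket\cap\beta\neq\emptyset$, yielding a $v\in\beta$ with $v\Vd B$ and $v\not\Vd A$ for some $A\in\Sigma$ (since $\llbracket\AND\Sigma\rrbracket\subseteq\Wcompl\beta$), which falsifies the component $B\seq A$ in the corresponding premiss. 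The remaining modal rules follow the same template, each invoking the matching closure condition: \rulen\ uses (\cN) to make $\Box\top$ true everywhere; \rulec\ uses (\cC) to merge the witnessing pairs for $\Sigma$ and $\Pi$ into one for $\Sigma,\Pi$; \rulet\ uses (\cT) to place $w$ inside $\alpha$; \rulep\ uses (\cP) to obtain a nonempty $\alpha$ and hence a world satisfying $\Sigma$; and \rulednplus\ uses (\cRDnplus) to find a world in $\alpha_1\cap\ldots\cap\alpha_n$. The two rules for \axD\ invoke (\cD) through a split on $\alpha\cap\gamma\neq\emptyset$ versus $\beta\cap\delta\neq\emptyset$ (and, for \ruledaux, the special case $(\emptyset,\emptyset)\notin\N(w)$ that (\cD) entails), mirroring the two cases of \rbox.

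The main obstacle is bookkeeping rather than anything conceptual: in each case one must check that the \emph{newly introduced} component fails at the world produced from the witnessing pair, while all inherited components — both those in $\hG$ and the retained principal component — remain invalid, so that the whole premiss is genuinely invalid. The delicate rules are \rbox\ and \ruled\ (together with \ruledaux), where the falsity of $\Box B$ (respectively the \cD\ condition) forces a disjunction and hence a case split that must land on the correct premiss; keeping track of which component of which premiss is falsified in the case where the upper bound $\beta$ (respectively $\delta$), rather than the lower bound $\alpha$, supplies the witness is the point requiring the most care.
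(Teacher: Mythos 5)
Your proof is correct, and it reaches the same conclusion as the paper by a genuinely different route at the level of the individual rule arguments. The paper argues in the forward direction: assuming all premisses valid in $\M$, it passes through the formula interpretation of the new components (e.g.\ from $\M \models \Sigma \seq B$ and $\M \models B \seq A$ for all $A \in \Sigma$ it gets $\M \models \AND\Sigma \leftrightarrow B$) and then invokes the validity of the corresponding Hilbert axiom or rule (\RE, \axC, \axP, \axD, \RDnplus, \ldots) in the relevant class of \bin\ models, a fact already secured by the characterisation theorem. You instead argue contrapositively and work pointwise: from the world $w$ falsifying the principal component you extract the witnessing pair $(\alpha,\beta)\in\N(w)$ for $\Box\AND\Sigma$ and use it directly to manufacture a world at which the new component of some premiss fails, with the two-way split on $\alpha$ versus $\beta$ for \rbox\ and on $\alpha\cap\gamma$ versus $\beta\cap\delta$ for \ruled\ landing on the correct premiss exactly as required. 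Your version is more self-contained (it never appeals to the soundness of the axiom system over \bin\ models) and anticipates the countermodel-reading of Section~6, at the price of the explicit bookkeeping you flag; the paper's version is shorter because it delegates the semantic content to results already proved. One small remark: for \ruledaux\ you do not actually need condition (\cD) or the consequence $(\emptyset,\emptyset)\notin\N(w)$ — since $\W$ is nonempty, any world either satisfies $\AND\Sigma$ (so $\Sigma\seq$ fails there) or falsifies some $A\in\Sigma$ (so $\seq A$ fails there), which is why the paper proves \ruledaux\ sound for \emph{all} \bin\ models; your extra appeal is harmless but superfluous.
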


\begin{proof}
  The initial hypersequents are clearly valid.  We show that all rules
  are sound with respect to the corresponding \bin\ models.  Since the
  proof is standard for propositional rules, we just consider the
  modal rules.

  (\lbox) Assume $\M \models \hG \hyp \G, \Box A, \str A \seq \D$.
  Then $\M\models\hG$, or $\M\models \G, \Box A, \str A \seq \D$.  In
  the first case we are done.  In the second case,
  $\M \models \fint(\G, \Box A, \str A \seq \D) = \fint(\G, \Box A,
  \Box A \seq \D)$, which is equivalent to
  $\fint(\G, \Box A \seq \D)$.

  (\rbox) Assume
  $\M \models \hG \hyp \G, \str \Sigma \seq \Box B, \D \hyp \Sigma
  \seq B$ and
  $\M \models \hG \hyp \G, \str\Sigma \seq \Box B, \D \hyp B \seq A$
  for all $A\in\Sigma$.  Then (i) $\M \models \hG$, or (ii)
  $\M \models \G, \str\Sigma \seq \Box B, \D$, or (iii)
  $\M \models \Sigma \seq B$ and $\M \models B \seq A$ for all
  $A\in\Sigma$.  If (i) or (ii) we are done.  If (iii), then
  $\M \models \AND\Sigma \to B$ and $\M \models B \to A$ for all
  $A\in\Sigma$, that is $\M \models \AND\Sigma \leftrightarrow B$.
  Since \RE\ is valid,
  $\M \models \Box\AND\Sigma \to \Box B = \fint(\str\Sigma \seq \Box
  B)$.  Thus $\M \models \fint(\G, \str \Sigma \seq \Box B, \D)$.

  (\rboxm) Analogous to \rbox, by considering that in \cM-models
  $\M\models \AND \Sigma \to B$ implies
  $\M\models \Box \AND \Sigma \to \Box B$.

  (\rulen) Suppose $\M$ is a \cN-model and assume
  $\M \models\hG \hyp \G, \str\top \seq \D$.  Then $\M\models\hG$, or
  $\M\models \G, \str\top \seq \D$.  In the first case we are done.
  In the second case, $\M\models \fint(\G, \str\top \seq \D)$, which
  is equivalent to $\Box \top \to \fint(\G \seq \D)$.  Since
  $\Box \top$ is valid in $\M$, $\M\models \G \seq \D$.

  (\rulec) Suppose $\M$ is a \cC-model and assume
  $\M \models \hG \hyp \G, \str\Sigma, \str\Pi, \str{\Sigma,\Pi} \seq
  \D$.  Then $\M \models \hG$ or
  $\M \models \G, \str\Sigma, \str\Pi, \str{\Sigma,\Pi} \seq \D$.  In
  the first case we are done.  In the second case,
  $\M \models \fint(\G, \str\Sigma, \str\Pi, \str{\Sigma,\Pi} \seq \D)
  = \fint(\G, \Box\AND\Sigma, \Box\AND\Pi,
  \Box(\AND\Sigma\land\AND\Pi) \seq \D)$.  This is equivalent to
  $\Box\AND\Sigma \land \Box\AND\Pi \land \Box(\AND\Sigma\land\AND\Pi)
  \to \fint(\G \seq \D)$, and since axiom \axC\ is valid in $\M$, this
  is equivalent to
  $\Box\AND\Sigma \land \Box\AND\Pi \to \fint(\G \seq \D)$.  Thus
  $\M \models \fint(\G, \str\Sigma, \str\Pi \seq \D)$.

  (\rulet) Suppose $\M$ is a \cT-model and assume
  $\M \models \hG \hyp \G, \str\Sigma, \Sigma \seq \D$.  Then
  $\M \models \hG$ or $\M \models \G, \str\Sigma, \Sigma \seq \D$.  In
  the first case we are done.  In the second case,
  $\M \models \fint(\G, \str\Sigma, \Sigma \seq \D) = \Box\AND\Sigma
  \land \AND\Sigma \to \fint(\G \seq \D)$.  Since axiom \axT\ is valid
  in $\M$, this is equivalent to
  $\Box\AND\Sigma \to \fint(\G \seq \D)$.  Then
  $\M \models \G, \str\Sigma \seq \D$.

  (\rulep) Suppose $\M$ is a \cP-model and assume
  $\M \models \hG \hyp \G, \str\Sigma \seq \D \hyp \Sigma \seq$.  Then
  (i) $\M\models \hG$, or (ii) $\M\models \G, \str\Sigma \seq \D$, or
  (iii) $\M \models \Sigma \seq$.  If (i) or (ii) we are done.  If
  (iii), then $\M \models \AND\Sigma \to \bot$.  and by the validity
  of axiom \axP,
  $\M \models \Box\AND\Sigma \to \bot = \fint(\str\Sigma \seq)$.  Then
  $\M\models \G, \str\Sigma \seq \D$.

  (\rulednplus) Suppose $\M$ is a \cRDnplus-model and assume
  $\M \models \hG \hyp \G, \str{\Sigma_1}, \ldots, \str{\Sigma_n} \seq
  \D \hyp \Sigma_1, \ldots, \Sigma_n \seq$.  Then (i) $\M\models \hG$,
  or (ii)
  $\M\models \G, \str{\Sigma_1}, \ldots, \str{\Sigma_n} \seq \D$, or
  (iii) $\M \models \Sigma_1, \ldots, \Sigma_n \seq$.  If (i) or (ii)
  we are done.  If (iii), then
  $\M \models \neg(\AND\Sigma_1 \land \ldots \land \AND\Sigma_n)$.
  And by the soundness of rule \RDnplus,
  $\M \models \neg(\Box\AND\Sigma_1 \land \ldots \land
  \Box\AND\Sigma_n) = \fint(\str{\Sigma_1}, \ldots, \str{\Sigma_n}
  \seq)$.  Then
  $\M\models \G, \str{\Sigma_1}, \ldots, \str{\Sigma_n} \seq \D$.

  (\ruled) Suppose $\M$ is a \cD-model and assume
  $\M \models \hG \hyp \G, \str\Sigma, \str\Pi \seq \D \hyp \Sigma,
  \Pi \seq$, and
  $\M \models \hG \hyp \G, \str\Sigma, \str\Pi \seq \D \hyp \ \seq A,
  B$ for all $A \in\Sigma, B \in\Pi$.  Then (i) $\M\models \hG$, or
  (ii) $\M\models \G, \str\Sigma, \str\Pi \seq \D$, or (iii)
  $\M \models \Sigma, \Pi \seq$ and $\M \models \ \seq A, B$ for all
  $A \in\Sigma, B \in\Pi$.  If (i) or (ii) we are done.  If (iii),
  then $\M \models \AND\Sigma \land \AND\Pi \to \bot$ and
  $\M \models A \lor B$ for all $A \in\Sigma, B \in\Pi$.  Thus
  $\M \models \AND\Sigma \leftrightarrow \neg \AND\Pi$.  By the
  soundness of axiom \axD,
  $\M \models \Box\AND\Sigma \land \Box\AND\Pi \to \bot =
  \fint(\str\Sigma, \str\Pi \seq)$.  Then
  $\M\models \G, \str\Sigma, \str\Pi \seq \D$.

  (\ruledaux) Assume
  $\M \models \hG \hyp \G, \str\Sigma \seq \D \hyp \Sigma \seq$, and
  $\M \models \hG \hyp \G, \str\Sigma \seq \D \hyp \ \seq A$ for all
  $A \in\Sigma$.  Then (i) $\M\models \hG$, or (ii)
  $\M\models \G, \str\Sigma \seq \D$, or (iii)
  $\M \models \Sigma \seq$ and $\M \models \ \seq A$ for all
  $A \in\Sigma$.  If (i) or (ii) we are done.  If (iii), then
  $\M \models \AND\Sigma \to \bot$ and $\M \models \AND\Sigma$, which
  is impossible.  Then (i) or (ii) holds.
\end{proof}

\subsection{Structural properties and syntactic completeness}
\label{sec:struct-prop}

We now investigate the structural properties of our calculi.  We first
show that weakening and contraction are height-preserving admissible,
both in their internal and in their external variants, and that all
rules are invertible.  We then prove that the cut rule is admissible,
which allows us to directly prove the completeness of the calculi with
respect to the corresponding axiomatisations.

\begin{definition}
  \label{def:weight hyp}
  The {\em weight} $\w$ of a formula is recursively defined as
  $\w(\bot) =\w(\top) = \w(p) = 0$; for
  $\circ\in\{\land, \lor, \to\}$, $\w(A \circ B) = \w(A) + \w(B) + 1$;
  $\w(\langle A_1, \ldots, A_n \rangle) = max\{\w(A_1), \ldots,
  \w(A_n)\} + 1$; $\w(\Box A) = \w(A) + 2$.

  The {\em height} of a derivation is the greatest number of
  successive applications of rules in it, where axioms have height
  $0$. A property is {\em height-preserving} if the height of
  derivations is an invariant.
\end{definition}

\begin{proposition}\label{prop:adm-struct-rules}
  The following structural rules are height-preserving admissible in
  $\HEstar$, where $\phi$ is any formula $A$ or block $\str\Sigma$.
  Moreover, all rules in $\HEstar$ are height-preserving invertible.

\begin{center}
\begin{small}
\begin{tabular}{lllll}
\vspace{0.3cm}
\ax{$\hG \hyp \G \seq \D $}
\llab{\hyplwk}
\uinf{$\hG \hyp \G, \phi \seq \D$}
\disp 
&
\ax{$\hG \hyp \G \seq \D $}
\llab{\hyprwk}
\uinf{$\hG \hyp \G \seq A, \D$}
\disp
&
\ax{$\hG$}
\llab{\exwk}
\uinf{$\hG \hyp \G \seq \D$}
\disp 
\\

\vspace{0.3cm}
\ax{$\hG \hyp  \G, \phi, \phi \seq \D $}
\llab{\hyplctr}
\uinf{$\hG \hyp \G, \phi \seq \D$}
\disp 
&
\ax{$\hG \hyp \G \seq A, A, \D$}
\llab{\hyprctr}
\uinf{$\hG \hyp \G \seq A, \D$}
\disp
&
\ax{$\hG \hyp \G \seq \D \hyp \G \seq \D$}
\llab{\exctr}
\uinf{$\hG \hyp \G \seq \D$}
\disp  
\\

\ax{$\hG \hyp  \G, \str{\Theta,A,A} \seq \D $}
\llab{\textsf{Sctr}}
\uinf{$\hG \hyp \G, \str{\Theta,A} \seq \D$}
\disp 
&
\ax{$\hG \hyp  \G, \str{\Theta,A} \seq \D $}
\llab{\textsf{Smgl}}
\uinf{$\hG \hyp \G, \str{\Theta,A,A} \seq \D$}
\disp 
\end{tabular}
\end{small}
\end{center}
\end{proposition}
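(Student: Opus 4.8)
The plan is to establish the three groups of statements in the order \emph{weakening}, \emph{invertibility}, \emph{contraction}, since each group is used in the next. Throughout, I argue by induction on the height of the given derivation, exploiting that every rule in Figure~\ref{fig:hypersequent rules} is in cumulative form: principal formulas and blocks are retained in the premiss(es), and arbitrary side contexts $\hG, \G, \D$ are permitted. I would first prove that \hyplwk, \hyprwk, \exwk, and \textsf{Smgl} are \hp{} admissible, by a simultaneous induction on height. The base cases are immediate, since weakening an instance of \init, \lbot, or \rtop\ again yields an instance of the same axiom. For the inductive step the added formula, block, or component is simply carried into the premiss(es): one applies the \ih\ and reapplies the last rule. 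For \textsf{Smgl} note that $\w(\str{\Theta,A,A}) = \w(\str{\Theta,A})$, since the weight of a block is the maximum of the weights of its formulas plus one; hence duplicating a formula inside a block changes no weight, and in the modal cases where the block is principal one combines the \ih\ for \textsf{Smgl} with \hyplwk\ to reinstate the premiss.

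Invertibility is then immediate, which is the payoff of the cumulative formulation. For every rule, each premiss is obtained from the conclusion purely by adding formulas to an antecedent or succedent (e.g.\ \lbox\ adds the block $\str A$, \lland\ adds $A,B$), by adding a block (\rulen\ adds $\str\top$, \rulec\ adds $\str{\Sigma,\Pi}$), or by appending a new component (the right rules \rbox, \rboxm\ and the rules \rulep, \ruled, \ruledaux, \rulednplus). Hence \hp{} invertibility of every rule is a direct instance of the \hp{} admissibility of \hyplwk/\hyprwk/\exwk\ just established: from a height-$n$ derivation of the conclusion, weakening produces a height-$\leq n$ derivation of each premiss.

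Finally I would prove that \hyplctr, \hyprctr, \exctr, and \textsf{Sctr} are \hp{} admissible, again by a simultaneous induction on height with the usual case distinction on the last rule $\varR$. When no contracted occurrence is principal in $\varR$, both copies already appear in the premiss(es), so I apply the \ih\ and reapply $\varR$. When a formula is principal, I invert $\varR$ on one copy using the invertibility just established, merge the exposed duplicates via the \ih, and reapply $\varR$. In the external case where $\varR$ acts on only one of two duplicated components, I first use \hp{} weakening to make the two components identical again, then contract by the \ih, and finally reapply $\varR$; all steps preserve height.

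The main obstacle lies in the modal principal cases involving blocks, and this is exactly where the rule design is used. Against \rulec\ one meets the merged block $\str{\Sigma,\Sigma}$ in the premiss, which is first reduced to $\str\Sigma$ by the \ih\ for \textsf{Sctr} and then absorbed by the \ih\ for block \hyplctr. More delicately, contracting two identical blocks that served as the \emph{two} principal blocks $\str\Sigma,\str\Pi$ of \ruled\ (with $\Sigma=\Pi$) does not close with \ruled\ alone: after reducing the spawned components by the \ih\ one is left precisely with the premisses of the one-block rule \ruledaux, whose application yields the contracted conclusion at no greater height. Likewise, contracting copies of blocks principal in \rulednplus\ reduces to an instance of \ruledmplus\ for a smaller $m$. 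Verifying these interactions---together with the symmetrisation step for \exctr---is the technical heart of the argument; the propositional and remaining modal cases are routine.
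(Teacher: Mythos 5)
Your proposal is correct and follows essentially the same route as the paper: height-preserving weakening first, invertibility as an immediate consequence of the cumulative rule format, and contraction by simultaneous induction, with the crucial observation that contracting the two principal blocks of \ruled\ forces a fall-back to \ruledaux\ (and \rulednplus\ to \ruledmplus\ for smaller $m$) --- exactly the case the paper singles out as illustrative. The only cosmetic difference is that you invoke invertibility in the principal cases of contraction where the paper simply exploits that both copies already appear in the kleene'd premiss; both work.
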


\begin{proof}
  For each rule $\varR$, we prove that if the premise is derivable
  with height at most $n$, then the conclusion is also derivable with
  height at most $n$. The proof is by induction on the height of the
  derivation of the premise, highlighting that rules \hyplctr\ and
  \hyprctr\ are simultaneously proved admissible by mutual
  induction. Moreover, admissibility of \textsf{Sctr} and
  \textsf{Smgl} rely on height-preserving admissibility of contraction
  and weakening on formulae outside blocks, respectively.

  We will illustrate with the following cases
  \begin{itemize}
  \item Case \exctr\ + \rboxm. Suppose that
    \[
      \infer[\mbox{\rboxm}]{\hG \hyp \G, \str\Sigma \seq \Box B, \D \hyp \G, \str\Sigma \seq \Box B, \D}{
        \deduce{\hG \hyp \G, \str\Sigma \seq \Box B, \D \hyp \G, \str\Sigma \seq \Box B, \D \hyp \Sigma \seq B}{\pi}}
    \]
    where the height of $\pi$ is at most $n$.  By induction
    hypothesis, there is a proof $\pi'$, with height at most $n$, of
    $\hG \hyp \G, \str\Sigma \seq \Box B, \D \hyp \Sigma \seq
    B$. Hence
    \[
      \infer[\mbox{\rboxm}]{\hG \hyp \G, \str\Sigma \seq \Box B, \D}{\deduce{\hG \hyp  \G, \str\Sigma \seq \Box B, \D \hyp \Sigma \seq B}{\pi'}}
    \]
  \item Case \hyplctr\ + \ruled. Consider the derivation
    \[
      \infer[\mbox{\ruled}]{\hG \hyp \G, \str\Sigma, \str\Sigma \seq \D}
      {\deduce{\hG \hyp \G, \str\Sigma, \str\Sigma \seq \D \hyp \Sigma, \Sigma \seq}{\pi_1} & \deduce{\{ \hG \hyp \G,  \str\Sigma, \str\Sigma \seq \D \hyp \ \seq A, B\}_{A, B \in\Sigma}}{\pi_2^{A,B}}}
    \]
    where the heights of $\pi_1$ and $\pi_2^{A,B}$ are at most
    $n$. Observe that, in particular, there are proofs $\pi_2^{A,A}$
    of
    $\{\hG \hyp \G, \str\Sigma, \str\Sigma \seq \D \hyp \ \seq A,
    A\}_{A \in\Sigma}$.  By induction hypothesis, there are proofs
    $\pi_1',\pi_2^{A}$, with height at most $n$, of
    $\hG \hyp \G, \str\Sigma \seq \D \hyp \Sigma \seq$ and
    $\{ \hG \hyp \G, \str\Sigma \seq \D \hyp \ \seq A\}_{A
      \in\Sigma}$, respectively. Hence
    \[
      \infer[\mbox{\ruledaux}]{\hG \hyp \G,  \str\Sigma \seq \D}
      {\deduce{\hG \hyp \G, \str\Sigma \seq \D \hyp \Sigma \seq}{\pi_1} & \deduce{\{ \hG \hyp \G,  \str\Sigma \seq \D \hyp \ \seq A\}_{A \in\Sigma}}{\pi_2^{A}}}
    \]
  \end{itemize}

  Finally, note that since all rules are cumulative, height-preserving
  invertibility of all rules is an immediate consequence of
  height-preserving admissibility of weakening.  For instance,
  invertibility of rule \rboxm\ is proved as follows

  \begin{center}
    \ax{$\hG \hyp \G,  \langle\Sigma\rangle \seq \Box B, \D$}
    \rlab{\exwk}
    \uinf{$\hG \hyp \G, \langle\Sigma\rangle \seq \Box B, \D \hyp \Sigma \seq B$}
    \disp
  \end{center}
  \vspace{-0.3cm}
\end{proof}

We note that due to the fact that the \rbox\ rule isolates single
formulae from block in its right premiss, in the non-monotonic case
the full-blown weakening inside blocks is \emph{not}
admissible. However, the weaker rule of \emph{mingle} inside blocks
\textsf{Smgl} is.

The proof of admissibility of \hypcut\ is more intricate and deserves
more attention.  In the hypersequent framework, the \hypcut\ rule is
formulated as follows
\begin{center}
  \ax{$\hG \hyp \G \seq A, \D$}
  \ax{$\hG \hyp \G, A \seq \D$}
  \llab{\hypcut}
  \binf{$\hG \hyp \G \seq \D$}
  \disp
\end{center}
The admissibility of \hypcut\ is proved simultaneously with the
admissibility of the following rule \subrule{}, which states that a
formula $A$ inside one or more blocks can be replaced by any
equivalent multiset of formulas $\Sigma$
\begin{center}
  \ax{$\hG \hyp \Sigma \seq A$} 
  \ax{$\{  \hG \hyp  A \seq B  \}_{B\in\Sigma}$}
  \ax{$\hG \hyp \G, \langle A^{n_1}, \Pi_1 \rangle, \ldots, \langle A^{n_k}, \Pi_k \rangle \seq \D$}
  \llab{\subrule}
  \TrinaryInfC{$\hG \hyp \G, \langle \Sigma^{n_1}, \Pi_1 \rangle, \ldots, \langle \Sigma^{n_k}, \Pi_k \rangle \seq \D$}
  \disp
\end{center}
where $A^{n_i}$ (resp.~$\Sigma^{n_i}$) is a compact way to denote
$n_i$ occurrences of $A$ (resp.~$\Sigma$).  In the monotonic case we
need to consider, instead of \subrule{}, the rule

\begin{center}
  \ax{$\hG \hyp \Sigma \seq A$} 
  \ax{$\hG \hyp \G, \langle A^{n_1}, \Pi_1 \rangle, \ldots, \langle A^{n_k}, \Pi_k \rangle \seq \D$}
  \llab{\subrulem}
  \binf{$\hG \hyp \G, \langle \Sigma^{n_1}, \Pi_1 \rangle, \ldots, \langle \Sigma^{n_k}, \Pi_k \rangle \seq \D$}
  \disp
\end{center}

\begin{theorem}\label{thm:cut}
  If $\HEstar$ is non-monotonic, then the rules \hypcut{} and
  \subrule{} are admissible in $\HEstar$; otherwise \hypcut{} and
  \subrulem{} are admissible in $\HEstar$.
\end{theorem}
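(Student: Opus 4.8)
The plan is to prove the two statements simultaneously, by a layered induction whose primary parameter is the weight $\w(A)$ of the cut-formula (for \hypcut) resp.\ of the replaced formula $A$ (for \subrule/\subrulem). For a fixed weight $n$ I would first establish \hypcut\ for cut-formulas of weight $n$ --- by a secondary induction on the sum of the heights of the derivations of the two premises --- and only afterwards establish \subrule\ (or \subrulem) for replaced formulas of weight $n$, by a secondary induction on the height of the derivation of its last (\emph{target}) premise. The point of this ordering is that the admissibility of \subrule\ at weight $n$ will consume instances of \hypcut\ at the same weight $n$, whereas \hypcut\ at weight $n$ will only ever call \subrule\ at strictly smaller weight; hence the whole recursion is well-founded. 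Throughout I would freely use height-preserving admissibility of weakening and contraction and height-preserving invertibility of all rules (Proposition~\ref{prop:adm-struct-rules}), together with the generalised axioms of Proposition~\ref{prop:extended initial sequents hyp}; in particular, \hyplctr\ lets us contract two identical blocks $\str\Sigma,\str\Sigma$ into one.

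For \hypcut\ the case analysis is on the last rules of the two premises. If either premise is an initial hypersequent the cut is removed directly; if the cut-formula is not principal in one of the last rules, I would permute the cut upward past that rule and invoke the secondary induction hypothesis (using weakening to install the cut-formula in any new components created by rules such as \rbox, \rboxm, \rulep, \ruled, \rulednplus). The propositional principal cases reduce, as usual, to cuts on the immediate subformulas, which have strictly smaller weight. The only genuinely modal principal case is a cut on $A=\Box B$, where the left premise ends in \rbox\ (or \rboxm) using a block $\str\Sigma$ with side derivations witnessing $\Sigma\seq B$ and $B\seq C$ for all $C\in\Sigma$, while the right premise ends in \lbox, which inserts a spurious block $\str B$. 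Here I would first cut $\Box B$ against the (shorter) premise of \lbox\ --- a cut of the same weight but smaller height, available by the secondary hypothesis --- to reach a hypersequent of the form $\hG\hyp\G,\str\Sigma,\str B\seq\D$, and then apply \subrule\ with replaced formula $B$ and replacement multiset $\Sigma$, feeding in exactly the two side derivations $\Sigma\seq B$ and $\{B\seq C\}_{C\in\Sigma}$ produced by the \rbox. Since $\w(B)<\w(\Box B)$, this invocation of \subrule\ is at strictly smaller primary weight. The result is $\hG\hyp\G,\str\Sigma,\str\Sigma\seq\D$, and contracting the duplicated block via \hyplctr\ yields the desired conclusion. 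In the monotonic case the same reduction works with \rboxm\ and \subrulem, which require only the single direction $\Sigma\seq B$.

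For \subrule\ the analysis is on the last rule of its target premise. When the rule does not act on any of the blocks $\str{A^{n_i},\Pi_i}$ being rewritten, I would simply permute \subrule\ above it. The interesting cases are those where one such block is principal in a modal rule --- \rbox, \rboxm, \rulec, \rulen, \rulet, \rulep, \ruled, \ruledaux, \rulednplus\ --- which move the block contents into a fresh component or combine them. In each of these I would apply the secondary induction hypothesis to the premises and then rebuild the rule around the rewritten block $\str{\Sigma^{n_i},\Pi_i}$; to restore the equivalences the modal rule needs (\eg\ to turn a side sequent $A\seq C$ into $\Sigma\seq C$, or $C\seq A$ into $C\seq B$ for $B\in\Sigma$), I would compose with the \subrule\ premises $\Sigma\seq A$ and $\{A\seq B\}_{B\in\Sigma}$ through cuts on $A$. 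These are cuts of weight exactly $n=\w(A)$, and they are legitimate precisely because \hypcut\ at weight $n$ has already been secured in the first part of the argument. The monotonic variant \subrulem\ is analogous but easier, as only the single premise $\Sigma\seq A$ is available and needed.

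The main obstacle is the modal principal case of the cut just described, together with its mirror image inside the proof of \subrule: it is exactly the interplay between cut on $\Box$-formulas and the rewriting of blocks that forces us to generalise from plain cut admissibility to the joint admissibility of \hypcut\ and \subrule, and that dictates the layered measure (weight first, with \hypcut\ at a given weight proved before, and then used by, \subrule\ at that weight). Getting this measure to be genuinely well-founded --- ensuring that no principal step of \hypcut\ at weight $n$ secretly appeals to \subrule\ at weight $n$, and that the many block-manipulating modal rules can all be permuted through \subrule\ without raising the weight --- is the delicate bookkeeping at the heart of the proof. The non-monotonic calculi are the hard ones here, since \rbox\ carries the extra \emph{lower-bound} premises $\{B\seq C\}_{C\in\Sigma}$ and \subrule\ must preserve both directions of the equivalence; the monotonic calculi, using \rboxm\ and \subrulem, drop one direction and are correspondingly simpler.
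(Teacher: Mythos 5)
Your proposal is correct and follows essentially the same route as the paper's proof: a simultaneous admissibility argument for \hypcut{} and \subrule{} (resp.\ \subrulem{}) organised by a primary induction on the weight of the cut/replaced formula and a secondary induction on height, with the $\Box$-principal cut case resolved by a smaller-height cut against the premise of \lbox{} followed by an application of \subrule{} at the strictly smaller weight of the unboxed formula, and with \subrule{} at weight $n$ in turn consuming cuts at weight $n$ exactly as in the paper's claims (C) and (D). The only cosmetic difference is that the paper isolates the height-$0$ and weight-$0$ base cases as separate claims and explicitly cleans up the \rbox{} side premises with further small-height cuts so that the contexts of the \subrule{} premises match, details your sketch glosses over but which do not affect the argument.
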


\begin{proof}
  We prove that \hypcut{} and \subrule{} are admissible in
  non-monotonic $\HEstar$; the proof in the monotonic case is
  analogous.  Recall that, for an application of \hypcut, the
  \emph{cut formula} is the formula which is deleted by that
  application, while the \emph{cut height} is the sum of the heights
  of the derivations of the premisses of \hypcut.  The theorem is a
  consequence of the following claims, where $Cut(c, h)$ means that
  all applications of \hypcut{} of height $h$ on a cut formula of
  weight $c$ are admissible, and $Sub(c)$ means that all applications
  of \subrule{} where the principal formula $A$ has weight $c$ are
  admissible (for any $\Sigma, \Pi_1, \ldots, \Pi_k$)
  \begin{itemize}
  \item \textbf{(A)}  $\forall c. Cut(c, 0)$. 
  \item \textbf{(B)}  $\forall h. Cut(0, h)$. 
  \item \textbf{(C)}  $\forall c. (\forall h. Cut(c, h) \to Sub(c))$. 
  \item \textbf{(D)}  $\forall c. \forall h. ((\forall c'<c. (Sub(c') \land \forall h'. Cut(c', h')) \land \forall h''<h. Cut (c, h'') ) \to Cut(c, h))$.
  \end{itemize}
  \textbf{(A)} If the cut height is 0, then \hypcut\ is applied to
  initial hypersequents $\hG \hyp \G \seq A, \D$ and
  $\hG \hyp \G, A \seq \D$.  We show that the conclusion of \hypcut\
  $\hG \hyp \G \seq \D$ is an inital hypersequent, whence it is
  derivable without \hypcut.  If $\hG$ is an inital hypersequent we
  are done.  Otherwise $\G \seq A, \D$ and $\G, A \seq \D$ are initial
  sequents.  For the first sequent there are three possibilities: (i)
  $\G \seq \D$ is an initial sequent, or (ii) $A = \top$, or (iii)
  $A = p$ and $\G = \G', p$.  If (ii), then the second sequent is
  $\G, \top \seq \D$, which implies that $\G \seq \D$ is an initial
  sequent.  If (iii), then the second sequent is $\G', p, p \seq \D$.
  Then $\G' \seq \D$ is an initial sequent, or $\D = p, \D'$, which
  implies that $\G', p \seq p, \D' = \G \seq \D$ is an initial
  sequent.

  \bigskip \textbf{(B)} If the cut formula has weight $0$, then it is
  $\bot$, $\top$, or a propositional variable $p$.  For all three
  possibilities the proof is by complete induction on $h$. The basic
  case $h=0$ is a particular case of \textbf{(A)}.  For the inductive
  step, we distinguish three cases.

  (i) The cut formula $\bot$, $\top$, or $p$ is not principal in the
  last rule applied in the derivation of the left premiss.  By
  examining all possible rule applications, we show that the
  application of \hypcut{} can be replaced by one or more applications
  of \hypcut{} at a smaller height.  For instance, assume that the
  last rule applied is \lbox.

  \begin{center}
    \ax{$\hG \hyp \angA, \Box A, \G \seq \D, \bot$}
    \llab{\lbox}
    \uinf{$\hG \hyp \Box A,  \G \seq \D, \bot$}
    \ax{$\hG \hyp \bot, \Box A, \G \seq \D$}
    \rlab{\hypcut}
    \binf{$\hG \hyp \Box A, \G \seq \D$}
    \disp
  \end{center}
  The derivation is transformed as follows, with a \hp\ application of
  \hyplwk{} and an application of \hypcut{} of smaller height.
  \begin{center}
    \ax{$\hG \hyp \angA, \Box A, \G \seq \D, \bot$}
    \ax{$\hG \hyp \bot, \Box A, \G \seq \D$}
    \rlab{\hyplwk}
    \uinf{$\hG \hyp \bot, \angA, \Box A, \G \seq \D$}
    \rlab{\hypcut}
    \binf{$\hG  \hyp \angA, \Box A, \G \seq \D$}
    \rlab{\lbox}
    \uinf{$\hG  \hyp \Box A, \G \seq \D$}
    \disp
  \end{center}
  The situation is similar if the last rule in the derivation of the
  left premiss is applied to some sequent in $\hG$.

  (ii) The cut formula $\bot$, $\top$, or $p$ is not principal in the
  last rule applied in the derivation of the right premiss.  The case
  is analogous to (i).  As an example, suppose that the last rule
  applied is \rboxm{}.
  
  \begin{center}
    \ax{$\hG \hyp \angSigma, \G \seq \D, \Box B, \bot$}
    \ax{$\hG \hyp \bot, \angSigma, \G \seq \D, \Box B \hyp \Sigma \seq B$}
    \rlab{\rboxm}
    \uinf{$\hG \hyp \bot, \angSigma, \G \seq \D, \Box B$}
    \rlab{\hypcut}
    \binf{$\hG \hyp \angSigma, \G \seq \D, \Box B$}
    \disp
  \end{center}

  The derivation is converted into

  \begin{center}
    \ax{$\hG \hyp \angSigma, \G \seq \D, \Box B, \bot$}
    \llab{\exwk}
    \uinf{$\hG \hyp \angSigma, \G \seq \D, \Box B, \bot \hyp \Sigma \seq B$}   
    \ax{$\hG \hyp \bot, \angSigma, \G \seq \D, \Box B \hyp \Sigma \seq B$}
    \rlab{\hypcut}
    \binf{$\hG \hyp \angSigma, \G \seq \D, \Box B \hyp \Sigma \seq B$}
    \rlab{\rboxm}
    \uinf{$\hG \hyp \angSigma, \G \seq \D, \Box B$}
    \disp
  \end{center}
  where \hypcut{} is applied at a smaller height.
   
  (iii) The cut formula $\bot$, $\top$, or $p$ is principal in the
  last rule applied in the derivation of both premisses.  Then the cut
  formula is $p$, as $\bot$ (resp.~$\top$) is never principal on the
  right-hand side (resp.~the left-hand side) of the conclusion of any
  rule application.  This means that both premisses are derived by
  \init, which implies that $h=0$.  Then we are back to case
  \textbf{(A)}.

  \bigskip \textbf{(C)} Assume $\forall h Cut(c,h)$.  We prove that
  all applications of \subrule\ where $A$ has weight $c$ are
  admissible.  The proof is by induction on the height $m$ of the
  derivation of
  $\hG \hyp \langle A^{n_1}, \Pi_1 \rangle, \ldots, \langle A^{n_k},
  \Pi_k \rangle, \G \seq \D$.  If $m = 0$ or no block among
  $\langle A, \Pi_1 \rangle, \ldots, \langle A, \Pi_k \rangle$ is
  principal in the last rule application, then the proof proceeds
  similarly to previous cases.  If $m>0$ and at least one block among
  $\langle A, \Pi_1 \rangle, \ldots, \langle A, \Pi_k \rangle$ is
  principal in the last rule application we have the following
  possibilities.

  $\bullet$ The last rule applied is \rbox:

\begin{center}
  \resizebox{\textwidth}{!}{ \ax{\textcircled{\footnotesize{1}}}
    \noLine
    \uinf{$\hG \hyp \langle A^{n_i}, \Pi_i \rangle, \G \seq \D, \Box D
      \hyp A^{n_i}, \Pi_i \seq D$}
    \ax{$\qquad \quad \ \{ \hG \hyp \langle A^{n_i}, \Pi_i \rangle, \G
      \seq \D, \Box D \hyp D \seq C \}_{C\in\Pi_i}$} \noLine
    \uinf{$\{ \hG \hyp \langle A^{n_i}, \Pi_i \rangle, \G \seq \D,
      \Box D \hyp D \seq A \}_1^{n_i}$ \quad $\vdots$} \rlab{\rbox}
    \BinaryInfC{$\hG \hyp \langle A^{n_i}, \Pi_i \rangle, \G \seq \D,
      \Box D$} \disp }
\end{center}

The derivation is converted as follows.  First we derive:

\begin{center}
  \vspace{0.2cm}
  \noindent
\resizebox{\textwidth}{!}{
\ax{$\hG \hyp \Sigma \seq A$}
\rlab{\exwk}
\uinf{$\hG \hyp \Sigma \seq A \hyp A^{n_i}, \Pi_i \seq D$}
\ax{$\hG \hyp A \seq B$}
\llab{$\Big\{$}
\rlab{\exwk $\Big\}_{B\in\Sigma}$}
\uinf{$\hG \hyp A \seq B \hyp A^{n_i}, \Pi_i \seq D$}
\ax{\textcircled{\footnotesize{1}}}
\rlab{\subrule}
\TrinaryInfC{$\hG \hyp \langle \Sigma^{n_i}, \Pi_i \rangle, \G \seq \D, \Box D \hyp A^{n_i}, \Pi_i \seq D$}
\disp}
\end{center}

\noindent
(where rule \subrule\ possibly applies to further blocks inside $\G$).
Then by applying \exwk{} to $\hG \hyp \Sigma \seq A$ we obtain
$\hG \hyp \langle \Sigma^{n_i}, \Pi_i \rangle, \G \seq \D, \Box D \hyp
\Sigma \seq A$.  By auxiliary applications of \wk{} we can cut $A$ and
get
$\hG \hyp \langle \Sigma^{n_i}, \Pi_i \rangle, \G \seq \D, \Box D \hyp
\Sigma, A^{n_i - 1}, \Pi_i \seq D$.  Then with further applications of
cut (each time with auxiliary applications of \wk{}) we obtain
$\hG \hyp \langle \Sigma^{n_i}, \Pi_i \rangle, \G \seq \D, \Box D \hyp
\Sigma^{n_i}, \Pi_i \seq D.$ By doing the same with the other
premisses of \rbox{} in the initial derivation we obtain also
$\{ \{ \hG \hyp \langle \Sigma^{n_i}, \Pi_i \rangle, \G \seq \D, \Box
D \hyp D \seq B \}_{B\in\Sigma} \}_1^{n_1}$ and
$\{ \hG \hyp \langle \Sigma^{n_i}, \Pi_i \rangle, \G \seq \D, \Box D
\hyp D \seq C \}_{C\in\Pi_i}.$ Finally by \rbox{} we derive the
conclusion of \subrule{}
$\hG \hyp \langle \Sigma^{n_i}, \Pi_i \rangle, \G \seq \D, \Box D.$

$\bullet$ The last rule applied is \rulec: 

\begin{center}
\ax{$\hG \hyp \langle A^{n_i}, \Pi_i \rangle, \langle A^{n_j}, \Pi_j  \rangle, \langle A^{n_i}, A^{n_j}, \Pi_i, \Pi_j  \rangle, \G \seq \D$}
\rlab{\rulec}
\uinf{$\hG \hyp \langle A^{n_i}, \Pi_i \rangle, \langle A^{n_j}, \Pi_j \rangle,  \G \seq \D$}
\disp
\end{center}

\noindent
By applying \subrule\
to the premiss we obtain\\
$\hG \hyp \langle \Sigma^{n_i}, \Pi_i \rangle, \langle \Sigma^{n_j},
\Pi_j \rangle, \langle \Sigma^{n_i}, \Sigma^{n_j}, \Pi_i, \Pi_j,
\rangle, \G \seq \D$,
then by \rulec\ we derive\\
$\hG \hyp \langle \Sigma^{n_i}, \Pi_i \rangle, \langle \Sigma^{n_j},
\Pi_j \rangle, \G \seq \D$.

$\bullet$ The last rule applied is \rulet:

\begin{center}
\ax{$\hG \hyp A^{n_i}, \Pi_i, \str{A^{n_i}, \Pi_i}, \G \seq \D$}
\rlab{\rulet}
\uinf{$\hG \hyp \str{A^{n_i}, \Pi_i}, \G \seq \D$}
\disp
\end{center}
By applying the inductive hypothesis to the premiss we obtain\\
$\hG \hyp A^{n_i}, \Pi_i, \str{\Sigma^{n_i}, \Pi_i}, \G \seq \D$.
Then, from this and $\hG \hyp \Sigma \seq A$,
by several applications of \hypcut{}
(each time with auxiliary applications of \wk{})
we obtain 
$\hG \hyp \Sigma^{n_i}, \Pi_i, \str{\Sigma^{n_i}, \Pi_i}, \G \seq \D$.
Finally, by \rulet{} we derive 
$\hG \hyp \str{\Sigma^{n_i}, \Pi_i}, \G \seq \D$.

$\bullet$ The last rule applied is \rulep:

\begin{center}
\ax{$\hG \hyp \str{A^{n_i}, \Pi_i}, \G \seq \D \hyp A^{n_i}, \Pi_i \seq$}
\rlab{\rulep}
\uinf{$\hG \hyp \str{A^{n_i}, \Pi_i}, \G \seq \D$}
\disp
\end{center}
By applying the inductive hypothesis to the premiss (after auxiliary
applications of \exwk{} to the other premisses of \subrule) we obtain
$\hG \hyp \str{\Sigma^{n_i}, \Pi_i}, \G \seq \D \hyp A^{n_i}, \Pi_i
\seq$.  Then, from this and $\hG \hyp \Sigma \seq A$, by several
applications of \hypcut{} (each time with auxiliary applications of
\wk{}) we obtain
$\hG \hyp \str{\Sigma^{n_i}, \Pi_i}, \G \seq \D \hyp \Sigma^{n_i},
\Pi_i \seq$.  Finally, by \rulep{} we derive
$\hG \hyp \str{\Sigma^{n_i}, \Pi_i}, \G \seq \D$.

$\bullet$ The last rule applied is \ruled: 
Then $\hG \hyp \langle A^{n_i}, \Pi_i \rangle, \langle A^{n_j}, \Pi_j \rangle,  \G \seq \D$ 
has been derived by the following premisses.
$\hG \hyp \langle A^{n_i}, \Pi_i \rangle, \langle A^{n_j}, \Pi_j \rangle,  \G \seq \D \hyp A^{n_i}, A^{n_j}, \Pi_i, \Pi_j \seq $;
$\{\hG \hyp \langle A^{n_i}, \Pi_i \rangle, \langle A^{n_j}, \Pi_j \rangle,  \G \seq \D \hyp \seq A,A\}_1^{n_i+n_j}$;
$\{\{\hG \hyp \langle A^{n_i}, \Pi_i \rangle, \langle A^{n_j}, \Pi_j \rangle,  \G \seq \D \hyp \seq A,C\}_{C\in\Pi_i}\}_1^{n_j}$;
$\{\{\hG \hyp \langle A^{n_i}, \Pi_i \rangle, \langle A^{n_j}, \Pi_j \rangle,  \G \seq \D \hyp \seq A,D\}_{D\in\Pi_j}\}_1^{n_i}$;
and
$\{\hG \hyp \langle A^{n_i}, \Pi_i \rangle, \langle A^{n_j}, \Pi_j \rangle,  \G \seq \D \hyp \seq A,A\}_{C\in\Pi_i, D\in\Pi_j}$.
We consider the other premisses of \subrule\ and apply \hypcut\ many times
(each time with auxiliary applications of \wk)
so to replace all occurrences of $A$ with formulas in $\Sigma$. 
As final step we can apply \ruled\ and obtain
$\hG \hyp \langle \Sigma^{n_i}, \Pi_i \rangle, \langle \Sigma^{n_j}, \Pi_j  \rangle, \G \seq \D$.

$\bullet$ The remaining cases \rulednplus\ and \ruledaux\ are similar
to the previous ones.

\bigskip \textbf{(D)} Assume
$\forall c'<c.\, (Sub(c') \land \forall h'.\, Cut(c', h'))$ and
$\forall h''<h.\, Cut (c, h'')$.  We show that all applications of
\hypcut{} of height $h$ on a cut formula of weight $c$ can be replaced
by different applications of \hypcut{}, either of smaller height or on
a cut formula of smaller weight.  We can assume $h, c > 0$ as the
cases $h=0$ and $c=0$ have been already considered in \textbf{(A)} and
\textbf{(B)}.  We distinguish two cases.

(i) The cut formula is not principal in the last rule application in
the derivation of at least one of the two premisses of \hypcut.  This
case is analogous to (i) or (ii) in \textbf{(B)}.

(ii) The cut formula is principal in the last rule application in the
derivation of both premisses.  Then the cut formula is either
$B\circ C$, with $\circ\in\{\land, \lor, \to\}$, or $\Box B$.

$\bullet$ The case of boolean connective is standard. We consider as
an example $B \to C$.  We have:

\begin{center}
\resizebox{\textwidth}{!}{
\ax{$\textcircled{\footnotesize{1}}$}
\noLine
\uinf{$\hG \hyp B, \G \seq \D, B \to C, C$}
\llab{\rto}
\uinf{$\hG \hyp \G \seq \D, B \to C$}
\ax{$\textcircled{\footnotesize{2}}$}
\noLine
\uinf{$\hG \hyp B\to C, \G \seq \D, B$}
\ax{$\textcircled{\footnotesize{3}}$}
\noLine
\uinf{$\hG \hyp C, B\to C, \G \seq \D$}
\rlab{\lto}
\binf{$\hG \hyp B\to C, \G \seq \D$}
\rlab{\hypcut}
\binf{$\hG \hyp \G \seq \D$}
\disp }
\end{center}

The derivation is converted into the following one:

\begin{center}
\resizebox{\textwidth}{!}{
\ax{$\hG \hyp \G \seq \D, B \to C$}
\llab{\rwk}
\uinf{$\hG \hyp \G \seq \D, B \to C, B$}
\ax{$\textcircled{\footnotesize{2}}$}
\llab{\hypcut}
\binf{$\hG \hyp \G \seq \D, B$}
\llab{\rwk}
\uinf{$\hG \hyp \G \seq \D, B, C$}
\ax{$\textcircled{\footnotesize{1}}$}
\ax{$\hG \hyp B\to C, \G \seq \D$}
\rlab{\lwk}
\uinf{$\hG \hyp B, B\to C, \G \seq \D$}
\rlab{\rwk}
\uinf{$\hG \hyp B, B\to C, \G \seq \D, C$}
\rlab{\hypcut}
\binf{$\hG \hyp B, \G \seq \D, C$}
\rlab{\hypcut}
\binf{$\hG \hyp  \G \seq \D, C$}
\ax{$\hG \hyp \G \seq \D, B \to C$}
\llab{\lwk}
\uinf{$\hG \hyp C, \G \seq \D, B \to C$}
\ax{$\textcircled{\footnotesize{3}}$}
\llab{\hypcut}
\binf{$\hG \hyp C, \G \seq \D$}
\llab{\hypcut}
\binf{$\hG \hyp \G \seq \D$}
\disp}
\end{center}

$\bullet$ If the cut formula is $\Box B$ we have  

\begin{center}
\resizebox{\textwidth}{!}{
\ax{$\hG \hyp \angSigma, \G \seq \D, \Box B \hyp \Sigma \seq B$
\qquad\qquad\qquad\qquad\qquad}
\noLine
\uinf{$\vdots$ \quad $\{\hG \hyp \angSigma, \G \seq \D, \Box B \hyp B \seq C \}_{C\in\Sigma}$}  
\llab{\rbox}
\uinf{$\hG \hyp \angSigma, \G \seq \D, \Box B$}
\ax{$\hG \hyp \angB, \Box B, \angSigma, \G \seq \D$}
\rlab{\lbox}
\uinf{$\hG \hyp \Box B, \angSigma, \G \seq \D$}
\rlab{\hypcut}
\binf{$\hG \hyp \angSigma, \G \seq \D$}
\disp }
\end{center}

The derivation is converted as follows,
with several applications of \hypcut{} of smaller height
and an admissible application of \subrule.

\begin{center}
\ax{$\hG \hyp \angSigma, \G \seq \D, \Box B \hyp \Sigma \seq B$}  
\ax{$\hG \hyp \Box B, \angSigma, \G \seq \D$}
\rlab{\exwk}
\uinf{$\hG \hyp \Box B, \angSigma, \G \seq \D \hyp \Sigma \seq B$}
\rlab{\hypcut}
\binf{$\textcircled{\footnotesize{4}} \ \hG \hyp \angSigma, \G \seq \D \hyp \Sigma \seq B$}
\disp

\vspace{0.5cm}
\ax{$\hG \hyp \angSigma, \G \seq \D, \Box B$}
\rlab{\hyplwk}
\uinf{$\hG \hyp \str B, \angSigma, \G \seq \D, \Box B$}
\ax{$\hG \hyp \angB, \Box B, \angSigma, \G \seq \D$}
\rlab{\hypcut}
\binf{$\hG \hyp \str B, \str\Sigma, \G \seq \D$}
\rlab{\exwk}
\uinf{$\textcircled{\footnotesize{5}} \ \hG \hyp \str\Sigma, \G \seq \D \hyp \str B, \str\Sigma, \G \seq \D$} 
\disp

\vspace{0.5cm}
\resizebox{\textwidth}{!}{
\ax{\textcircled{\footnotesize{4}}}
\ax{$\hG \hyp \angSigma, \G \seq \D, \Box B \hyp B \seq C$}  
\ax{$\hG \hyp \Box B, \angSigma, \G \seq \D$}
\rlab{\exwk}
\uinf{$\hG \hyp \Box B, \angSigma, \G \seq \D \hyp B \seq C$}
\llab{$\Big\{$}
\rlab{\hypcut $\Big\}_{C \in \Sigma}$}
\binf{$\hG \hyp \angSigma, \G \seq \D \hyp B \seq C$}
\ax{\textcircled{\footnotesize{5}}}
\rlab{\subrule}
\TrinaryInfC{$\hG \hyp \str\Sigma, \G \seq \D \hyp \str\Sigma, \str\Sigma, \G \seq \D$}
\rlab{\angctr}
\uinf{$\hG \hyp \str\Sigma, \G \seq \D \hyp \str\Sigma, \G \seq \D$}
\rlab{\exctr}
\uinf{$\hG \hyp \str\Sigma, \G \seq \D$}
\disp
}
\end{center}
\end{proof}

Given the admissibility of the structural rules and \hypcut\ 
we can prove that the calculi are syntactically complete
with respect to the corresponding axiom systems.

\begin{theorem}[Syntactic completeness]
  If $\vd_{\Estar} A$ then $\seq A$ is derivable in $\HEstar$.
\end{theorem}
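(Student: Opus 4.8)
The plan is to argue by induction on the length of a Hilbert-style derivation of $A$ in $\Estar$, proving in each case that the single-component hypersequent $\seq A$ is derivable in $\HEstar$. The three main ingredients are: admissibility of \hypcut\ (Theorem~\ref{thm:cut}), which lets me compose derivable sequents; the generalised initial hypersequents of Proposition~\ref{prop:extended initial sequents hyp}, giving $\G, A \seq A, \D$ for arbitrary $A, \G, \D$; and the concrete simulations of the modal axioms and rules already displayed in Figure~\ref{fig:hypersequent derivations}. I will also freely use the \hp\ admissibility of weakening from Proposition~\ref{prop:adm-struct-rules}.

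For the base cases I split into propositional and modal axioms. Each axiom of the chosen axiomatisation of \CPL\ is a classical propositional tautology, hence $\seq A$ is derivable using only the propositional rules, which form a complete system for \CPL. For the modal axioms, I read off the derivations in Figure~\ref{fig:hypersequent derivations}, which are available exactly when the corresponding rule belongs to $\HEstar$; modularity of the calculi guarantees this correspondence. These derivations conclude with a sequent whose two sides match the antecedent and consequent of the axiom: for instance \axM\ yields $\Box(A\land B) \seq \Box A$, \axC\ yields $\Box A\land\Box B \seq \Box(A\land B)$, \axT\ yields $\Box A \seq A$, while \axN\ already gives $\seq\Box\top$. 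To turn a derivation of $\G\seq\D$ into the desired theorem $\seq A$, I prepend \hp\ right-weakening and then apply \rto; where the axiom is a negation $\neg(\ldots)=(\ldots)\to\bot$, as for \axP\ and \axD, I first weaken in $\bot$ on the right of the empty-succedent sequent produced by Figure~\ref{fig:hypersequent derivations}, and then apply \rto.

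For the inductive (rule) cases I treat \MP, \RE, and \RDnplus. For \MP, the \ih\ gives derivations of $\seq A\to B$ and $\seq A$; from generalised initial sequents and \lto\ I obtain $A\to B, A\seq B$, cut against $\seq A$ to get $A\to B\seq B$, and cut again against $\seq A\to B$ to reach $\seq B$. For \RE, the \ih\ gives $\seq A\leftrightarrow B$; since $A\leftrightarrow B, A\seq B$ and $A\leftrightarrow B, B\seq A$ are derivable propositionally, two cuts against $\seq A\leftrightarrow B$ produce the premisses $A\seq B$ and $B\seq A$ required by the \RE-derivation of Figure~\ref{fig:hypersequent derivations}; this yields $\Box A\seq\Box B$, and symmetrically $\Box B\seq\Box A$, which I combine via \rto\ and \rland\ into $\seq\Box A\leftrightarrow\Box B$. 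For \RDnplus, the \ih\ gives $\seq\neg(A_1\land\ldots\land A_n)$; cutting against the derivable $\neg(A_1\land\ldots\land A_n), A_1,\ldots,A_n\seq$ produces $A_1,\ldots,A_n\seq$, which is exactly the premiss feeding the \RDnplus-derivation of Figure~\ref{fig:hypersequent derivations}, and a final weakening plus \rto\ turns its conclusion $\Box A_1\land\ldots\land\Box A_n\seq$ into $\seq\neg(\Box A_1\land\ldots\land\Box A_n)$.

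The one genuine obstacle is the interface between the ``formula on the right'' shape $\seq A$ in which theorems are presented and the premiss shapes (such as $A\seq B$ or $A_1,\ldots,A_n\seq$) demanded by the block-manipulating modal derivations of Figure~\ref{fig:hypersequent derivations}; bridging the two directions is precisely what \hypcut\ provides, so the whole argument rests on cut admissibility (Theorem~\ref{thm:cut}) together with the routine observation that all the auxiliary propositional sequents used above reduce, via the propositional rules, to generalised initial sequents. No case analysis on the structure of $A$ beyond that already carried out in Figure~\ref{fig:hypersequent derivations} is needed.
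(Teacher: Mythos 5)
Your proposal is correct and follows essentially the same route as the paper's own proof: show all axioms derivable and all rules admissible by induction on the Hilbert derivation, using the simulations in Figure~\ref{fig:hypersequent derivations}, the generalised initial sequents of Proposition~\ref{prop:extended initial sequents hyp}, and cut admissibility (Theorem~\ref{thm:cut}) to handle \MP, \RE, and \RDnplus. The paper states this only as a sketch; your write-up fills in the routine interface details (weakening plus \rto\ to pass from $\G\seq\D$ to $\seq\fint(\G\seq\D)$, and the cuts producing the premisses $A\seq B$, $B\seq A$, and $A_1,\ldots,A_n\seq$) without deviating from its argument.
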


\begin{proof}
  As usual, we have to show that all axioms of $\Estar$ are derivable
  in $\HEstar$, and that all rules of $\Estar$ are admissible in
  $\HEstar$.  The derivations of the modal axioms and rules are
  displayed in Figure~\ref{fig:hypersequent derivations}.  For the
  derivations of the axioms we implicitly consider
  Proposition~\ref{prop:extended initial sequents hyp}.  For the
  derivation of rule \RE\ we assume that $A \to B$ and $B \to A$ are
  derivable in $\Estar$, and for the derivation of rule \RDnplus\ we
  assume that $\neg(A_1, \ldots, A_n)$ is derivable in
  $\hEstar{D_n^+}$.  Finally, \MP{} is simulated by \hypcut{} in the
  usual way.
\end{proof}



\section{Complexity of proof search}
\label{sec:complexity hyp}

One of the advantages of formal calculi is that they can often be used
to establish complexity-optimal decision procedures for the
corresponding logics via backwards proof search. In this section we
will use our hypersequent calculi to do so. Before considering the
results in detail, we note again that since all the considered logics
have standard sequent calculi, generic \textsf{PSPACE} complexity
results for all the logics follow standard backwards proof search
using these calculi. However, as established in~\cite{Vardi:1989}, in
many cases dropping the axiom \axC{} lowers the complexity of the
logic to \textsf{coNP}. Here we show how the hypersequent calculi give
rise to complexity optimal decision procedures for the logics without
\axC{}, before briefly commenting on the case where \axC{} is present.

\subsection*{Extensions without axiom \axC}

The decision procedures for the logics without the axiom \axC{}
implement backwards proof search on a polynomially bounded
nondeterministic Turing machine with universal choices to handle the
branching caused by rules with several premisses.  Since all the rules
are invertible, we can fix an order in which the rules are applied.
To prevent loops, we employ a \emph{local loop checking} strategy,
stating that a rule is not applied (bottom-up) to a hypersequent $G$,
if at least one of its premisses is trivial in the sense that each of
its components can be derived from a component of the conclusion using
only weakening and contraction. The formal definition is as follows.

\begin{definition}
  An application of a hypersequent rule with premisses $H_1,\dots,H_n$
  and conclusion $G$ satisfies the \emph{local loop checking
    condition} if for each premiss $H_i$ there exists a component
  $\Gamma \seq \Delta$ in $H_i$ such that for no component
  $\Sigma \seq \Pi$ of the conclusion $G$ we have: for all
  $A \in \Gamma$ also $A\in \Sigma$; and for all
  $\str{\Theta} \in \Gamma$ there is a $\str{\Xi} \in \Sigma$ with
  $\set{\Theta} = \set{\Xi}$; and $\set{\Delta} \subseteq \set{\Pi}$.
\end{definition}

Since the rules are cumulative, every application of a rule satisfying
the local loop checking condition adds in each of its premisses at
least one new block or formula to an existing component, or adds a new
component, which is not subsumed by a component of the conclusion. The
following proposition shows that local loop checking does not
jeopardise completeness.

\begin{proposition}
  If a hypersequent is derivable in $\HEstar$ with a derivation of
  height $n$, then it is derivable using a derivation of height $n$ in
  which every rule application satisfies the local loop checking
  condition.
\end{proposition}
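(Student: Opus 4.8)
The plan is to argue by strong induction on the height $n$ of the given derivation, at each stage removing the \emph{lowest} rule application that violates the local loop checking condition and absorbing the resulting redundancy into the height-preserving admissible structural rules of Proposition~\ref{prop:adm-struct-rules}. First I would make the shape of a violation explicit: an application with conclusion $\hG$ and premisses $H_1,\dots,H_k$ \emph{fails} the condition precisely when some premiss $H_i$ is entirely subsumed by $\hG$, meaning that every component $\G\seq\D$ of $H_i$ is set-contained in some component $\Sigma\seq\Pi$ of $\hG$: $\G\subseteq\Sigma$ as sets of formulas, every block of $\G$ is matched by a block of $\Sigma$ with the same underlying set, and $\set\D\subseteq\set\Pi$. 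Thus $H_i$ carries no more set-level content than $\hG$ and can differ from it only by surplus formula multiplicities, duplicated blocks, extra blocks whose set already occurs, and repeated or omitted components.

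For the base case $n=0$ the hypersequent is initial and contains no rule application, so the condition holds vacuously. For the inductive step, consider the lowest rule application $R$, with conclusion $\hG$ and premisses derived by subderivations of height at most $n-1$. If $R$ already satisfies the local loop checking condition, I would apply the \ih\ to each of its subderivations, obtaining loop-checking derivations of height at most $n-1$, and reattach $R$; the result is a loop-checking derivation of $\hG$ of height at most $n$.

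The remaining case is when $R$ violates the condition. Then some premiss $H_i$ is subsumed by $\hG$ and is derived by a subderivation $\mathcal{D}_i$ of height at most $n-1$. The key claim is that from $\mathcal{D}_i$ one can construct a derivation of $\hG$ using only the height-preserving admissible structural rules, hence of height at most $n-1$. Granting this, I would apply the \ih\ to that strictly shorter derivation of $\hG$, obtaining a loop-checking derivation of $\hG$ of height at most $n$; this closes the induction, which is well-founded because in this case the \ih\ is invoked at a smaller height.

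It remains to justify the key claim, which is where the real work lies. Fixing for each component $\G\seq\D$ of $H_i$ a subsuming component $\Sigma\seq\Pi$ of $\hG$, I would first transform each component of $H_i$ into its chosen target: the formulas and succedent formulas of $\Sigma\seq\Pi$ that are missing are supplied by \hyplwk\ and \hyprwk, while the surplus multiplicities present in $\G\seq\D$ are removed by \hyplctr\ and \hyprctr\ (note that set-containment guarantees no formula ever has to be deleted completely). The delicate point is the treatment of blocks, since no rule merges two distinct blocks directly: given a block of $\G$ whose set equals that of a target block $\str\Xi$, I would first rewrite it into $\str\Xi$ by adjusting the multiplicity of each of its formulas with \textsf{Smgl}\ and \textsf{Sctr}\ (possible exactly because their underlying sets agree), turning several blocks of equal set into identical copies of $\str\Xi$, and then contract these copies with \hyplctr\ in its block-form. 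After this, $H_i$ has been turned into a hypersequent whose components are components of $\hG$, possibly with repetitions and possibly lacking some; the missing components of $\hG$ are introduced by external weakening \exwk\ and the repeated ones deleted by external contraction \exctr. Since every rule used is height-preserving admissible by Proposition~\ref{prop:adm-struct-rules}, the construction does not increase the height, which is exactly what the induction requires. The main obstacle is precisely this block bookkeeping: one must check that set-level subsumption of blocks can always be realised through \textsf{Smgl}, \textsf{Sctr}, and block contraction, rather than through an unavailable direct block-merging rule.
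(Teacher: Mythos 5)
Your proof is correct and follows essentially the same route as the paper: induction on the height, handling the bottom-most rule application by case distinction, and in the violating case replacing the rule application by height-preserving admissible structural rules applied to the subsumed premiss before re-invoking the induction hypothesis at a smaller height. The only difference is that you spell out the structural-rule bookkeeping (including the block manipulations via \textsf{Smgl}, \textsf{Sctr} and block-level contraction) that the paper leaves implicit in its appeal to Proposition~\ref{prop:adm-struct-rules}.
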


\begin{proof}
  By induction on the height of the derivation. The zero-premiss rules
  trivially satisfy the local loop checking condition. If the height
  of the derivation is $n+1$, consider the bottom-most rule
  application. If it satisfies the local loop checking condition, we
  apply the induction hypothesis to each of its premisses and are
  done. Otherwise, there is a premiss such that for each of its
  components $\Gamma,\str{\Theta_1},\dots,\str{\Theta_m} \seq \Delta$
  (where $\Gamma$ does not contain any block) there is a component
  $\Sigma \seq \Pi$ of the conclusion $G$ of the derivation with
  $\set{\Gamma} \subseteq \set{\Sigma}$, and for every $i \leq m$
  there is a $\str{\Theta_i'}\in \Sigma$ with
  $\set{\Theta_i} =\set{\Theta_i'}$, and
  $\set{\Delta} \subseteq \set{\Pi}$. Using height-preserving
  admissibility of the structural rules
  (Proposition~\ref{prop:adm-struct-rules}) we thus obtain a
  derivation of $G$ of height $n$, and an appeal to the induction
  hypothesis yields a derivation of height $n$ where every rule
  application satisfies the local loop checking condition.
\end{proof}

Note that in the proof of this proposition, no new rule applications
are added to a derivation, and that the order of rule applications is
preserved in the proof of admissibility of the structural rules
(Proposition~\ref{prop:adm-struct-rules}). Hence given a derivation of
a hypersequent, we can first adjust the ordering of the rules using
invertibility, then remove all rule applications violating the local
loop checking condition. This yields completeness of proof search
under these constraints:

\begin{corollary}
  Proof search in $\HEstar$ with local loop checking and a fixed order
  on the applications of rules is complete.\qed
\end{corollary}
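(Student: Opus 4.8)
The plan is to obtain the corollary by composing two facts already established: the height-preserving invertibility of every rule (Proposition~\ref{prop:adm-struct-rules}) and the preceding proposition, which states that local loop checking can be enforced without increasing the height of a derivation. Since ordinary derivability in $\HEstar$ coincides with validity by soundness and syntactic completeness, it suffices to show that any derivation of a hypersequent $H$ can be transformed into one respecting both the fixed order of rule applications and the local loop checking condition.

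First I would handle the fixed order. Because each rule is height-preserving invertible, whenever a hypersequent is derivable and I apply backwards the rule dictated by the fixed order, each premiss is again derivable with no increase in height. Iterating this observation yields a derivation of $H$ in which the rules are applied precisely in the prescribed order; crucially, this reordering only permutes existing rule applications and introduces no new ones.

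Next I would impose local loop checking by invoking the preceding proposition, which turns a derivation of height $n$ into one of height $n$ in which every rule application satisfies the local loop checking condition. As noted in the remark following that proposition, its proof adds no new rule applications and preserves the order of rule applications, since the height-preserving admissibility of the structural rules on which it relies is itself order-preserving. The two transformations therefore compose cleanly: starting from any derivation of $H$, I first reorder to match the fixed order and then delete the applications violating local loop checking, arriving at a derivation that meets both constraints. This establishes completeness of the restricted proof search.

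The main obstacle is ensuring that the two constraints are mutually compatible---that enforcing one does not silently undo the other. This is exactly where the structural properties pay off: the reordering step acts purely by invertibility and hence cannot create loop-violating applications that the second step would be unable to remove, while the loop-removal step is order-preserving and hence cannot disturb the fixed order already achieved. Since neither transformation increases the height, combining them sends any derivation of $H$ to a derivation of $H$ respecting both constraints, which is precisely what completeness of proof search requires.
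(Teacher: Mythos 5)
Your proposal is correct and follows essentially the same route as the paper: the paper likewise argues (in the remark preceding the corollary) that one first adjusts the order of rule applications using height-preserving invertibility and then removes the applications violating local loop checking via the preceding proposition, noting that the latter adds no new rule applications and preserves the order because the admissibility of the structural rules is order-preserving. Your additional care about the mutual compatibility of the two transformations is exactly the point the paper makes implicitly, so nothing is missing.
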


The proof search algorithm thus applies the rules backwards in an
arbitrary but fixed order, universally chooses one of their premisses
and then recursively checks that this premiss is derivable.  The
procedure is shown in Algorithm~\ref{alg:decision}. In order to
facilitate the countermodel construction for underivable hypersequents
in the next section, we show termination for all considered logics,
even those containing axiom \axC:
\begin{algorithm}[t]
  \KwIn{A hypersequent $\hG$ and the code of a logic $\logic$ 
  }
  \KwOut{``yes'' if $G$ is derivable in $\HL$, a
    hypersequent if it is not.}
  \BlankLine
  \uIf{there is a component $\Gamma \seq \Delta$ in $\hG$ with
    $\bot \in \Gamma$, or 
    $\top \in \Delta$, or
    $\Gamma \cap \Delta \neq \emptyset$}{return ``yes'' and halt
    \nllabel{line:initial}\;}
  \uElseIf{there is an applicable rule}{pick the first applicable
    rule\;
    universally choose a premiss $H$ of this rule application\;
  check recursively whether $H$ is derivable, output the answer and halt\;}
  \Else{return $G$ and halt\nllabel{line:reject}\;}
\caption{Decision procedure for the derivability problem in $\HEstar$
}
\label{alg:decision}
\end{algorithm}

\begin{theorem}
  Algorithm~\ref{alg:decision} terminates for all logics $\HEstar$.
\end{theorem}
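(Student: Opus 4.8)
The plan is to find a well-founded measure on hypersequents that strictly decreases with each backwards rule application permitted by the search strategy, and to combine this with the fact that every rule has finitely many premisses. First I would establish the \emph{subformula property}: inspecting the rules of Figure~\ref{fig:hypersequent rules}, every formula occurring in a premiss is either a formula of the conclusion or a proper subformula of one (the only genuinely new formula is $\top$, introduced by \rulen). Hence, fixing the root hypersequent $\hG$, all formulas that can appear anywhere in a backwards derivation lie in the finite set $\mathsf{Sf}$ consisting of the subformulas of the formulas of $\hG$ together with $\top$ and $\bot$. Since contraction inside blocks is admissible (rule \textsf{Sctr} of Proposition~\ref{prop:adm-struct-rules}) and since the local loop checking condition compares blocks only through their underlying sets $\set{\Theta}$, I would work throughout modulo block multiplicities, abstracting each component $\Gamma \seq \Delta$ to the triple consisting of its set of antecedent formulas ($\subseteq \mathsf{Sf}$), its set of antecedent block-sets (each $\subseteq \mathsf{Sf}$), and its set of succedent formulas ($\subseteq \mathsf{Sf}$). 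The collection $\mathcal U$ of all such abstract component-contents over $\mathsf{Sf}$ is then \emph{finite}, and I order it by the componentwise-inclusion order $\preceq$ that is exactly the subsumption relation of the loop checking definition (antecedent formulas included, each block-set matched by an equal one, succedent included).

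Next I would transfer the search to this finite poset. For a hypersequent $H$ let
\[
  \mathsf{cl}(H) = \{\, u \in \mathcal U : u \preceq \text{content}(c) \text{ for some component } c \text{ of } H \,\}
\]
be the down-set of $\mathcal U$ generated by the contents of the components of $H$. Two facts drive the argument. \emph{Cumulativity}: every rule copies its principal formulas and blocks into each premiss and never deletes a component, so each component of the conclusion $G$ reappears, identical or enlarged, in every premiss $H_i$; this is easily checked rule by rule (it holds equally for \lbox, \rboxm, \rbox, \rulen, \rulec, \rulet, \rulep, \ruled, \ruledaux, \rulednplus\ and the propositional rules). Consequently every conclusion content is $\preceq$ some premiss content, giving $\mathsf{cl}(G) \subseteq \mathsf{cl}(H_i)$. \emph{Strictness}: by the observation following the definition of local loop checking, an applied rule has, in each premiss $H_i$, a component $c_p$ that is \emph{not} subsumed by any conclusion component, i.e.\ $\text{content}(c_p) \not\preceq u$ for every content $u$ of $G$. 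Thus $\text{content}(c_p) \notin \mathsf{cl}(G)$ while $\text{content}(c_p) \in \mathsf{cl}(H_i)$, whence $\mathsf{cl}(G) \subsetneq \mathsf{cl}(H_i)$.

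Then I would conclude as follows. Algorithm~\ref{alg:decision} only ever applies rules satisfying the local loop checking condition, and each recursive call passes from a conclusion to one of its premisses; by the previous paragraph the integer $|\mathsf{cl}(\,\cdot\,)|$ strictly increases at each such step and is bounded above by the finite number $|\mathcal U|$. Hence every branch of the search has length at most $|\mathcal U|$. Since moreover each rule of Figure~\ref{fig:hypersequent rules} has finitely many premisses (the branching rules \rbox, \ruled, \ruledaux, \rulednplus\ are indexed over the finite multisets in their principal blocks), the recursion depth along every universal choice is finite, and the algorithm terminates for all logics $\HEstar$, including those containing \axC. The step I expect to be the main obstacle is controlling the block-expanding rule \rulec, the only rule that can build strictly larger blocks $\str{\Sigma,\Pi}$ and so a priori threaten finiteness of $\mathcal U$: the key point to verify carefully is that, because subsumption and loop checking compare blocks by their underlying \emph{sets} and contraction inside blocks is admissible, once a block-set $\set{\Sigma}\cup\set{\Pi}$ is already present in the active component the corresponding \rulec\ premiss is subsumed and hence blocked. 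This confines the blocks produced in search to the finitely many subsets of $\mathsf{Sf}$, which is precisely what keeps $\mathcal U$ finite and makes $|\mathsf{cl}(\,\cdot\,)|$ a bounded, strictly increasing measure.
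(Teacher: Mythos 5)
Your proof is correct and follows essentially the same route as the paper's: the subformula property bounds the material that can appear, local loop checking prevents duplication and guarantees that each premiss contains a component not subsumed by the conclusion, and termination follows from a strictly increasing quantity bounded by the finitely many possible components. The only difference is cosmetic --- you track the cardinality of the down-set $\mathsf{cl}(H)$ of abstract component contents where the paper tracks the size of the hypersequent itself, and you make explicit the point (left implicit in the paper) that \rulec{} cannot escape the finite bound because blocks are compared as subsets of the subformula closure.
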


\begin{proof}
  Due to the subformula property of the rules, every formula occurring
  in a hypersequent in a run of Algorithm~\ref{alg:decision} is a
  subformula of the input. Moreover, local loop checking prevents the
  duplication of formulas, blocks and components.  Thus, every
  component occurring in a run of the algorithm contains a subset of
  (occurrences of) subformulas of the input both on its antecedent and
  succedent, together with a set of blocks, each containing a subset
  of (occurrences of) subformulas of the input. Since there are only
  finitely many of these, the number of possible components is finite,
  and hence also the number of hypersequents occurring in a run of the
  algorithm. Since every rule application satisfying local loop
  checking strictly increases the size of the hypersequent, each run
  of the algorithm thus halts after finitely many steps.
\end{proof}

For the logics without axiom \axC{}, a closer analysis of the
run time yields the optimal complexity bound:

\begin{theorem}\label{thm:complexity}
  For the logics without \axC{}, Algorithm~\ref{alg:decision} runs in
  $\mathsf{coNP}$, whence for these logics the calculi provide a
  complexity-optimal decision procedure.
\end{theorem}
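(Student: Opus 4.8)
The plan is to show that Algorithm~\ref{alg:decision}, run on the input sequent $\seq A$, is a polynomially time-bounded \emph{co}-nondeterministic procedure, and then to invoke $\textsf{coNP}$-hardness. First I would observe that, once we fix the rule order and adopt local loop checking, the algorithm is complete (by the preceding corollary) and its only form of nondeterminism is \emph{universal}: the step ``pick the first applicable rule'' is deterministic, while ``universally choose a premiss'' branches conjunctively, matching exactly the fact that a conclusion is derivable iff \emph{all} its premisses are. Hence $\seq A$ is derivable iff every branch of the computation accepts, which is precisely the acceptance condition of a $\textsf{coNP}$ machine---\emph{provided} each individual branch has polynomial length with polynomially sized configurations. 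All remaining work is therefore in establishing these polynomial bounds for the logics without \axC.

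Next I would bound the size of a single component. By the subformula property (already used in the termination theorem) every formula occurring in a run is a subformula of $A$, so both the antecedent formulas and the succedent of a component are subsets of the $O(|A|)$ subformulas of $A$, each occurring once thanks to loop checking. The decisive point is that without \axC\ the only block-expanding rule \rulec\ is absent, so every block stays a \emph{singleton} $\str{B}$ with $\Box B \in \mathrm{sub}(A)$; hence a component carries at most $O(|A|)$ such blocks and has total size $O(|A|)$.

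The heart of the argument, and the step I expect to be the main obstacle, is bounding the \emph{number} of components along a branch (naive loop checking only gives exponentially many incomparable components). Here I would exploit that each modal right rule (\rbox, \rboxm, \rulep, \ruledaux, \ruled, \rulednplus) spawns a \emph{pristine} new component built only from its principal block contents and, for \rbox/\rboxm, the principal right formula---crucially \emph{not} from the ambient context $\hG$, $\G$, $\D$. Since without \axC\ the blocks are singletons, such a component is a ``seed'' fixed by a bounded tuple of subformulas of $A$: a pair for \rbox/\rboxm/\ruled, a single formula for \rulep/\ruledaux, and an $n$-tuple for \RDnplus. For fixed $n$ there are only polynomially many possible seeds. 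Moreover, once a seed has been created and saturated it retains on the appropriate sides the very formulas defining it, so a second attempt to create the same seed is blocked by local loop checking (whose correctness rests on the admissible structural rules of Proposition~\ref{prop:adm-struct-rules}); hence each seed is produced at most once, and a branch contains only polynomially many components.

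Finally I would assemble the bound: a hypersequent along a branch consists of polynomially many components of size $O(|A|)$, hence has polynomial size, and since every loop-checked rule application strictly increases the hypersequent size (as shown in the termination theorem), each branch has polynomial length, with every configuration of polynomial size and every step computable in polynomial time. This places derivability, hence validity, in $\textsf{coNP}$. Optimality then follows because each $\Estar$ extends \CPL, so validity is already $\textsf{coNP}$-hard; thus proof search in $\HEstar$ is complexity-optimal. I would close by remarking that this is exactly the point where \axC\ is essential: with \rulec\ present the blocks grow beyond singletons, the seeds are no longer a polynomial family, and one only recovers the \textsf{PSPACE} bound.
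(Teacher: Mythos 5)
Your proposal is correct and follows essentially the same route as the paper's proof: universal-only branching reduces the problem to bounding branch length, component size is linear because blocks stay singletons in the absence of \axC, and the number of components is polynomial because each new component is determined by a bounded tuple of subformulas and local loop checking prevents re-creation. Your explicit ``seed'' framing and the closing \textsf{coNP}-hardness remark are slightly more detailed than the paper's counting of $n^2 + k\cdot n^k$ possible components, but they amount to the same argument.
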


\begin{proof}
  Since the procedure is in the form of a non-deterministic Turing
  machine with universal choices, it suffices to show that every
  computation of this machine has polynomial length. Every application
  of a rule adds either a subformula of its conclusion or a new block
  to one of the components, or adds a new component. Due to local loop
  checking it never adds a formula, block or component which is
  already in the conclusion, so it suffices to calculate the maximal
  size of a hypersequent occurring in proof search for $G$. Suppose
  that the size of $G$ is $n$. Then both the number of components in
  $G$ and the number of subformulas of $G$ are bounded by $n$. Since
  the local loop check prevents the duplication of formulas, each
  component contains at most $n$ formulas in the antecedent and $n$
  formulas in the succedent. Moreover, since we only consider logics
  without the axiom \axC, every newly created block contains exactly
  one formula. Again, due to the local loop checking condition no
  block is duplicated, so every component contains at most $n$
  blocks. Thus every component has size at most $3n$. The procedure
  creates new components from a block and a formula of an already
  existing component using one of the rules \rbox\ and \rboxm, or from
  $\ell$ components using one of the rules \rulep, \ruled, \ruledaux,
  $\mathsf{D}_\ell^+$, with $\ell \leq k$ for a fixed $k$ depending on
  the logic. Hence there are at most $n^2 + k \cdot n^k$ many
  different components which can be created without violating the
  local loop checking condition. Thus every hypersequent occurring in
  the proof contains at most $n+n^2+k \cdot n^k$ many components, each
  of size at most $3n$, giving a total size and thus running time of
  $\mathcal{O}(n^3)$, resp. $\mathcal{O}(n^{k+1})$ for $k > 2$.
\end{proof}

As noted above, Algorithm~\ref{alg:decision} works properly also for
logics with the axiom \axC{}, ensuring in particular termination.
However, hypersequents occurring in a proof of $\hH$ can be
exponentially large with respect to the size of $\hH$.  This is due to
the presence of the rule \rulec\ that, given $n$ formulas
$\Box A_1,\,\ldots,\, \Box A_n$, allows one to build a block for every
subset of $\{A_1, \ldots, A_n\}$.  In this respect, this decision
procedure does not match the \textsf{PSPACE} complexity upper bound
established for these systems by Vardi~\cite{Vardi:1989}. However,
this is not really unexpected, since one of the main appeals of the
hypersequent calculi is that they can be used to directly construct
countermodels for unprovable hypersequents, and in some logics with
\axC{} it is possible to force exponentially large countermodels, in
particular in normal modal logic $\K$~\cite{Blackburn:2001fk}. Hence
for these logics the hypersequents will need to be of exponential
size, suggesting that we need to modify the hypersequent calculi to
obtain complexity-optimal decision procedures.

\subsection*{Logics with axiom \axC}

In order to obtain a $\mathsf{PSPACE}$ decision procedure for logics
with axiom \axC\ we must adopt a different strategy.  Since already
the standard sequent calculi could be used to obtain
complexity-optimal decision procedures in a standard way, we only
sketch the ideas.  Instead of the rules in
Figure~\ref{fig:hypersequent rules}, we consider their
\emph{unkleene'd} -- and non-invertible -- version, \ie\ the ones with
all principal formulas and structures deleted from the premisses.  For
instance \rboxm{}, \rbox{} and \rulec{} are replaced respectively with

\bigskip
\noindent
\resizebox{\textwidth}{!}{
\ax{$\hG \hyp \Gamma \seq \Delta \! \hyp \! \Sigma \seq B$}
\uinf{$\hG \hyp \G, \str{\Sigma} \seq \Box B, \D$}
\disp
  \ 
\ax{$\hG \hyp \Gamma \seq \Delta \hyp \Sigma \seq B$ \ \  
$\{ \hG \hyp \Gamma \seq \Delta \hyp B \seq A \}_{A\in\Sigma}$}
\uinf{$\hG \hyp \G, \str{\Sigma} \seq \Box B, \D$}
\disp
  \ 
\ax{$\hG \hyp \G, \str{\Sigma,\Pi} \seq \Delta$}
\uinf{$\hG  \hyp \G, \str{\Sigma}, \str{\Pi}  \seq \Delta$}
\disp}

\bigskip

\noindent
Call the resulting calculus $\HEstar_-$.  Backwards proof search is
then implemented on an alternating Turing machine by existentially
guessing the last applied rule except for \rulen{}, and universally
checking that all of its premisses are derivable.  To ensure that
\rulen{} is applied if it is present in the system, we stipulate that
it is applied once to every component of the input, and that if the
existentially guessed rule creates a new component, the rule \rulen{}
is applied immediately afterwards to each of its premisses.  Since no
rule application keeps the principal formulas in the premisses, and
since the rule \rulen{} if present is applied exactly once to every
component, there is no need for any loop checking condition.

The calculi $\HEstar_-$ are sound and complete.  Soundness is obvious,
since we can add the missing formulas and structures and recover
derivations in $\HEstar$.  Completeness is seen easiest by simulating
the standard sequent calculi, e.g.~\cite{Lavendhomme:2000}.  We can
show that the calculi $\HEstar_-$ give a $\mathsf{PSPACE}$ upper
bound.

\begin{theorem}\label{thm:compE}
  Backwards proof search in $\HEstar_-$ is in $\mathsf{PSPACE}$.
\end{theorem}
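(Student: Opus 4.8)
The plan is to realise the procedure as an \emph{alternating} Turing machine and to invoke the coincidence of alternating polynomial time with \textsf{PSPACE}; it then suffices to show that every computation path has polynomial length while every configuration has polynomial size. The machine stores a hypersequent, existentially guesses the bottom-most rule together with its principal formulas and structures (treating \rulen\ deterministically, applying it exactly once per component and immediately after any rule that creates a new component, as prescribed), and then branches universally over the premisses of the guessed rule, recursing on each. Each transition inspects only the current hypersequent and writes a single premiss, so every step runs in polynomial time in the size of the current hypersequent; hence the only real task is to bound path length and configuration size by a polynomial in the size $n$ of the input.

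First I would record two structural facts about $\HEstar_-$. (i) \emph{Subformula property}: since the rules are cut-free and carry no context beyond subformulas, every formula occurring in a run -- inside a component, in a succedent, or inside a block -- is one of the at most $n$ subformulas of the input. (ii) \emph{Strict modal descent}: each rule that creates a new component (\rbox, \rboxm, \rulep, \ruled, \ruledaux, \rulednplus) populates that component only with the contents of the relevant block(s) and the immediate subformulas of boxed formulas, all of strictly smaller modal depth than the block or box that triggered the rule; the remaining rules (the propositional ones, $\Lbox$, \rulec, \rulet, and the block-adding \rulen) create no new component and strictly decrease a simple well-founded measure on the active component, namely the total weight of its formulas together with its number of blocks -- where for \rulet\ one uses that dumping a block's contents back into the antecedent strictly lowers their modal depth. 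Consequently a single component can be worked on for at most polynomially many steps before it is either closed or has produced its children.

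The main step is then to bound the number of components generated along one computation path, and here lies the main obstacle: the modal rules are \emph{cumulative} in their context (the reduced principal component and all the remaining components are copied into the premiss), so a naive count lets one component spawn up to $n$ children, each spawning up to $n$ children, and so on through the modal depth $d \le n$, which would make a single path exponentially long. The resolution is to restrict proof search so that, whenever a modal rule creates a fresh component, the machine \emph{commits to one active component} and discards the rest of the context along that branch. This is sound and complete because, as already noted in the excerpt, a hypersequent is derivable exactly when one of its components is; hence deciding derivability of a premiss reduces to existentially selecting a single component and deciding it as an ordinary sequent. Under this strategy the procedure is precisely backwards search in the corresponding \emph{standard}, context-discarding sequent calculus that $\HEstar_-$ simulates, so each path descends a chain of ``worlds'' of length at most $d \le n$, separated by at most $O(n)$ propositional and structural steps by fact (ii); thus every path has length $O(n^2)$ and every intermediate sequent has size $O(n)$.

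Finally I would assemble the bound: polynomially long paths, polynomially sized configurations, and polynomial-time transitions yield an alternating polynomial-time procedure, whence membership in \textsf{PSPACE}. I expect the two flagged points to require the most care: verifying that the regeneration of boxes caused by \rulet\ is absorbed by the strict decrease of modal depth rather than threatening termination, and justifying the context-discarding step from the ``some component is derivable'' property, so that the cumulative rules of $\HEstar_-$ do not inflate a single alternation path beyond polynomial length.
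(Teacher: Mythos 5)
Your overall framework coincides with the paper's: realise the procedure as an alternating Turing machine, bound path length and configuration size polynomially, and invoke $\mathsf{APTIME}=\mathsf{PSPACE}$. Where you diverge is in the step you identify as the main obstacle, and here your diagnosis is off. The rules of $\HEstar_-$ are precisely the \emph{unkleene'd} ones: the principal block and the principal boxed formula are \emph{deleted} from the premisses, so no material is ever duplicated along a single computation path (the several premisses of \rbox{} are separated by universal branching, and rules such as \rulec{} merge rather than copy). Consequently each component-creating inference is paid for by at least one subformula occurrence consumed from an existing component, the total number of components along any path is bounded by roughly $2n$, and each component's weight is bounded by the maximal weight $w$ of an input component; since every rule other than \rulen{} strictly decreases the weight of its active component and \rulen{} fires at most once per component, the paper gets an $\mathcal{O}(n^2\cdot w)$ path length directly, with no modification of the procedure. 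The exponential genealogy tree you fear ($n$ children per component through modal depth $d$) cannot arise, because the children's contents are carved out of the parent rather than copied.

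Your workaround --- committing to a single component and discarding the rest of the hypersequent, justified by ``a derivable hypersequent has a derivable component'' --- is not wrong, but it buys the bound at a price. First, it silently replaces backwards proof search in $\HEstar_-$ by backwards search in the underlying \emph{sequent} calculus, which is exactly the route the paper says already gives the $\mathsf{PSPACE}$ bound ``in a standard way''; the point of the theorem is that the hypersequent procedure itself stays polynomial per path. Second, the component-wise derivability property is only \emph{asserted} in the paper (and only for the kleene'd calculi $\HEstar$), and for your completeness induction to be well-founded you would need it for $\HEstar_-$ and in a height-preserving form (the derivable component of a premiss must have a derivation no taller than that of the premiss); neither is supplied. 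Finally, two small quantitative points: with multi-formula blocks a fresh component $\Sigma\seq B$ can carry total formula weight up to $\mathcal{O}(n^2)$, so your ``$O(n)$ steps between worlds'' and the resulting $O(n^2)$ path length should be relaxed to a larger polynomial (harmless for the theorem), and your measure ``formula weight plus number of blocks'' does correctly handle the regeneration caused by \rulet{}, which is a nice touch. In short: the conclusion is right, but the paper's own counting argument is both simpler and closer to the stated theorem, and your detour rests on an auxiliary lemma that would itself need proof.
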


\begin{proof}
  We need to show that every run of the procedure terminates in
  polynomial time. Assume that the size of the input is $n$. Let the
  \emph{weight} of a component in a hypersequent be the sum of the
  weights of the formulas and blocks occurring in it according to
  Definition~\ref{def:weight hyp}, and suppose that the maximal weight
  of components in the input is $w$.  Then every rule apart from
  \rulen{} decreases the weight of the component active in its
  conclusion. Moreover, a new component is only introduced in place of
  at least one subformula of the input, hence any hypersequent
  occurring in the proof search has at most $n+n$ components. The
  weight of each of these components is at most the maximal weight of
  a component of the input (plus one in the cases with
  \rulen{}). Since the rule \rulen{} is applied at most once to each
  component, it is thus applied at most $n$ times in the total proof
  search. Thus the runtime in total is $\mathcal{O}(n^2\cdot w)$,
  hence polynomial in the size of the input. Thus the procedure runs
  in alternating polynomial time, and thus in $\mathsf{PSPACE}$.
\end{proof}

Thus, the situation of logics with axiom \axC\ can be summarised as
follows.  On the one hand, we have a fully invertible calculus
$\hypcalc\ECstar$ which is terminating but not optimal.  As we shall
see in the next section, this calculus allows for direct extraction of
countermodels from single failed proofs.  On the other hand, we have a
calculus $\hypcalc\ECstar_-$ which is optimal but not invertible,
whence direct extraction of countermodels from single failed proofs is
not possible.  As for many other logics, this illustrates the
existence of a necessary trade-off between the optimal complexity of
the calculus and the direct countermodel extraction.



\section{Countermodel extraction and semantic completeness}
\label{sec:countermodels hyp}

We now prove semantic completeness of the hypersequent calculi, \ie\
every valid hypersequent is derivable.  This amounts to showing that a
non-provable hypersequent has a countermodel.  Countermodels are found
in the \bin{} semantics, as it is better suited for direct
countermodel extraction from failed proofs than the standard
semantics.  The reason is that, in order to define a standard
neighbourhood model, we need to determine \emph{exactly} the truth
sets of formulas: If we want a world $w$ to force $\Box A$, then we
have to make sure that $\ltrset A \rtrset$ belongs to $\N(w)$, thus
$\ltrset A \rtrset$ must be computed.  But this need conflicts with
the fact that failed proofs only provide \emph{partial information}.
Intuitively, countermodel extraction from a saturated hypersequent in
a proof of $\hH$ is based on the natural semantic reading according to
which every component corresponds to a world in the model, and every
formula in the antecedent (respectively in the succedent) of a
component is true (respectively false) in the corresponding world.
But, unless one resorts to a form of the analytic cut rule as
implicitly done in~\cite{Lavendhomme:2000}, it is hardly ever the case
that every subformula of $\hH$ is either in the antecedent or in the
succedent of every component, thus the failed proof does not suffice
to exactly determine the truth set of every subformula.  On the
contrary, in order for a world $w$ to force $\Box A$ in a \bin\ model
it suffices to find a suited pair $(\alpha,\beta)$ such that
$\alpha\subseteq\ltrset A \rtrset\subseteq\W\setminus\beta$.  As we
shall see, such a pair can be extracted direclty from the failed proof
even without knowing exactly the extension of $\ltrset A \rtrset$.

In order to prove semantic completeness we make use of the backwards
proof search strategy based on local loop checking already considered
in Section \ref{sec:complexity hyp} (Algorithm~\ref{alg:decision}).
This strategy amounts to considering the following notion of
saturation, stating that no bottom-up rule application is allowed to
initial sequents, and that a bottom-up application of a rule $\varR$
is not allowed to a hypersequent $\hG$ if $\hG$ already fulfills the
corresponding saturation condition $(\varR)$.

\begin{definition}[Saturated hypersequent]
  \label{def:saturated hyp}
  Let $\hH = \G_1 \seq \D_1 \hyp \ldots \hyp \G_n \seq \D_n$ be 
  a hypersequent occurring in a proof 
  for $\hH'$.
  The saturation conditions associated to each application of a rule
  of $\HEstar$ are as follows:
  $($\init$)$  $\G_i\cap\D_i = \emptyset$.
  $($\lbot$)$  $\bot\notin\G_i$.
  $($\rtop$)$ $\top\notin\D_i$. 
  $($\lto$)$  If $A\to B \in \G_i$, then $A\in\D_i$ or $B\in\G_i$.
  $($\rto$)$  If $A\to B\in\D_i$, then $A\in \G_i$ and $B\in\D_i$.
  $($\lland$)$  If $A\land B \in \G_i$, then $A\in\G_i$ and $B\in\G_i$. 
  $($\rland$)$  If $A\land B\in\D_i$, then $A\in \D_i$ or $B\in\D_i$.
  $($\lbox$)$  If $\Box A\in \G_i$, then $\str A \in \G_i$.
  $($\rbox$)$  If $\str\Sigma, \G \seq \D, \Box B$ is in $\hH$, then there is $\G' \seq \D', B$ in $\hH$
  such that $\set\Sigma\subseteq \G'$, 
  or there is $B, \G' \seq \D', A$ in $\hH$ for some $A\in\Sigma$.
  $($\rboxm$)$  If $\str\Sigma, \G \seq \D, \Box B$ is in $\hH$, then there is $\G' \seq \D', B$ in $\hH$
  such that $\set\Sigma\subseteq \G'$.
  $($\rulen$)$   $\langle \top \rangle \in \G_i$.
  $($\rulec$)$   If $\langle \Sigma \rangle, \langle \Pi \rangle \in \G_i$, then there is $\str\Omega \in \G_i$  
  such that $\set{\Sigma, \Pi} = \set\Omega$. 
  $($\rulet$)$   If $\str\Sigma \in \G_n$, then $\set\Sigma\subseteq\G_n$.  
  $($\rulep$)$   If $\G, \str\Sigma \seq \D$ is in $\hH$, then there is $\G' \seq \D'$ in $\hH$
  such that $\set\Sigma\subseteq\G'$.
  $($\ruledaux$)$   If $\G, \str\Sigma \seq \D$ is in $\hH$, then there is $\G' \seq \D'$ in $\hH$
  such that $\set\Sigma\subseteq\G'$,
  or there is $\G' \seq \D', A$ in $\hH$ for some $A\in\Sigma$.
  $($\ruled$)$   If $\G, \str\Sigma, \str\Pi \seq \D$ is in $\hH$, then there is $\G' \seq \D'$ in $\hH$
  such that $\set{\Sigma, \Pi}\subseteq\G'$,
  or there is $\G' \seq \D', A, B$ in $\hH$ for some $A\in\Sigma, B \in \Pi$.
  $($\rulednplus$)$   If $\G, \str{\Sigma_1}, \ldots, \str{\Sigma_n} \seq \D$ is in $\hH$, then there is $\G' \seq \D'$ in $\hH$
  such that $\set{\Sigma_1, \ldots, \Sigma_n}\subseteq\G'$.
  
  We say that $\hH$ is \emph{saturated} with respect to an application
  of a rule $R$ if it satisfies the saturation condition $(R)$ for
  that particular rule application, and that it is saturated with
  respect to $\HEstar$ if it is saturated with respect to all possible
  applications of any rule of $\HEstar$.
\end{definition}

\begin{proposition}
  If Algorithm~\ref{alg:decision} with input $G$ returns a hypersequent
  $H$, then $H$ is saturated and, for every component $\Gamma \seq
  \Delta$ of $G$, there is a component $\Sigma \seq \Pi$ of $H$ with
  $\set{\Gamma} \subseteq \set{\Sigma}$ and $\set{\Delta} \subseteq
  \set {\Pi}$.
\end{proposition}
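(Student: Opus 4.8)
The plan is to follow the single branch of the computation of Algorithm~\ref{alg:decision} that produces the output $H$. The only way the algorithm returns a hypersequent rather than ``yes'' is the final \textbf{else} branch, so this computation branch is a finite sequence $G = G_0, G_1, \ldots, G_k = H$ in which each $G_{j+1}$ is the premiss (selected by the universal choice at $G_j$) of the \emph{first applicable} rule applied backwards to $G_j$, and where $G_k = H$ is the hypersequent on which the algorithm halts and returns its input: that is, $H$ did not trigger the initiality test (no component has $\bot$ on the left, $\top$ on the right, or a shared formula), and no rule is applicable to $H$. Throughout, ``applicable'' means ``matching the conclusion pattern of a rule of $\HEstar$ and satisfying the local loop checking condition''. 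The proposition then splits into two independent claims: (Sat) $H$ is saturated, and (Pres) every component of $G$ embeds into some component of $H$. I would establish (Sat) from the halting condition at $H$, and (Pres) by tracking components along the whole branch.

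For (Sat) the key is a rule-by-rule biconditional: for each rule $\varR$ of $\HEstar$, the hypersequent $H$ satisfies the saturation condition $(\varR)$ of Definition~\ref{def:saturated hyp} if and only if no backwards application of $\varR$ to $H$ satisfies the local loop checking condition. Unfolding the definitions, an application fails the local loop check exactly when \emph{some} premiss has all of its components subsumed by a component of the conclusion; and since every rule is cumulative, all components of the conclusion reappear in each premiss, so the only components that can fail to be subsumed are those freshly created or extended by $\varR$. Reading this off yields the listed conditions: for single-premiss rules conditions such as $(\lbox)$, $(\rulen)$, $(\rulec)$, $(\rulet)$, $(\rulep)$, $(\rboxm)$ — e.g. the \lbox-application that adds $\str A$ to a component containing $\Box A$ is subsumed precisely when $\str A$ is already present, which is $(\lbox)$ — while for multi-premiss rules the disjunctions in $(\lto)$, $(\rland)$, $(\rbox)$, $(\ruledaux)$, $(\ruled)$ match the several premisses any one of which may be the subsumed one. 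The conditions $(\init)$, $(\lbot)$, $(\rtop)$ are handed to us directly, since $H$ did not trigger the initiality test. As the algorithm halted at $H$, no rule is applicable, so every saturation condition holds and $H$ is saturated.

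For (Pres) I would introduce the embedding preorder on components, writing $(\Gamma \seq \Delta) \preceq (\Sigma \seq \Pi)$ when $\set{\Gamma} \subseteq \set{\Sigma}$ (reading antecedent inclusion with the block-matching convention of the local loop checking definition, i.e. each block $\str{\Theta}$ on the left is matched by a block $\str{\Xi}$ on the right with $\set{\Theta} = \set{\Xi}$) and $\set{\Delta} \subseteq \set{\Pi}$, and extending it to hypersequents by $\hH \preceq \hH'$ iff every component of $\hH$ embeds into some component of $\hH'$. This relation is reflexive and transitive. The core is a one-step claim: for every rule of $\HEstar$ and every premiss $H'$ of an application with conclusion $G'$, one has $G' \preceq H'$. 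This follows immediately from cumulativity, by inspecting the rules: each rule keeps every component of its conclusion in each premiss, either unchanged (the side components $\hG$, and the principal component of the modal branching rules $\rbox$, $\rboxm$, $\rulep$, $\ruled$, $\ruledaux$, $\rulednplus$) or extended by formulas or blocks (as in $\lland$, $\lto$, $\lbox$, $\rulet$, $\rulec$), and it may only add further, brand-new components. Applying this claim along the branch gives $G = G_0 \preceq G_1 \preceq \cdots \preceq G_k = H$, and transitivity yields $G \preceq H$, which is exactly the second assertion.

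The routine but delicate point — and the part I expect to require the most care — is the biconditional in (Sat) for the multi-premiss rules. One must check that ``the application fails local loop checking'', i.e. that \emph{one} entire premiss is subsumed, lines up with the \emph{disjunctive} saturation conditions, rather than demanding that all premisses be subsumed; and one must make sure the subsumption test compares blocks only up to their underlying set ($\set{\Theta} = \set{\Xi}$), so that the block $\str{\Sigma,\Pi}$ freshly produced by \rulec, or the components $\Sigma \seq B$ and $B \seq A$ freshly produced by \rbox, count as ``already present'' in precisely the cases described by $(\rulec)$ and $(\rbox)$. Once this correspondence has been verified rule by rule, both (Sat) and (Pres) follow as above.
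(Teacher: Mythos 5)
Your proposal is correct and follows essentially the same route as the paper's (much terser) proof: saturation is obtained by the contrapositive observation that a violated saturation condition yields a rule application satisfying the local loop check, contradicting the halting of Algorithm~\ref{alg:decision}, and the component-preservation claim follows from the cumulative nature of the rules, which you merely formalise via the embedding preorder chained along the computation branch. The extra biconditional you state for (Sat) is more than is needed (only the direction ``condition violated $\Rightarrow$ rule applicable'' is used), but it is harmless.
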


\begin{proof}
  Saturation of $H$ follows from verifying that if one of the
  saturation conditions is not met, the corresponding rule can be
  applied without violating the local loop checking condition. Since
  Algorithm~\ref{alg:decision} applies all possible rules satisfying the
  local loop checking condition before halting and returning a
  hypersequent, $H$ must be saturated. The second statement follows
  from the cumulative nature of the rules.
\end{proof}

Then, given a saturated hypersequent $\hH$
we can directly construct a countermodel for $\hH$
in the \bin{} semantics
in the following way.

\begin{definition}[Countermodel construction]
  \label{def:countermodel hyp}
  Let $\hH$ be a saturated hypersequent occurring in a proof for
  $\hH'$.  Moreover, let $e: \mathbb N \longrightarrow \hH$ be an
  enumeration of the components of $\hH$.  Given $e$, we can write
  $\hH$ as $\G_1 \seq \D_1 \hyp \ldots \hyp \G_k \seq \D_k$. We call
  $k$ the {\em length} of $\hH$.  The model
  $\M = \langle \W, \N, \V \rangle$ is defined as follows:

  \begin{itemize}
  \item $\W = \{n \mid \G_n \seq \D_n \in \hH \}$.
  \item $\V(p) = \{n \mid p \in \G_n\}$.
  \item For all blocks $\str\Sigma$ appearing in a component
    $\G_m \seq \D_m$ of $\hH$,
    $\Sigma^+ = \{n \mid \set\Sigma \subseteq \G_n\}$ and
    $\Sigma^- = \{n \mid \Sigma \cap \D_n \not=\emptyset\}$.
  \item The definition of $\N$ depends whether the calculus is or not
    monotonic:
    \begin{itemize}
    \item Non-monotonic case:
      $\N(n) = \{ (\Sigma^+, \Sigma^-) \mid \str\Sigma \in \G_n \}$.
    \item Monotonic case:
      $\N(n) = \{ (\Sigma^+, \emptyset) \mid \str\Sigma \in \G_n \}$.
    \end{itemize}
  \end{itemize}
\end{definition}

\begin{lemma}\label{lemma:countermodel hyp} 
  Let $\hH = \G_1 \seq \D_1 \hyp \ldots \hyp \G_k \seq \D_k$ be a
  saturated hypersequent, and $\M$ be the model defined on the basis
  of $\hH$ as in Definition~\ref{def:countermodel hyp}.  Then for
  every $A$, $\angSigma$ and every $n\in\W$, we have:
  \begin{itemize}
  \item if $A\in\G_n$, then $\M, n\Vd A$;
  \item if $\str\Sigma\in \G_n$, then $\M, n\Vd \Box\AND\Sigma$; and
  \item if $A\in\D_n$, then $\M, n\not\Vd A$.
  \end{itemize}
  \noindent Moreover, if the proof is in calculus $\HEXstar$, then
  $\M$ is a \cX-model.
\end{lemma}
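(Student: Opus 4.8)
The plan is to prove the lemma by mutual induction on the weight of formulas, simultaneously establishing the three bullet points (the truth lemma for antecedent formulas, for blocks, and for succedent formulas) together, since they reference each other through the modal clauses. The propositional base cases and inductive steps are routine: for atoms $p$, membership in $\G_n$ gives $n \in \V(p)$ by definition of $\V$; for $\bot$ and $\top$ the saturation conditions $(\lbot)$ and $(\rtop)$ guarantee $\bot\notin\G_n$ and $\top\notin\D_n$, so the relevant cases hold vacuously. For the connectives $\land,\lor,\to$, I would read off the corresponding saturation conditions: e.g.\ if $A\to B\in\G_n$, condition $(\lto)$ gives $A\in\D_n$ or $B\in\G_n$, and the induction hypothesis yields $\M,n\not\Vd A$ or $\M,n\Vd B$, hence $\M,n\Vd A\to B$; dually for $A\to B\in\D_n$ using $(\rto)$.

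\textbf{The key step is the modal case.} For the second bullet, suppose $\str\Sigma\in\G_n$. By construction $(\Sigma^+,\Sigma^-)\in\N(n)$, where $\Sigma^+=\{m\mid\set\Sigma\subseteq\G_m\}$ and $\Sigma^-=\{m\mid\Sigma\cap\D_m\neq\emptyset\}$. I must verify $\Sigma^+\subseteq\ltr\AND\Sigma\rtr\subseteq\Wcompl{\Sigma^-}$. The left inclusion: if $m\in\Sigma^+$ then every formula of $\Sigma$ is in $\G_m$, so by the induction hypothesis (applied to each formula of $\Sigma$, which has smaller weight than $\str\Sigma$) each is forced at $m$, whence $\M,m\Vd\AND\Sigma$. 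The right inclusion: if $m\notin\Wcompl{\Sigma^-}$, i.e.\ $m\in\Sigma^-$, then some formula of $\Sigma$ lies in $\D_m$, so by the induction hypothesis it is not forced at $m$, whence $\M,m\not\Vd\AND\Sigma$, i.e.\ $m\notin\ltr\AND\Sigma\rtr$. This witnesses $\M,n\Vd\Box\AND\Sigma$. For the first and third bullets in the case $A=\Box B\in\G_n$ (resp.\ $\D_n$): if $\Box B\in\G_n$, the $(\lbox)$ condition gives $\str B\in\G_n$, so the block case just proved yields $\M,n\Vd\Box B$ directly. If $\Box B\in\D_n$, I use whichever right-modal saturation condition applies ($(\rbox)$ in the non-monotonic case, $(\rboxm)$ in the monotonic case): for every block $\str\Sigma\in\G_n$ the condition forces either some component where $\set\Sigma\subseteq\G'$ but $B\in\D'$, or (non-monotonic) some component with $B\in\G'$ and some $A\in\Sigma$ in $\D'$; in each subcase the induction hypothesis shows the candidate pair $(\Sigma^+,\Sigma^-)$ fails to satisfy $\alpha\subseteq\ltr B\rtr\subseteq\Wcompl\beta$, so no pair in $\N(n)$ witnesses $\Box B$, giving $\M,n\not\Vd\Box B$.

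\textbf{The main obstacle} will be the ``moreover'' clause asserting that $\M$ is a \cX-model, since it requires checking each semantic condition against the matching saturation condition, and the bookkeeping between the $\Sigma^+/\Sigma^-$ definitions and the set-theoretic closure properties is delicate---especially for \cD{} and \cRDnplus{}, and for the monotonic versus non-monotonic definitions of $\N$. I would argue these one by one. For \cN: saturation $(\rulen)$ gives $\str\top\in\G_n$ for all $n$, and $\top^+=\W$ (every component contains $\top$ vacuously in the needed sense), $\top^-=\emptyset$, so $(\W,\emptyset)\in\N(n)$ uniformly. For \cC: condition $(\rulec)$ guarantees that whenever $\str\Sigma,\str\Pi\in\G_n$ there is $\str\Omega\in\G_n$ with $\set\Omega=\set{\Sigma,\Pi}$, and one checks $\Omega^+=\Sigma^+\cap\Pi^+$ and $\Omega^-=\Sigma^-\cup\Pi^-$, yielding closure. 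For \cT, \cP, \cD, \cRDnplus{} I would translate each saturation condition into the corresponding intersection-nonemptiness or membership statement, using the induction hypothesis to move between syntactic membership and semantic forcing. The delicate point throughout is ensuring that the relevant components $\G'\seq\D'$ referenced in the saturation conditions actually lie in $\hH$ and hence correspond to worlds in $\W$, so that the witnesses are genuine elements of the model; this is exactly what the flat hypersequent structure and the cumulative saturation conditions are designed to provide.
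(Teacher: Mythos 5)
Your proposal matches the paper's proof essentially step for step: the same mutual induction on formulas and blocks, the same use of $\Sigma^+\subseteq\ltr\AND\Sigma\rtr\subseteq\Wcompl{\Sigma^-}$ for the block case, the same reduction of $\Box B\in\G_n$ to the block case via $(\lbox)$-saturation, the same refutation of every candidate pair for $\Box B\in\D_n$ via $(\rbox)$/$(\rboxm)$-saturation, and the same condition-by-condition check of the \cX{} properties. One cosmetic remark: for (\cN) the paper does not claim $\top^+=\W$ (which need not hold, since $\top$ need not occur as a formula in every antecedent); it only needs that the single pair $(\top^+,\top^-)$ lies in every $\N(n)$ with $\top^-=\emptyset$ by $(\rtop)$-saturation, which is all condition (\cN) requires.
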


\begin{proof}
  The first claim is proved by mutual induction on $A$ and
  $\str\Sigma$.

  ($p \in \G_n$) By definition, $n \in \V(p)$. Thus $n \Vd p$.

  ($p \in \D_n$) By saturation of \init, $p \notin\G_n$.  Then
  $n \notin \V(p)$, thus $n\not\Vd p$.

  ($B \land C \in \G_n$) By saturation of \lland, $B \in \G_n$ and
  $C\in \G_n$.  Then by \ih, $n \Vd B$ and $n \Vd C$, thus
  $n \Vd B \land C$.

  ($B \land C \in \D_n$) By saturation of \rland, $B \in \D_n$ or
  $C\in \D_n$.  Then by \ih, $n \not\Vd B$ or $n \not\Vd C$, thus
  $n \not\Vd B \land C$.

  For $A= B \lor C, B \to C$, the proof is similar to the previous
  cases.

  ($\str\Sigma \in\G_n$) In the non-monotonic case we have: By
  definition $(\Sigma^+, \Sigma^-) \in \N(n)$.  We show that
  $\Sigma^+ \subseteq \ltrset \AND\Sigma \rtrset$ and
  $\Sigma^-\subseteq \ltrset \neg\AND\Sigma \rtrset$, which implies
  $n\Vd \Box\AND\Sigma$.  If $m\in\Sigma^+$, then
  $\set\Sigma\subseteq\G_m$.  By i.h. $m\Vd A$ for all $A\in\Sigma$,
  then $m\Vd \AND\Sigma$.  If $m\in\Sigma^-$, then there is
  $B \in \Sigma\cap\D_m$.  By i.h. $m\not\Vd B$, then
  $m\not\Vd \AND\Sigma$.  In the monotonic case the proof is
  analogous.

  ($\Box B \in \G_n$) By saturation of \lbox, $\str B \in\G_n$. Then
  by i.h. $n\Vd \Box B$.

  ($\Box B\in\D_n$) In the non-monotonic case, assume
  $(\alpha, \beta)\in\N(n)$.  Then there is $\str\Sigma\in\G_n$ such
  that $\Sigma^+=\alpha$ and $\Sigma^-=\beta$.  By saturation of rule
  \rbox{}, there is $m\in \W$ such that $\Sigma\subseteq\G_m$ and
  $B\in\D_m$, or there is $m\in\W$ such that
  $\Sigma\cap\D_m\not=\emptyset$ and $B\in\G_m$.  In the first case,
  $m\in\Sigma^+=\alpha$ and by \ih\ $m\not\Vd B$, thus
  $\alpha\not\subseteq\ltrset B \rtrset$.  In the second case,
  $m\in\Sigma^-=\beta$ and by \ih\ $m\Vd B$, thus
  $\beta\cap\ltrset B \rtrset\not=\emptyset$, \ie\
  $\ltrset B \rtrset\not\subseteq\W\setminus\beta$.  Therefore
  $n\not\Vd\Box B$.  The monotonic case is analogous.

  Now we prove that if the failed proof is in $\HEXstar$, then $\M$
  satisfies condition~(\cX).

  (\cM) By definition, $\beta=\0$ for every $(\alpha, \beta)\in\N(n)$.

  (\cN) By saturation of rule \rulen, $\str\top\in\G_n$ for all
  $n\in\W$, thus $(\top^+, \top^-)\in\N(n)$.  Moreover, by saturation
  of \rtop{}, $\top^-=\emptyset$.

  (\cC) Assume that $(\alpha,\beta), (\gamma,\delta) \in \N(n)$.  Then
  there are $\str\Sigma,\str\Pi \in \G_n$ such that
  $\Sigma^+ = \alpha$, $\Sigma^- = \beta$, $\Pi^+ = \gamma$ and
  $\Pi^- = \delta$.  By saturation or rule \rulec, there is
  $\str\Omega\in\G_n$ such that $\set\Omega = \set{\Sigma,\Pi}$, thus
  $(\Omega^+, \Omega^-) \in \N(n)$.  We show that $(i)$
  $\Omega^+ = \alpha\cap\gamma$ and $(ii)$
  $\Omega^- = \beta\cup\delta$.  $(i)$ $m\in\Omega^+$ iff
  $\set\Omega=\set{\Sigma,\Pi}\subseteq\G_m$ iff
  $\set\Sigma \subseteq \G_m$ and $\set\Pi \subseteq \G_m$ iff
  $m\in\Sigma^+ = \alpha$ and $m\in\Pi^+ = \gamma$ iff
  $m\in\alpha\cap\gamma$.  $(ii)$ $m\in\Omega^-$ iff
  $\Omega\cap\D_m \not=\emptyset$ iff
  $\Sigma,\Pi\cap\D_m \not=\emptyset$ iff
  $\Sigma\cap\D_m \not=\emptyset$ or $\Pi\cap\D_m \not=\emptyset$ iff
  $m\in\Sigma^- = \beta$ or $m\in\Pi^- = \delta$ iff
  $m \in\beta\cup\delta$.

  (\cT) If $(\alpha,\beta) \in \N(n)$, then there is
  $\str\Sigma \in \G_n$ such that $\Sigma^+ =\alpha$ and
  $\Sigma^- = \beta$.  By saturation of rule \hypruleT,
  $\set\Sigma\subseteq\G_n$, then $n\in\Sigma^+=\alpha$.

  (\cP) If $(\alpha,\beta) \in \N(n)$, then there is
  $\str\Sigma \in \G_n$ such that $\Sigma^+ =\alpha$ and
  $\Sigma^- = \beta$.  By saturation of rule \hypruleP, there is
  $m\in\W$ such that $\set\Sigma\subseteq\G_m$, then
  $m\in\Sigma^+=\alpha$, that is $\alpha\not=\emptyset$.

  (\cD) Assume $(\alpha,\beta), (\gamma,\delta) \in \N(n)$.  If
  $(\alpha,\beta) \not= (\gamma,\delta)$, then there are
  $\str\Sigma, \str\Pi \in \G_n$ such that
  $\Sigma^+ = \alpha , \Sigma^- = \beta, \Pi^+ = \gamma$ and
  $\Pi^- = \delta$.  If the calculus is non-monotonic, then by
  saturation of rule \ruled{} there is $m\in\W$ such that
  $\set{\Sigma,\Pi}\subseteq\G_m$ or there is $m\in\W$ such that
  $A, B\in\D_m$ for $A\in\Sigma$ and $B\in\Pi$.  In the first case,
  $\set{\Sigma}\subseteq\G_m$ and $\set{\Pi}\subseteq\G_m$, thus
  $m\in\Sigma^+=\alpha$ and $m\in\Pi^+=\gamma$, that is
  $\alpha\cap\gamma\not=\emptyset$.  In the second case,
  $m\in\Sigma^- = \beta$ and $m\in\Pi^- = \delta$, that is
  $\beta\cap\delta\not=\emptyset$.  If in contrast the calculus is
  monotonic, by saturation of \ruledm{} there is $m\in\W$ such that
  $\set{\Sigma,\Pi}\subseteq\G_m$.  Then $\set{\Sigma}\subseteq\G_m$
  and $\set{\Pi}\subseteq\G_m$, thus $m\in\Sigma^+=\alpha$ and
  $m\in\Pi^+=\gamma$, that is $\alpha\cap\gamma\not=\emptyset$.  The
  other possibility is that $(\alpha,\beta) \not= (\gamma,\delta)$.
  Then there is $\str\Sigma \in \G_n$ such that $\Sigma^+ = \alpha$
  and $\Sigma^- = \beta$.  In the non-monotonic case, by saturation of
  \ruledaux{} there is $m\in\W$ such that $\set{\Sigma}\subseteq\G_m$
  or there is $m\in\W$ such that $A\in\D_m$ for some $A\in\Sigma$.
  Then $m\in \Sigma^+ = \alpha$, that is $\alpha\not=\emptyset$, or
  $m\in=\Sigma^- = \beta$, that is $\beta\not=\emptyset$.  In the
  monotonic case we can consider saturation of \rulep{} and conclude
  that $\Sigma^+=\alpha\not=\emptyset$.

  (\cRDnplus) Assume
  $(\alpha_1, \beta_1), \ldots, (\alpha_m, \beta_m)$ be any $m \leq n$
  different bi-neigh-bourhood pairs belonging to $\N(n)$.  Then there
  are $\str{\Sigma_1}, \ldots, \str{\Sigma_m} \in \G_n$ such that
  $\Sigma_i^+ = \alpha_i$ and $\Sigma_i^- = \beta_i$ for every
  $1 \leq i \leq m$.  By saturation of rule \ruledmplus\ (that by
  definition belongs to the calculus $\hypcalc\EDnplusstar$), there is
  $\ell\in\W$ such that
  $\set{\Sigma_1,\ldots,\Sigma_m}\subseteq\G_\ell$.  Then
  $\ell\in\Sigma_1^+=\alpha_1$, \ldots, $\ell\in\Sigma_m^+=\alpha_m$,
  that is $\alpha_1 \cap \ldots \cap \alpha_m \not=\emptyset$.
\end{proof}

Observe that, since all rules are cumulative, $\M$ is also a
countermodel of the root hypersequent $\hH'$.  Moreover, since every
proof built in accordance with the strategy either provides a
derivation of the root hypersequent or contains a saturated
hypersequent, this allows us to prove the following theorem.

\begin{theorem}[Semantic completeness]\label{th:semantic completeness}
  If $\hH$ is valid in all \bin\ models for $\Estar$, then it is
  derivable in $\HEstar$.
\end{theorem}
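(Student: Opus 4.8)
The plan is to establish the contrapositive: if $\hH$ is not derivable in $\HEstar$, then it has a countermodel in the class $\CbEstar$ of \bin\ models for $\Estar$, and hence is not valid there. All the required components are already available, so the work is to assemble them in the right order.

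First I would run the backwards proof search of Algorithm~\ref{alg:decision} on the input $\hH$. By the termination theorem this search halts on every branch, and by the completeness of proof search under local loop checking (the corollary stated just after Algorithm~\ref{alg:decision}) derivability is faithfully detected. Since $\hH$ is assumed to be underivable, the algorithm cannot return ``yes'' along every universal branch; thus on at least one branch it never reaches an initial hypersequent and terminates instead by returning a hypersequent $H$. By the proposition describing the output of Algorithm~\ref{alg:decision}, this $H$ is saturated, and moreover for every component $\G \seq \D$ of the root $\hH$ there is a component $\Sigma \seq \Pi$ of $H$ with $\set{\G} \subseteq \set{\Sigma}$ and $\set{\D} \subseteq \set{\Pi}$.

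Next I would apply the countermodel construction of Definition~\ref{def:countermodel hyp} to the saturated hypersequent $H$, obtaining a model $\M = \langle \W, \N, \V \rangle$. By Lemma~\ref{lemma:countermodel hyp}, $\M$ is a \cX-model, hence a member of $\CbEstar$, and for every world $n$ and every component $\G_n \seq \D_n$ of $H$ we have $\M, n \Vd A$ whenever $A \in \G_n$, and $\M, n \Vd \Box\AND\Sigma$ whenever $\str\Sigma \in \G_n$, while $\M, n \not\Vd A$ whenever $A \in \D_n$. Consequently, at the world $n$ corresponding to a component of $H$, the antecedent of its formula interpretation is satisfied and the succedent is refuted, so $\M, n \not\Vd \fint(\G_n \seq \D_n)$ and therefore $\M \not\models \G_n \seq \D_n$.

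Finally I would transfer this refutation back to the root. Using the subsumption clause of the proposition above, each root component $\G \seq \D$ embeds into some component $\Sigma \seq \Pi$ of $H$; at the corresponding world every formula of $\G$ (and every block of $\G$, via the $\Box\AND$ clause of Lemma~\ref{lemma:countermodel hyp}) is forced while every formula of $\D$ is refuted, whence $\M \not\models \G \seq \D$. Since no component of $\hH$ is valid in $\M$, the hypersequent $\hH$ is not valid in $\M$, and as $\M \in \CbEstar$ this contradicts the hypothesis of validity. I expect the only delicate point to be the bookkeeping in the second paragraph: one must argue that an \emph{underivable} input genuinely forces the nondeterministic, universally branching search to expose a saturated non-initial leaf, which rests on completeness of proof search with local loop checking together with termination, both already established, and that cumulativity of the rules lets the countermodel built for $H$ serve directly as a countermodel for the original root $\hH$.
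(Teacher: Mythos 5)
Your proposal is correct and follows essentially the same route as the paper: assume underivability, use termination and completeness of proof search under local loop checking to obtain a saturated hypersequent, build the \bin\ countermodel via Definition~\ref{def:countermodel hyp} and Lemma~\ref{lemma:countermodel hyp}, and transfer the refutation to the root by cumulativity of the rules. The paper states this more tersely (leaving the subsumption bookkeeping to the remark preceding the theorem), but the argument is the same.
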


\begin{proof}
  Assume $\hH$ not derivable in $\hypcalc\Estar$.  Then there is a
  failed proof of $\hH$ in $\hypcalc\Estar$ containing some saturated
  hypersequent $\hH'$.  By Lemma~\ref{lemma:countermodel hyp}, we can
  construct a \bin\ countermodel of $\hH'$, whence a countermodel of
  $\hH$, that satisfies all properties of \bin\ models for $\Estar$.
  Therefore $\hH$ is not valid in every \bin\ model for $\Estar$.
\end{proof}

Since the countermodels constructed for underivable hypersequents are
based on the saturated hypersequents returned by
Algorithm~\ref{alg:decision}, and since the latter are finite, we
immediately obtain the finite model property for all the logics. For
the logics without \axC{} we can further bound the \emph{size} of the
models, defined in the following way.

\begin{definition}
  The \emph{size} of a bi-neighbourhood or standard model $\M= \langle \W, \N,
  \V\rangle$ is defined as $\size{\M} := |\W| + \sum_{w \in \W} |\N(w)|$.
\end{definition}

\begin{corollary}\label{cor:polysize-bin}
  The logics without \axC{} have the \emph{polysize model property}
  wrt.\ bi-neighbourhood models, i.e., there is a polynomial $p$ such
  that if a formula $A$ of size $n$ is satisfiable, then it is satisfiable in a
  bi-neighbourhood model of size at most $p(n)$.
\end{corollary}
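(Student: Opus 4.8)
The plan is to reduce the polysize model property to the complexity analysis already carried out for Algorithm~\ref{alg:decision} together with the countermodel construction of Definition~\ref{def:countermodel hyp}. First I would observe that a formula $A$ of size $n$ is satisfiable in some \bin\ model for $\Estar$ if and only if the single-component hypersequent $A \seq$ is \emph{not} valid in all such models: the formula interpretation of $A \seq$ is $\neg A$, so falsifying this hypersequent at a world $w$ means exactly $\M, w \Vd A$. Now, since $A \seq$ is not valid, by the soundness of $\HEstar$ (Theorem~\ref{th:soundness hyp}) it cannot be derivable, so by the argument underlying semantic completeness (Theorem~\ref{th:semantic completeness}) a run of Algorithm~\ref{alg:decision} on input $A \seq$ fails and returns a saturated hypersequent $\hH'$.

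Next I would apply Lemma~\ref{lemma:countermodel hyp} to $\hH'$: the model $\M = \langle \W, \N, \V\rangle$ built from $\hH'$ as in Definition~\ref{def:countermodel hyp} is a \bin\ \cX-model, and because all rules are cumulative it is also a countermodel of the root $A \seq$. In particular some world of $\M$ forces $A$, so $A$ is satisfiable in $\M$. It then remains only to bound $\size{\M} = |\W| + \sum_{w \in \W}|\N(w)|$ polynomially. By construction the worlds of $\M$ are exactly the components of $\hH'$, and for each world $w$ the set $\N(w)$ contains one pair per block $\str\Sigma$ occurring in the corresponding component, so $|\N(w)|$ equals the number of blocks of that component.

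Both quantities are already controlled by the estimates in the proof of Theorem~\ref{thm:complexity}. For a logic without \axC\ every hypersequent arising in proof search on an input of size $n$ has at most $n + n^2 + k\cdot n^k$ components, where $k$ is the fixed bound (depending on the logic) on the arity of the branching rules \rulep, \ruled, \ruledaux, and $\mathsf{D}_\ell^+$; moreover, without \axC\ every block contains exactly one formula and local loop checking forbids duplicate blocks, so each component carries at most $n$ distinct blocks and hence $|\N(w)| \leq n$. Combining these gives $\size{\M} \leq (n + n^2 + k\cdot n^k)(1 + n) = \mathcal{O}(n^{k+1})$, which is polynomial in $n$ and furnishes the required polynomial $p$.

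The main obstacle is essentially bookkeeping rather than a genuine difficulty: one must check that the bounds established for the \textsf{coNP}/complexity result translate \emph{exactly} into the model-size measure $\size{\M}$. The two points that make this work, and that rely crucially on the absence of \axC, are that each neighbourhood $\N(w)$ draws only on the blocks of a single component—so that the sum $\sum_{w}|\N(w)|$ introduces no extra blow-up beyond multiplying the number of worlds by the per-component block bound—and that every block is a singleton, which keeps the number of blocks per component bounded by the number $n$ of subformulas of the input. In the presence of \axC\ the rule \rulec\ can create a block for every subset of the boxed formulas, breaking both bounds, which is exactly why the polysize bound is claimed only for logics without \axC.
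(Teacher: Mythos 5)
Your proposal is correct and follows essentially the same route as the paper's proof: both reduce the model-size bound to the component and block counts established in the proof of Theorem~\ref{thm:complexity}, using the fact that worlds of the extracted countermodel correspond to components and neighbourhood pairs to blocks, yielding $\mathcal{O}(n^{k+1})$. Your explicit preliminary step linking satisfiability of $A$ to non-derivability of $A \seq$ is a harmless (and slightly more careful) elaboration of what the paper leaves implicit.
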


\begin{proof}
  Given a underivable formula of size $n$, from the proof of
  Thm.\ref{thm:complexity} we obtain that the saturated hypersequent
  used for constructing the countermodel has $\mathcal{O}(n^k)$ many
  components, each containing $\mathcal{O}(n)$ many blocks for $k$
  depending only on the logic. Since the worlds of the countermodel
  correspond to the components, and the neighbourhoods for each world
  are constructed from the blocks occurring in that component, this
  model has at most $\mathcal{O}(n^k)$ many worlds, each with a
  neighbourhood of size at most $\mathcal{O}(n)$. Hence the size of
  the model is $\mathcal{O}(n^{k+1})$.
\end{proof}

As the above construction shows, we can directly extract a \bin{}
countermodel from any failed proof.  If we want to obtain a
countermodel in the standard semantics we then need to apply the
transformations presented in Section~\ref{section:standard semantics}.
In principle, the rough transformation (Proposition~\ref{prop:rough
  transformation}) can be embedded into the countermodel construction
in order to directly construct a neighbourhood model, we just need to
modify the definition of $\N(n)$ in Definition~\ref{def:countermodel
  hyp} as follows:

\begin{center}
  $\N(n) = \{\gamma \mid \textup{there is }  \str\Sigma \in \G_n
  \textup{ such that } \Sigma^+ \subseteq \gamma \subseteq \W\setminus \Sigma^- \}$.
\end{center}

\noindent
However, in this way we might obtain a model with a larger
neighbourhood function than needed.  In contrast, there is no obvious
way to integrate the finer transformation of
Proposition~\ref{prop:finer transformation} into the countermodel
construction, since it relies on the evaluation of formulas in an
already existing model. But it does lead to smaller models:

\begin{corollary}
  The logics without \axC{} have the \emph{polysize model property}
  wrt.\ standard models.
\end{corollary}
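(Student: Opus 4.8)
The plan is to transfer the polysize bi-neighbourhood countermodels of Corollary~\ref{cor:polysize-bin} into the standard semantics by means of the \emph{finer} transformation of Proposition~\ref{prop:finer transformation}, whose whole point is precisely that it does not inflate the neighbourhood function. Concretely, given a satisfiable formula $A$ of size $n$, Corollary~\ref{cor:polysize-bin} supplies a \bin\ model $\Mb = \langle \W, \Nb, \V\rangle$ with $|\W| = \mathcal{O}(n^k)$ satisfying $A$, where $k$ depends only on the logic. I would take $\s$ to be the set of subformulas of $A$, which is closed under subformulas and has $|\s| = \mathcal{O}(n)$, and apply Proposition~\ref{prop:finer transformation} to obtain a standard model $\Mn = \langle \W, \Nn, \V\rangle$ on the same carrier $\W$ and valuation $\V$. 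Since the equivalence part of that proposition gives $\Mn, w \Vd B \iff \Mb, w \Vd B$ for every $B \in \s$ and every $w \in \W$, the model $\Mn$ still satisfies $A$.

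First I would bound the size. By construction $\Nn(w) = \{\ltrset B\rtrset_\bi \mid \Box B \in \s \text{ and } \Mb,w \Vd \Box B\}$, so $|\Nn(w)|$ is at most the number of boxed subformulas occurring in $\s$, that is $\mathcal{O}(n)$, uniformly in $w$. Hence $\size{\Mn} = |\W| + \sum_{w \in \W} |\Nn(w)| = \mathcal{O}(n^k) + \mathcal{O}(n^k)\cdot \mathcal{O}(n) = \mathcal{O}(n^{k+1})$, which is the desired polynomial bound.

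Next I would verify that $\Mn$ belongs to the right class $\Cstlogic$. Here the exclusion of \axC\ is essential: the only hypothesis in Proposition~\ref{prop:finer transformation} that forces a closure of $\s$ capable of blowing up its cardinality is the one for \cC\ (namely $\Box B, \Box C \in \s \Rightarrow \Box(B \land C) \in \s$), and dropping \axC\ removes this requirement. The conditions \cT, \cP, \cD, and \cRDnplus\ are preserved by the proposition without any extra hypothesis on $\s$, while \cN\ only needs $\Box\top \in \s$; I would therefore take $\s$ to be the subformula closure of $A$ together with $\Box\top$ (and $\top$), which is still subformula-closed and of size $\mathcal{O}(n)$, so the size estimate above is unaffected.

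The hard part will be the \emph{monotonic} systems (\EM\ and its deontic extensions without \axC), since Proposition~\ref{prop:finer transformation} does \emph{not} preserve condition \cM: a standard model for a monotonic logic must be supplemented, yet the supplemented variant $\Nn'(w)$ discussed after that proposition is in general of exponential size, because upward closure over a polynomial carrier produces exponentially many neighbourhoods. I would address this by exploiting that in the monotonic case the truth condition for $\Box$ reduces to the $\exists\forall$-reading, so that it suffices to retain the minimal neighbourhoods $\ltrset B\rtrset_\bi$ rather than their full upward closure; the non-supplemented $\Nn$ then serves as a polysize presentation of the intended supplemented model. Making this precise — arguing that such a presentation legitimately witnesses satisfiability in the monotonic logic while keeping the size polynomial, rather than silently changing the semantics from membership to containment — is the delicate step on which the monotonic instance of the corollary hinges, whereas the non-monotonic instance follows directly from the three preceding paragraphs.
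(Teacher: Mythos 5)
Your proof follows the paper's own route exactly: apply the finer transformation of Proposition~\ref{prop:finer transformation} to the polysize \bin\ model supplied by Corollary~\ref{cor:polysize-bin}, taking $\s$ to be the subformula closure of the input, and bound $|\Nn(w)|$ by the number of boxed formulas in $\s$, i.e.\ by $\mathcal{O}(n)$ uniformly in $w$. For the non-monotonic systems this is complete and is precisely the paper's argument; your extra care in adjoining $\Box\top$ to $\s$ for condition (\cN), and your observation that only the (\cC) hypothesis forces a closure of $\s$ that could inflate its cardinality (which is exactly why \axC\ is excluded), are correct refinements that the paper leaves implicit.

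The monotonic worry in your last paragraph is genuine, but it is a gap in the \emph{paper's} three-line proof as much as in yours, so you should not count it against your attempt. Proposition~\ref{prop:finer transformation} does not produce a supplemented model, yet under Definition~\ref{def:standard semantics} together with condition (\axM) a standard model for a monotonic logic must be supplemented; an upward-closed $\N(w)$ containing some $\alpha$ with $|\W\setminus\alpha|=c$ has at least $2^c$ elements, so $\size{\Mn}$, which counts $\sum_{w}|\N(w)|$, is in general exponential in the number of worlds. The paper silently uses the non-supplemented $\Nn$. The standard repair is the one you sketch: either measure a supplemented model by its generating (minimal) neighbourhoods, or equivalently adopt the $\exists\forall$ truth condition for $\Box$ in the monotonic case, which the paper itself notes is the usual reading there; under either convention your $\Nn$ is a legitimate polynomial-size witness. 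In short, you reproduced the paper's proof and, in addition, correctly located the one point it glosses over.
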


\begin{proof}
  Given a satisfiable formula of size $n$, from
  Corollary~\ref{cor:polysize-bin} we obtain a bi-neighbourhood model
  with $\mathcal{O}(n)$ worlds. Since
  the transformation of Proposition~\ref{prop:finer transformation}
  constructs neighbourhoods from sets of truth sets of subformulas of
  the input, the size of $\Nn(w)$ is at most $n$ for each world
  $w$. Hence the total size of the standard model is polynomial in the
  size $n$ of the formula.
\end{proof}

An alternative way of obtaining countermodels in the standard
neighbourhood semantics is proposed in \cite{Lavendhomme:2000}. It
basically consists in forcing the proof search procedure to determine
exactly the truth set of each formula.  To this aim, whenever a
sequent representing a new world is created, the sequent is saturated
with respect to all disjunctions $A\lor\neg A$ such that $A$ is a
subformula of the root sequent.  This solution is equivalent to using
{\em analytic cut} and makes the proof search procedure significantly
more complex than the one given here.

Below we show some examples of countermodel extraction from failed
proofs, both in the bi-neighbourhood and in the standard neighbourhood
semantics.  The latter are obtained by applying the transformation in
Proposition~\ref{prop:finer transformation}.

\begin{example}[Proof search for axiom \axM{} in $\HE$ and countermodels]
  The following is a failed proof of $\Box (p \land q) \seq \Box p$ in
  $\HE$.

\vspace{0.4cm}

\noindent
\resizebox{\textwidth}{!}{
\ax{derivable}
\noLine
\uinf{$\str{p \land q}, \Box (p \land q) \seq \Box p \hyp p \land q \seq p$}
\ax{derivable}
\noLine
\uinf{$\ldots \hyp p \seq p \land  q, p$}
\ax{\textbf{saturated}}
\noLine
\uinf{$\str{p \land q}, \Box (p \land q) \seq \Box p \hyp p \seq p \land  q, q$}
\rlab{\rland}
\binf{$\str{p \land q}, \Box (p \land q) \seq \Box p \hyp p \seq p \land  q$}
\rlab{\rbox}
\binf{$\str{p \land q}, \Box (p \land q) \seq \Box p$}
\rlab{\lbox}
\uinf{$\Box (p \land q) \seq \Box p$}
\disp}

\vspace{0.4cm}

\noindent
\underline{Bi-neighbourhood countermodel}.  Let us consider the
following enumeration of the compontents of the saturated hypersequent
$\hH$: $1 \mapsto \str{p \land q}, \Box (p \land q) \seq \Box p$; and
$2 \mapsto p \seq p \land q, q$.  According to the construction in
Definition~\ref{def:countermodel hyp}, from $\hH$ we obtain the
following countermodel $\Mb = \langle \W, \Nb, \V\rangle$:
$\W = \{1, 2\}$.  $\V(p) = \{2\}$ and $\V(q)= \emptyset$.
$\Nb(2)=\emptyset$ and $\Nb(1)= \{(\emptyset, \{2\})\}$, as
$\Nb(1)= \{(p\land q^+, p\land q^-)\}$ and $p\land q^+ = \emptyset$,
$p \land q^-=\{2\}$.  We have $1 \Vd \Box(p \land q)$ because
$\0 \subseteq \ltrset p \land q \rtrset = \0 \subseteq \W
\setminus\{2\}$, and $1 \not\Vd \Box p$ because
$\ltrset p \rtrset = \{2\} \not\subseteq\W\setminus\{2\}$.  Then
$1 \not\Vd \Box(p\land q) \to \Box p$.

\smallskip
\noindent
\underline{Neighbourhood countermodel}.  We consider the set
$\s = \{\Box(p\land q) \to \Box p, \Box(p\land q), \Box p, p\land q,
p, q \}$ of the subformulas of $\Box(p\land q) \to \Box p$.  By
applying the transformation in Proposition~\ref{prop:finer
  transformation} to the \bin\ model $\Mb$, we obtain the standard
model $\Mn = \langle \W, \Nn, \V\rangle$, where $\W$ and $\V$ are as
in $\Mb$, and $\Nn(1) = \{ \emptyset \}$, since
$\Nn(1) = \{\ltrset p \land q \rtrset_{\Mb}\}$ and
$\ltrset p \land q \rtrset_{\Mb} = \emptyset$.
\end{example}

\begin{example}[Proof search for axiom \axK{} in $\HEC$ and countermodels]
  If Figure~\ref{fig:failed proof K hyp} we find a failed proof of
  $\Box (p \to q) \to (\Box p \to \Box q)$ in $\HEC$.  The
  countermodels are as follows.

  \smallskip
  \noindent
  \underline{Bi-neighbourhood countermodel}.  We consider the
  following enumeration of the compontents of the saturated
  hypersequent $\hH$:

  \begin{center}
    \begin{tabular}{r c l}
      $1$ & $\mapsto$ & $\Box (p \to q), \Box p,\langle p \to q \rangle, \langle p \rangle,  \langle p \to q, p \rangle \seq \Box q$. \\

      $2$ & $\mapsto$ & $q \seq p$. \\

      $3$ & $\mapsto$ & $p\to q \seq q, p$. \\
    \end{tabular}
  \end{center}

  \noindent
  According to the construction in Definition~\ref{def:countermodel
    hyp}, from $\hH$ we obtain the following countermodel
  $\Mb = \langle \W, \Nb, \V\rangle$: $\W = \{1, 2, 3\}$.
  $\V(p) = \emptyset$ and $\V(q)= \{2\}$.  $\Nb(2)=\Nb(3)=\emptyset$,
  and $\Nb(1)= \{(\emptyset, \{2, 3\}), (\{3\}, \emptyset)\}$, as
  $\Nb(1)= \{(p^+, p^-), (p\to q^+, p\to q^-), (p, p\to q^+, p, p\to
  q^-)\}$ and $p^+ = \emptyset$, $p^-=\{2, 3\}$, $p\to q^+ = \{3\}$,
  $p \to q^-=\emptyset$, $p, p\to q^+ = \emptyset$,
  $p, p \to q^-=\{2, 3\}$.

  Then we have $1 \Vd \Box(p \to q)$ because
  $\{3\} \subseteq \ltrset p \to q \rtrset = \W \subseteq
  \W\setminus\0$; and $x \Vd \Box p$ because
  $\0 \subseteq \ltrset p \rtrset = \0 \subseteq \W\setminus\{2, 3\}$;
  but $x \not\Vd \Box q$ because
  $\{3\} \not\subseteq \ltrset q \rtrset = \{2\}$ and
  $\ltrset q \rtrset = \{2\} \not\subseteq \W \setminus \{2, 3\}$,
  whence $x \not\Vd \Box (p \to q) \to (\Box p \to \Box q)$.  Observe
  that $\Mb$ is a \cC-model since
  $(\0 \cap \{3\}, \{2, 3\} \cup \0) = (\emptyset, \{2, 3\})$.

  \smallskip
  \noindent
  \underline{Neighbourhood countermodel}.  By logical equivalence we
  can restrict the considered set of formulas $\s$ to
  $\{\Box(p \to q), \Box p, \Box q, p \to q, p, q, \Box((p \to q)
  \land q), \Box(p \land q)\}$.  By the transformation in
  Proposition~\ref{prop:finer transformation}, from $\Mb$ we obtain
  the standard model $\Mn = \langle \W, \Nn, \V\rangle$, where $\W$
  and $\V$ are as in $\Mb$, and
  $\Nn(1) = \{\ltrset p \to q \rtrset_{\Mb}, \ltrset p \rtrset_{\Mb},
  \ltrset p \land q \rtrset_{\Mb}\} = \{\W, \0\}$.
\end{example}

Finally, the next example shows the need of rule \ruledaux\ for the
calculus $\hypcalc\ED$ and its non-monotonic extensions from the point
of view of the countermodel extraction.

\begin{example}[Proof search for $\neg\Box\top$ in $\hypcalc\ED$ and countermodel]
  Let us consider the following failed proof of $\Box\top \seq$ in
  $\hypcalc\ED$.

\begin{center}
\begin{small}
\ax{\textbf{saturated}}
\noLine
\uinf{$\Box\top, \str\top \seq \ \hyp \top \seq$}
\ax{}
\rlab{\rtop}
\uinf{$\Box\top, \str\top \seq \ \hyp \seq \top$}
\rlab{\ruledaux}
\binf{$\Box\top, \str\top \seq$}
\rlab{\lbox}
\uinf{$\Box\top \seq$}
\disp
\end{small}
\end{center}

\noindent
Consider the saturated hypersequent and establish
$1 \mapsto \Box\top, \str\top \seq$, and $2 \mapsto \top \seq$.  We
obtain the \bin\ countermodel $\M = \langle \W, \N, \V\rangle$, where
$\W = \{1, 2\}$; $\N(1)= \{(\top^+, \top^-)\} = \{(\{2\}, \0)\}$; and
$\N(2)=\emptyset$.  This is a \cD-model and falsifies $\neg\Box\top$,
as $1 \Vd \Box \top$.

Now imagine that the rule \ruledaux\ does not belong to the calculus
$\hypcalc\ED$.  In this case the proof would end with
$\Box\top, \str\top \seq$, as no other rule is backwards applicable to
it.  From this we would get the model
$\M' = \langle \W', \N', \V'\rangle$, where $\W' = \{1\}$ and
$\N'(1)= \{(\0, \0)\}$, which falsifies $\neg\Box\top$ but is
\emph{not} a \cD-model.
\end{example}

\begin{figure}
\resizebox{\textwidth}{!}{
\begin{tikzpicture}
\node (d) [label=90: 
\ax{derivable}
\noLine
\uinf{$\ldots  \hyp p\to q, q \seq q$}
\disp] at (3.2, 5.2) {};

\node (c) [label=90: 
\ax{derivable}
\noLine
\uinf{$\ldots \hyp q \seq p \to q$}
\disp] at (5.7, 4.6) {};

\node (b) [label=90: 
\ax{\textbf{saturated}}
\noLine
\uinf{$\Box (p \to q), \Box p,\langle p \to q \rangle, \langle p \rangle,  \langle p \to q, p \rangle \seq \Box q \hyp q \seq p \hyp p\to q \seq q, p$}
\ax{\quad}
\rlab{\lto}
\binf{$\Box (p \to q), \Box p,\langle p \to q \rangle, \langle p \rangle,  \langle p \to q, p \rangle \seq \Box q \hyp q \seq p \hyp p\to q \seq q$}
\ax{\quad}
\rlab{\rbox}
\binf{$\Box (p \to q), \Box p,\langle p \to q \rangle, \langle p \rangle,  \langle p \to q, p \rangle \seq \Box q \hyp q \seq p$}
\disp] at (0, 3.3) {};

\node (a) [label=90: 
\ax{derivable}
\noLine
\uinf{$\ldots \hyp p \to q, p \seq q$}
\ax{\qquad \qquad \qquad \qquad  \qquad \qquad \qquad \qquad \qquad}
\ax{derivable}
\noLine
\uinf{$\ldots \hyp q \seq p \to q$}
\rlab{\rbox}
\TrinaryInfC{$\Box (p \to q), \Box p, \langle p \to q \rangle, \langle p \rangle,  \langle p \to q, p \rangle \seq \Box q$}
\rlab{\rulec} 
\uinf{$\Box (p \to q), \Box p,\langle p \to q \rangle, \langle p \rangle \seq \Box q$}
\rlab{\lbox}
\uinf{$\Box (p \to q), \Box p,\langle p \to q \rangle \seq \Box q$}
\rlab{\lbox}
\uinf{$\Box (p \to q), \Box p \seq \Box q$}
\disp] at (0, 0.3) {};

\node(a') at (-0.2, 2.5) {};
\node(b') at (-0.2, 3.5) {};
\path[-] (a') edge [dashed] node {} (b');
\node(b2') at (6.1, 4) {};
\node(c') at (6.1, 4.8) {};
\path[-] (b2') edge [dashed] node {} (c');
\node(b3') at (4.1, 4.5) {};
\node(d') at (4.1, 5.4) {};
\path[-] (b3') edge [dashed] node {} (d');
\end{tikzpicture}
}
\caption{\label{fig:failed proof K hyp} Failed proof of axiom \axK\ in $\HEC$.}
\end{figure}

\subsection*{Relational countermodels for regular logics}

We now show that from failed proofs in $\HMCstar$ it is also possible
to directly extract relational countermodels of the non-derivable
formulas (cf.~Definition~\ref{def:relational models classical nnmls}).
This possibility not only makes the extraction of the relational
models more efficient (as it prevents to go through the transformation
of a previously extracted \bin\ model), but also shows the
independency of the calculus from any specific semantic choice.
Relational models are extracted from failed proofs in $\HMCstar$ as
follows.

\begin{definition}[Relational countermodel]\label{def:relational countermodel hyp}
  Let $\hH = \G_1 \seq \D_1 \hyp \ldots \hyp \G_k \seq \D_k$ be a
  saturated hypersequent occurring in a proof for $\hH'$ in
  $\HMCstar$.  For every $1 \leq n \leq k$, we say that a block
  $\str\Sigma$ is \emph{maximal} for $n$ if $\str\Sigma\in\G_n$, and
  for every $\str\Pi\in\G_n$, $\set\Pi \subseteq \set\Sigma$.  It is
  easy to see that by saturation of rule \rulec\ every component
  either contains a maximal block or does not contain any block at
  all.  On the basis of $\hH$ we define the relational model
  $\M = \langle \W, \Wimp, \R, \V \rangle$, as follows.

  \begin{itemize}
  \item $\W$, $\V$, and for every block $\str\Sigma$, $\Sigma^+$, are
    defined as in Definition~\ref{def:countermodel hyp}.

  \item $\Wimp$ is the set of worlds $n$ such that $\G_n$ does not
    contain any block.

  \item For every $n \in \Wnorm$, $\R(n) = \Sigma^+$, where
    $\str\Sigma$ is a maximal block for $n$.
  \end{itemize}
\end{definition}

Observe that if $\str\Sigma$ and $\str\Pi$ are two maximal blocks for
$n$, then $\Sigma^+ = \Pi^+$, whence $\R(n)$ is unique for every $n$.

\begin{lemma}
  Let $\hH = \G_1 \seq \D_1 \hyp \ldots \hyp \G_k \seq \D_k$ be a
  saturated hypersequent occurring in a proof for $\hH'$ in
  $\HMCstar$, and $\M$ be the model defined on the basis of $\hH$ as
  in Definition~\ref{def:relational countermodel hyp}.  Then for every
  formula $A$ and block $\str\Sigma$ we have: if $A\in\G_n$, then
  $\M, n\Vd A$; if $\str\Sigma\in \G_n$, then
  $\M, n\Vd \Box\AND\Sigma$; and if $A\in\D_n$, then $\M, n\not\Vd A$.
  Moreover, $\M$ is a relational model for $\EMC$, and if $\HMCstar$
  contains rule \rulen, then $\M$ is a standard Kripke model for
  normal modal logic $\K$.
\end{lemma}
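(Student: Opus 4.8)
The plan is to mirror the structure of the proof of Lemma~\ref{lemma:countermodel hyp}, establishing the three-part truth claim by a single mutual induction on the formula $A$ and the block $\str{\Sigma}$, and then reading off the frame conditions. The propositional cases ($p \in \G_n$, $p \in \D_n$, and the boolean connectives) are literally the same as in the bi-neighbourhood construction, since $\W$, $\V$ and the relevant saturation conditions (for \init, \lland, \rland, and those for $\lor$ and $\to$) are unchanged; so I would only spell out the modal cases. Throughout I would lean on the observation recorded just after Definition~\ref{def:relational countermodel hyp}: by saturation of \rulec, every component $\G_n \seq \D_n$ either contains no block at all (so $n \in \Wimp$) or contains a maximal block $\str{\Omega}$, with $\R(n) = \Omega^+$ independent of the choice of maximal block.

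For the block case, suppose $\str{\Sigma} \in \G_n$. Then $\G_n$ contains a block, so $n \in \Wnorm$ and $n$ has a maximal block $\str{\Omega}$ with $\set{\Sigma} \subseteq \set{\Omega}$. For any $v \in \R(n) = \Omega^+$ we have $\set{\Omega} \subseteq \G_v$, hence $\set{\Sigma} \subseteq \G_v$, so the induction hypothesis gives $v \Vd A$ for each $A \in \Sigma$ and therefore $v \Vd \AND\Sigma$; since $n \notin \Wimp$, this yields $n \Vd \Box\AND\Sigma$. The case $\Box B \in \G_n$ then follows immediately: saturation of \lbox\ gives $\str{B} \in \G_n$, and applying the block case to $\str{B}$ gives $n \Vd \Box B$.

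The crucial case is $\Box B \in \D_n$, where I would split on whether $n$ is normal. If $n \in \Wimp$ then $\G_n$ has no block and $n \not\Vd \Box B$ holds by definition, since impossible worlds falsify every boxed formula. If $n \in \Wnorm$, let $\str{\Omega}$ be the maximal block for $n$, so $\R(n) = \Omega^+$. Applying the saturation condition $($\rboxm$)$ to the component $\G_n \seq \D_n$ with principal block $\str{\Omega}$ and $\Box B \in \D_n$, I obtain a component $\G_m \seq \D_m$ with $\set{\Omega} \subseteq \G_m$ and $B \in \D_m$. The first fact gives $m \in \Omega^+ = \R(n)$, and the induction hypothesis applied to $B \in \D_m$ gives $m \not\Vd B$; hence $n$ has an $\R$-successor falsifying $B$, so $n \not\Vd \Box B$. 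I expect this to be the main obstacle, precisely because it is where maximality is essential: one must apply $($\rboxm$)$ to the \emph{maximal} block in order to land the witness $m$ inside $\R(n) = \Omega^+$, rather than merely inside $\Sigma^+$ for some arbitrary block $\str{\Sigma} \in \G_n$.

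Finally, for the frame conditions I would check that the construction always yields a well-defined relational model with non-normal worlds in the sense of Definition~\ref{def:relational models classical nnmls}: $\W \neq \emptyset$ as $\hH$ has at least one component, $\R(n) = \Omega^+ \subseteq \W$ on normal worlds (taking $\R(n) = \emptyset$ on $\Wimp$), and $\V$ is a valuation. Since every such model validates \axM\ and \axC, $\M$ is a relational model for $\EMC$. If moreover $\HMCstar$ contains \rulen, then saturation of $($\rulen$)$ forces $\str{\top} \in \G_n$ for every $n$, so every component contains a block and $\Wimp = \emptyset$; the model is then an ordinary Kripke model, which validates \K.
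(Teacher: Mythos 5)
Your proof is correct and follows essentially the same route as the paper's: the same mutual induction on $A$ and $\str\Sigma$, the same use of a maximal block to land the witness of $($\rboxm$)$ inside $\R(n)=\Omega^+$, and the same case split on $n\in\Wimp$ for $\Box B\in\D_n$. The only difference is that you explicitly spell out the final frame-condition check (well-definedness, validation of \axM\ and \axC, and $\Wimp=\emptyset$ under \rulen), which the paper's printed proof leaves implicit; this is a harmless and welcome addition.
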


\begin{proof}
  The truth lemma is proved by mutual induction on $A$ and
  $\str\Sigma$.  As usual we only consider modal formulas and blocks.

  ($\str\Sigma \in \G_n$) Then $n \in \Wnorm$. Moreover, given a block
  $\str\Pi$ maximal for $n$, $\set\Sigma\subseteq\set\Pi$.  We show
  that $\R(n) = \Pi^+ \subseteq \ltrset \AND \Sigma \rtrset$, which
  implies $n \Vd \Box\AND \Sigma$.  If $m \in \Pi^+$, then
  $\set\Pi \subseteq \G_m$, then for all $A \in \Pi$, $A \in \G_m$,
  and by \ih, $m \Vd A$.  Thus for all $A \in \Sigma$, $m \Vd A$, that
  is $m \Vd \AND \Sigma$.

  ($A \in \G_n$) By saturation of \lbox, $\str A \in \G_n$. Then by
  \ih, $n \Vd \Box A$.

  ($A \in \D_n$) If there is no block in $\G_n$, then $n \in \Wimp$,
  and by definition $n \not\Vd \Box A$.  Otherwise, let $\str\Sigma$
  be a maximal block for $n$.  Then by saturation of rule \rboxm\
  there is $m \in \W$ such that $\set\Sigma\subseteq\G_m$ and
  $A \in \D_m$.  Thus $m \in \Sigma^+ = \R(n)$, and by \ih,
  $m \not\Vd A$, therefore $\R(n) \not\subseteq \ltrset A \rtrset$,
  which implies $n \not\Vd \Box A$.
\end{proof}

As examples, we show failed proofs of axiom \axfour\ in $\HMC$ and
$\HMCN$ and the extracted countermodels.

\begin{example}[Proof search for axiom \axfour\ in $\HMC$ and countermodels]\label{ex: failed proof 4 in HMC}
  A failed proof of \axfour\ in $\HMC$ is as follows.

\begin{center}
\ax{\textbf{saturated}}
\noLine
\uinf{$\Box p, \str p \seq \Box\Box p \hyp p \seq \Box p$}
\rlab{\rboxm}
\uinf{$\Box p, \str p \seq \Box\Box p$}
\rlab{\lbox}
\uinf{$\Box p \seq \Box\Box p$}
\disp
\end{center}

\noindent
Let: $1 \mapsto \Box p, \str p \seq \Box\Box p$; and
$2 \mapsto p \seq \Box p$.

\smallskip
\noindent
\underline{Bi-neighbourhood countermodel}.  From
Definition~\ref{def:countermodel hyp} we obtain the following model
$\Mb = \langle \W, \N, \V \rangle$: $\W = \{1, 2\}$. $\V(p) = \{2\}$.
$\N(1) = \{(p^+, p^-)\} = \{(\{2\}, \0)\}$, and $\N(2) = \0$.  We have
$\{2\} \subseteq \ltrset p \rtrset=\{2\} \subseteq \W\setminus\0$,
then $1 \Vd \Box p$, but
$\{2\} \not\subseteq \ltrset \Box p \rtrset=\{1\}$, then
$1 \not\Vd \Box\Box p$.

\smallskip
\noindent
\underline{Relational countermodel}.  From
Definition~\ref{def:relational countermodel hyp} we obtain the
following model $\Mr = \langle \W, \Wimp, \R, \V \rangle$:
$\W = \{1, 2\}$ and $\Wimp = \{2\}$. $\V(p) = \{2\}$.
$\R(1) = p^+ = \{2\}$.  Since $2 \Vd p$ we have $1 \Vd \Box p$.
Moreover, since $2 \in \Wimp$, by definition $2 \not \Vd \Box p$, then
$1 \not \Vd \Box\Box p$.
\end{example}

\begin{example}[Proof search for axiom \axfour\ in $\HMCNT$ and
  countermodels]\label{ex: failed proof 4 in HMCNT}
  A failed proof of \axfour\ in $\HMCNT$ is as follows.

\begin{center}
\begin{small}
\ax{\textbf{saturated}}
\noLine
\uinf{$\Box p, \str p, \str \top, \str{p, \top}, p, \top \seq \Box\Box p \hyp p, \top, \str \top \seq \Box p \hyp \top, \str \top \seq p$}
\rlab{\rulen}
\uinf{$\Box p, \str p, \str \top, \str{p, \top}, p, \top \seq \Box\Box p \hyp p, \top, \str \top \seq \Box p \hyp \top \seq p$}
\rlab{\rboxm}
\uinf{$\Box p, \str p, \str \top, \str{p, \top}, p, \top \seq \Box\Box p \hyp p, \top, \str \top \seq \Box p$}
\rlab{\rulen}
\uinf{$\Box p, \str p, \str \top, \str{p, \top}, p, \top \seq \Box\Box p \hyp p, \top \seq \Box p$}
\rlab{\rboxm}
\uinf{$\Box p, \str p, \str \top, \str{p, \top}, p, \top \seq \Box\Box p$}
\rlab{\rulet}
\uinf{$\Box p, \str p, \str \top, \str{p, \top} \seq \Box\Box p$}
\rlab{\rulec}
\uinf{$\Box p, \str p, \str \top \seq \Box\Box p$}
\rlab{\rulen}
\uinf{$\Box p, \str p \seq \Box\Box p$}
\rlab{\lbox}
\uinf{$\Box p \seq \Box\Box p$}
\disp
\end{small}
\end{center}

\noindent
Let:
$1 \mapsto \Box p, \str p, \str \top, \str{p, \top}, p, \top \seq \Box\Box p$;
$2 \mapsto p, \top, \str \top \seq \Box p$;
and
$3 \mapsto \top, \str \top \seq p$.

\smallskip
\noindent
\underline{Bi-neighbourhood countermodel}.  From
Definition~\ref{def:countermodel hyp} we obtain the following model
$\Mb = \langle \W, \N, \V \rangle$: $\W = \{1, 2, 3\}$.
$\V(p) = \{1,2\}$.\\
$\N(1) = \{(p^+, p^-), (\top^+, \top^-), (p,\top^+;p, \top^-)\}
= \{(\{1,2\}, \{3\}), (\{1, 2, 3\}, \0)\}$. \\
$\N(2) = \N(3) = \{(\top^+, \top^-)\} = \{(\{1, 2, 3\}, \0)\}$.  It is
easy to see that $\Mb$ is a \cM\cC\cN\cT-model.  Moreover,
$1 \Vd \Box p$ and $1 \not\Vd \Box\Box p$, then
$1 \not\Vd \Box p \to \Box\Box p$.

\smallskip
\noindent
\underline{Relational countermodel}.  From
Definition~\ref{def:relational countermodel hyp} we obtain the
following model $\Mr = \langle \W, \Wimp, \R, \V \rangle$:
$\W = \{1, 2, 3\}$ and $\Wimp = \0$.  $\V(p) = \{1,2\}$.
$\R(1) = (p,\top)^+ = \{1, 2\}$; and
$\R(2) = \R(3) = \top^+ = \{1, 2, 3\}$.  Since $1 \Vd p$ and
$2 \Vd p$, $1 \Vd \Box p$.  But $3 \not\Vd p$, then
$2 \not\Vd \Box p$, thus $1 \not\Vd \Box\Box p$.  Then we have
$1 \not\Vd \Box p \to \Box\Box p$.  Notice that $\R$ is reflexive but
is not transitive, as $1 \R 2$, $2 \R 3$, but not $1 \R 3$.
\end{example}



\section{On translations for the classical cube}
\label{sec:translations}

Different proof-theoretical frameworks can be used for specifying
axiomatic systems, and there are many possible reasons for preferring
one over the other.  The best known (and maybe simplest) formalism for
analytic proof systems is Gentzen's {\em sequent
  calculus}~\cite{gentzen35}. But simplicity often implies less
comprehensiveness, and here it is not different: although being an
ideal tool for proving meta-logical properties, sequent calculus is
not expressive enough for constructing analytic calculi for many
logics of interest. Moreover, sequent rules seldom reflect the
semantic characterisation of the logic.  As a result, many new
formalisms have been proposed over the last 30 years, including
hypersequent calculi, but also {\em labelled
  calculi}~\cite{DBLP:books/daglib/0003059}. Hypersequents and
labelled sequents are very different in nature since, in the latter,
the basic objects are usually formulas of a more expressive language
reflecting the logic's semantics.  Hypersequent systems, in contrast,
are generalisations of sequent systems, carrying a more syntactic
characteristic, and are sometimes considered an {\em antagonist}
formalism w.r.t. labelled
calculi~\cite{DBLP:conf/tableaux/CiabattoniMS13}.

In the present work, we showed how hypersequents adequately reflect
the semantics of bi-neighbourhood models.  In~\cite{Dalmonte:2018} the
labelled sequent calculi $\oneEs$, were developed for all the logics
of the classical cube. These calculi also reflect the bi-neighbourhood
semantics, and are fully modular.  In this section, we will show that,
in the case of NNML, hypersequents and labels are far from being
antagonists. In fact, we show that they are strongly related, by
presenting translations between $\HEstar$
(Figure~\ref{fig:hypersequent rules}) restricted to the classical cube
and the labelled calculi $\oneEs$, presented next.

The language $\lanLS$ of labelled calculi extends $\lan$ with a set
$WL=\{x, y, z, ...\}$ of \emph{world labels}, and a set
$NL=\{a, b, c, ...\}$ of \emph{neighbourhood labels}.  We define
\emph{positive neighbourhood terms}, written $a_1\ldots a_n$, as
finite multisets of neighbourhood labels. Moreover, if $t$ is a
positive term, then $\ov t$ is a negative term. Negative terms $\ov t$
cannot be proper subterms, in particular cannot be negated.  The term
$\tau$ and its negative counterpart $\ov{\tau}$ are neighbourhood
constants.  We will represent by $\vart$ either $t$ or $\ov{t}$.

Intuitively, positive (resp.~negative) terms represent the
intersection (resp.~the union) of their constituents. Moreover, $t$
and $\ov{t}$ are the two members of a pair of neighbourhoods in
bi-neighbourhood models. Observe that the operation of overlining a
term cannot be iterated: it can be applied only once for turning a
positive term into a negative one. The operations of composition and
substitution over positive terms are defined as usual
(see~\cite{Dalmonte:2018}).

The formulas of $\lanLS$ are of the following kinds and respective
intuitive interpretation
\[\begin{array}{lclcl}
    \phi &::=& x: A & & A \mbox{ is satisfied by } x\\
         & & \mid t\ufor A & & A \mbox{ is satisfied by every world in the neighbourhood } t\\
         & & \mid \ov{t}\efor A & & A \mbox{ is satisfied by some world in the neighbourhood }\ov{t}\\
         & &  \mid x\in \vart & &  x \mbox{ is a world in the neighbourhood } \vart\\
         & & \mid t \nbr x & &  \mbox{the pair } (t,\ov{t}) \mbox{ is a bi-neighbourhood of } x. 
  \end{array}\]
\emph{Sequents} are pairs $\G  \Rightarrow \D$ of multisets of formulas of $\lanLS$.
The fully modular calculi $\oneEs$ are defined by the rules in Figure~\ref{figure:rules labelled calculi}.

\begin{figure}
\centering
\fbox{\begin{minipage}{39.7em} 
\vspace{0.1cm}
\begin{small}
\noindent
\textbf{Propositional rules}

\vspace{0.3cm}

\noindent
\begin{tabular}{l c r}
\multicolumn{3}{l}{\vspace{0.3cm}
\ax{}
\llab{\init}
\uinf{$x:p, \Gamma \Rightarrow \Delta, x:p$}
\disp
\hfill
\ax{}
\llab{\lbot}
\uinf{$x:\bot, \Gamma \Rightarrow \Delta$}
\disp
\hfill
\ax{}
\llab{\rtop}
\uinf{$\Gamma \Rightarrow \Delta, x:\top$}
\disp} \\

\vspace{0.3cm}

\ax{$\Gamma \Rightarrow \Delta, x:A$}
\ax{$x: B, \Gamma \Rightarrow \Delta$}
\llab{\lto}
\binf{$x: A\to B, \Gamma \Rightarrow \Delta$}
\disp
&&
\ax{$x:A, \Gamma \Rightarrow \Delta, x:B$}
\llab{\rto}
\uinf{$\Gamma \Rightarrow \Delta, x: A\to B$}
\disp \\

\vspace{0.3cm}

\ax{$x: A, x: B, \Gamma \Rightarrow \Delta$}
\llab{\lland}
\uinf{$x: A\land B, \Gamma \Rightarrow \Delta$}
\disp
&&
\ax{$\Gamma \Rightarrow \Delta, x:A$}
\ax{$\Gamma \Rightarrow \Delta, x:B$}
\llab{\rland}
\binf{$\Gamma \Rightarrow \Delta, x: A\land B$}
\disp \\

\vspace{0.3cm}

\ax{$x: A, \Gamma \Rightarrow \Delta$}
\ax{$x: B, \Gamma \Rightarrow \Delta$}
\llab{\llor}
\binf{$x: A\lor B, \Gamma \Rightarrow \Delta$}
\disp
&&
\ax{$\Gamma \Rightarrow \Delta, x : A, x: B$}
\llab{\rlor}
\uinf{$\Gamma \Rightarrow \Delta, x: A\lor B$}
\disp
 \\
\end{tabular}

\vspace{0.4cm}

\noindent
\textbf{Rules for the classical cube}

\vspace{0.3cm}

\noindent
\begin{tabular}{l}
\vspace{0.3cm}
\AxiomC{$a \nbr x, a \ufor A, \Gamma \Rightarrow \Delta, \n a \efor A$}
\llab{\lbox{}}
\rlab{($a!$)}
\UnaryInfC{$x:\Box A, \Gamma \Rightarrow \Delta$}
\DisplayProof  \\

\AxiomC{$t \nbr x, \Gamma \Rightarrow \Delta, x:\Box A, t \ufor A$}
\AxiomC{$t \nbr x, \n{t} \efor A, \Gamma \Rightarrow \Delta, x:\Box A$}
\llab{\rbox{}}
\BinaryInfC{$t \nbr x, \Gamma \Rightarrow \Delta, x:\Box A$}
\DisplayProof  \\
\end{tabular}

\vspace{0.4cm}

\noindent
\begin{tabular}{l c c c c r}
\vspace{0.3cm}
\AxiomC{}
\llab{\ruleM}
\UnaryInfC{$t \nbr x, y \in \n{t}, \Gamma \Rightarrow \Delta$}
\DisplayProof
&&
\AxiomC{$\tau \nbr x, \Gamma \Rightarrow \Delta$}
\llab{\ruleNtau}
\rlab{($x$ in $\G\cup\D$)}
\UnaryInfC{$\Gamma \Rightarrow \Delta$}
\DisplayProof
&&
\AxiomC{$ts \nbr x, t \nbr x, s \nbr x, \Gamma \Rightarrow \Delta$}
\llab{\ruleC}
\UnaryInfC{$t \nbr x, s \nbr x, \Gamma \Rightarrow \Delta$}
	\DisplayProof \\
\end{tabular}

\vspace{0.4cm}

\noindent
\textbf{Rules for local forcing}

\vspace{0.3cm}

\noindent
\begin{tabular}{l c c r}
\vspace{0.3cm}
\AxiomC{$x \in t, x:A, t \ufor A, \Gamma \Rightarrow \Delta$}
\llab{\lufor}
\UnaryInfC{$x \in t, t \ufor A, \Gamma \Rightarrow \Delta$}
	\DisplayProof
&\quad \ \ &&
\AxiomC{$y\in t, \Gamma \Rightarrow \Delta, y:A$}
\llab{\rufor}
\rlab{($y!$)}
\UnaryInfC{$\Gamma \Rightarrow \Delta,  t\ufor A$}
	\DisplayProof   \\

\AxiomC{$y\in\ov{t}, y:A, \Gamma \Rightarrow \Delta$}
\llab{\lefor}   
\rlab{($y!$)}
\UnaryInfC{$\ov{t} \efor A, \Gamma \Rightarrow \Delta$}
	\DisplayProof 
&&&
\AxiomC{$x\in\ov{t}, \Gamma \Rightarrow \Delta, x:A, \ov{t} \efor A$}
\llab{\refor}
\UnaryInfC{$x\in\ov{t}, \Gamma \Rightarrow \Delta, \ov{t} \efor A$}
	\DisplayProof  \\
\end{tabular}

\vspace{0.4cm}

\noindent
\textbf{Rules for neighbourhood terms}

\vspace{0.3cm}

\noindent
\begin{tabular}{l l l}
\vspace{0.5cm}
\AxiomC{$x\in\p{t}, x\in\p{s}, x\in\p{ts}, \Gamma \Rightarrow \Delta$}
\llab{\ruledec}
\UnaryInfC{$x\in\p{ts}, \Gamma \Rightarrow \Delta$}
\DisplayProof 
&&
\AxiomC{$x\in\n{t}, x\in\n{ts}, \Gamma \Rightarrow \Delta$}
\AxiomC{$x\in\n{s}, x\in\n{ts}, \Gamma \Rightarrow \Delta$}
\llab{\ruleovdec}
\BinaryInfC{$x\in\n{ts}, \Gamma \Rightarrow \Delta$} 
\DisplayProof  \\

\AxiomC{}
\llab{\ruleNovtau}
\UnaryInfC{$x \in \overline\tau, \Gamma \Rightarrow \Delta$}
\DisplayProof
 \\
\end{tabular}

\vspace{0.4cm}

\noindent
Application conditions: \hfill \quad \\
$y$ is fresh in \rufor and \lefor, $a$ is fresh in \lbox, and $x$ occurs in the conclusion of \ruleNtau. \hfill \quad  \\

\end{small}
\end{minipage}}
\caption{\label{figure:rules labelled calculi} Rules of labelled sequent calculi $\oneEs$.} 
\end{figure}

We are interested in the translation of {\em derivations} between
$\HEstar$ and $\oneEs$.  We start by explaining some choices made
thorough this work.

\paragraph{Hypersequents.} As already shown, hypersequents present an
elegant and modular solution for addressing non-normal
modalities. This is mainly due to two facts: (1) negative occurrences
of $\Box$-ed formulas are organized into blocks; and (2) components
are independent once created.  Hence proof search avoids the
non-determinism often generated by component {\em communication}
rules~\cite{Avron}, establishing a straight-forward proof-search
procedure. This is reflected in the left and right rules for the
$\Box$

\vspace{.2cm}
\noindent
\resizebox{\textwidth}{!}{
\ax{$\hG \hyp \G, \Box A, \str A \seq \D$}
\llab{\lbox}
\uinf{$\hG \hyp \G, \Box A \seq \D$}
\disp
\;
\ax{$\hG \hyp \G, \langle\Sigma\rangle \seq \Box B,\D \hyp \Sigma \seq B$}
\ax{$\{  \hG \hyp \G, \langle\Sigma\rangle \seq \Box B,\D \hyp B \seq A  \}_{A\in\Sigma}$}
\llab{\rbox} 
\binf{$\hG \hyp \G,  \langle\Sigma\rangle \seq \Box B, \D$}
\disp 
}

\vspace{.2cm}
\noindent
Reading rules from the conclusion upwards, the \lbox\ rule substitutes
a $\Box$ with a block. Blocks can then gather more formulas only by
the applications of the \rulec\ rule. Applications of the rule \rbox\
closes this proof cycle, creating new components involving only
right-boxed and blocked formulas, and immediately closing the
communication between components.

This determines a proof search procedure, where propositional rules
can be eagerly applied until only blocks remain and a
non-deterministic choice is triggered, where blocs/boxed formulas
should be combined for producing new components. The invertibility of
rules attenuates such non-determinism: allowing the generation of all
possible combinations avoids the need for backtracking.

But not only that: our calculi are greatly inspired and supported by
the choice of the {\em semantics}.

\paragraph{Bi-neighbourhood.} As pointed out in the introduction, in
the bi-neighbour-hood semantics the elements of a pair provide
positive and negative support for a modal formula. This is fully
captured by the box rules: the \lbox\ rule places formulas into fresh
neighbourhoods, the rule \rulec\ joins such formulas into
intersections of neighbourhoods and the \rbox\ rule carries the
formulas of a chosen neighbourhood together with a right-boxed formula
into a fresh world belonging to this neighbourhood.

These ideas can be also interpreted using {\em labels}.

\paragraph{Labels.} The labelled counterparts for the box rules are

\vspace{0.2cm}
\noindent
\resizebox{\textwidth}{!}{
\ax{$a \nbr x, a \ufor A, \Gamma \Rightarrow \Delta, \n a \efor A$}
\llab{\lbox{}}
\uinf{$x:\Box A, \Gamma \Rightarrow \Delta$}
\disp
\;
\AxiomC{$t \nbr x, \Gamma \Rightarrow \Delta, x:\Box A, t \ufor A$}
\AxiomC{$t \nbr x, \n{t} \efor A, \Gamma \Rightarrow \Delta, x:\Box A$}
\llab{\rbox{}}
\BinaryInfC{$t \nbr x, \Gamma \Rightarrow \Delta, x:\Box A$}
\DisplayProof}

\vspace{0.2cm}
\noindent
Starting from a labelled sequent $S$ placed in a component labelled by
a world-variable $x_1$, the \lbox\ rule over $\Box A_{ij}^1$ creates a
fresh neighbourhood-variable $a_{ij}^1$ of $x_1$, placing $A_{ij}^1$
in it. The rule \rulec\ then joins formulas
$A_{ij}^1, j=1,\ldots, s_{i1}$ into blocks
$\str {\Sigma_i^1}, i=1,\ldots, l_1$, given by the intersection of the
neighbourhoods $a_{ij}^1$, represented by
$a_i^1=a_{i1}^1\ldots a_{is_{i1}}^1$. That is, the blocks
$\str {\Sigma_i^1}$ formed from $S$ carry the information of boxed
formulas, grouped into subsets $\iset{a^1_i}{i}$, determined by
neighbourhood intersections.

The \rbox\ rule then non-deterministically choses one of such blocks
$a_i^1$ for $i\in\{1,\ldots l_1\}$ and a right-boxed formula $B$,
creating a fresh world-variable $x_2$ in $a_i^1$ and placing $B$ and
$A_{ij}^1$, for all $ j=1,\ldots, s_{i1}$, under this world. Observe
that the left and right premises of rule \rbox\ reflect the positive
and negative support for the modal formula $B$.

This strongly highlights the similarities between the hyper and label
formalisms.  We define next a translation from hypersequents to
labelled sequents.

\begin{definition}
  Let $\Sigma_i^k=\{A_{ij}^k\}$,
  $i=1,\ldots, l_k; j=1,\ldots,s_{ik}; k=1,\ldots n$ and fix a
  hypersequent enumeration (see~Definition~\ref{def:countermodel
    hyp}). The translation $\hl{\cdot}{\mathsf{a}_n}{\mathsf{x}_n}$
  from the hypersequent to the labelled languages, parametric on the
  world and neighbourhood labels $\mathsf{a}_n$ and $\mathsf{x}_n$,
  respectively, is recursively defined as

  \vspace{0.2cm}
  \noindent
  \resizebox{\textwidth}{!}{$
    \begin{array}{lcl}
\hl{\G, \str {\Sigma_i^1} \seq  \D}{\mathsf{a}_1}{\mathsf{x}_1}&=&
\iset{a_{i}^1\nbr x_1}{i}, \iset{a_{ij}^1\ufor A_{ij}^1}{ij},  x_1:\G \seq  x_1:\D,\iset{\overline{a_{ij}^1}\efor A_{ij}^1}{ij}\\
\hl{\hG \hyp \G, \str {\Sigma_i^n} \seq  \D}{\mathsf{a}_n}{\mathsf{x}_n}&=&
\hl{\hG}{\mathsf{a}_{n-1}}{\mathsf{x}_{n-1}}\otimes (x_n\in b_n,\iset{a_{i}^n\nbr x_n}{i}, \iset{a_{ij}^n\ufor A_{ij}^n}{ij}, x_n:\G \seq  \\
& & x_n:\D,\iset{\overline{a_{ij}^n}\efor A_{ij}^n}{ij})
\end{array}
$}

\noindent
where
\begin{itemize}
\item $k=1,\ldots, n$ indexes the components;
\item $i=1,\ldots, l_k$ indexes the blocks in the component $k$;
\item $j=1,\ldots,s_{ik}$ indexes the formulas in the block $i$ of the
  component $k$;
\item $\mathsf{x}_n=\{x_k\}_{1\leq k\leq n} $, where $x_k$ is a world
  variable relative to the $k$-th component;
\item $\mathsf{a}_n=\bigcup_{k=1}^{n}\iset{a_i^k}{1\leq i\leq l_k}$,
  where $\iset{a_i^k}{1\leq i\leq l_k}$ is the set of neighbourhood
  variables representing blocks in the $k$-th component,
  $a_i^k=a_{i1}^k\ldots a_{is_{ik}}^k$;
\item $b_n\in \iset{a_i^k}{i}\cup \iset{\overline{a_i^k}}{i}$ is a
  neighbourhood term, with $1\leq k\leq n-1$;
\item the operator $\otimes$ represents the concatenation of sequents
\[
(\Theta_1\seq\Upsilon_1)\otimes(\Theta_2\seq\Upsilon_2)
:=(\Theta_1,\Theta_2\seq\Upsilon_1,\Upsilon_2)
\]
\end{itemize}
\end{definition}

For readability, we will ease the notation by assuming that: the
active component in the conclusion of rule applications has label $n$;
$\str{\Sigma}$ is a block in this component with
$\Sigma=\{A_{j}\}, 1\leq j\leq s$; and $a=a_{1}\ldots a_{s}$ is the
neighbourhood variable representing $\str{\Sigma}$, where $a=\tau$ if
$\Sigma=\{\top\}$; if $b$ is the neighbourhood variable representing
$\str{\Pi}$, then $ab$ is the neighbourhood variable representing
$\str{\Sigma,\Pi}$.  Finally, we will omit the non-active formulas on
the derivations, replaced by (possibly indexed) context variables
$X,Y$.

Observe that hypersequent and labelled proofs have two important
differences: rules in $\HEstar$ are {\em kleene'd}, in the sense that
principal formulas are explicitly copied bottom-up; and $\oneEs$
introduces terms and proof-steps that have no correspondence in the
hypersequent setting.  As a result, we need to introduce some
flexibility in how contexts are related between an hypersequent proof
and the labelled proof emulating it.

Let $\varR$ be either a propositional or a left box rule in $\HEstar$,
and $H$ one of its premises. We say that $U_H$ is an {\em unkleene'd}
version of $H$ if $U_H$ coincides with $H$ but for the replication of
the principal formula, in an application of $\varR$ (see also
Section~\ref{sec:complexity hyp}). For example, in the derivation
\[
  \infer[]{\hG \hyp \G, \Box A \seq \D}{H=\hG \hyp \G, \Box A, \str A \seq \D}
\]
we have that $U_H= \hG \hyp \G,  \str A \seq \D$.

Similarly, consider an application of the $\Rbox$ rule in $\HEstar$,
with conclusion $H$ and premises $H_1,H_2^j, 1\leq j\leq s$

\vspace{0.2cm}
\noindent
\resizebox{\textwidth}{!}{$
\infer[\Rbox]{H=\hG \hyp \G,  \langle\Sigma\rangle \seq \Box B, \D}{H_1=\hG \hyp \G, \langle\Sigma\rangle \seq \Box B,\D \hyp \Sigma \seq B & 
\iset{H_2^j=\hG \hyp \G, \langle\Sigma\rangle \seq \Box B,\D \hyp B \seq A_j}{A_j\in\Sigma}}
$}

\vspace{0.2cm}
\noindent
and an application of the $\Rbox$ rule in $\oneEs$, with conclusion
$S=\hl{H}{\mathsf{a}_n}{\mathsf{x}_{n}}$ and premises $S_1,S_2$
\[\infer[\Rbox]{S=\hl{H}{\mathsf{a}_n}{\mathsf{x}_{n}}}
{S_1=\hl{H}{\mathsf{a}_n}{\mathsf{x}_n}\otimes(\;\seq   a \ufor B)&
S_2=\hl{H}{\mathsf{a}_n}{\mathsf{x}_n}\otimes(\n{a} \efor B \seq  \;)}
\]
Let  
$x_{n+1}$ be a fresh world variable. 
We call 

\vspace{0.2cm}
\noindent
\resizebox{\textwidth}{!}{
$E_{H_1}=\hl{H_1}{\mathsf{a}_n}{\mathsf{x}_{n+1}}\otimes(x_{n+1}\in a,\iset{x_{n+1}\in a_j}{j}\seq \;)
\mbox{ and }
E_{H_2^j}=\hl{H_2^j}{\mathsf{a}_n}{\mathsf{x}_{n+1}}\otimes(x_{n+1}\in \n{a},x_{n+1}\in \n{a_j} \seq \;)$}

\noindent {\em extensions} of
$\hl{H_1}{\mathsf{a}_n}{\mathsf{x}_{n+1}}$ and
$\hl{H_2^j}{\mathsf{a}_n}{\mathsf{x}_{n+1}}$,
respectively.\footnote{Here we slightly abuse the notation since
  $\mathsf{a}_{n+1}=\mathsf{a}_n$.}

The following lemma shows that unkleening and extensions do not alter
provability.

\begin{lemma}\label{lemma:equiv}
  Let $H, U_H,S_1,S_2,E_{H_1}, E_{H_2^j}$ as described above. Then
  \begin{itemize}
  \item[a.] $H$ and $U_H$ are height-preserving equivalent in
    $\HEstar$, that is, $H$ is provable with height at most $n$ in
    $\HEstar$ iff so it is $U_H$;
  \item[b.] $S_1$ (resp. $S_2$) is provable iff $E_{H_1}$ is provable
    (resp. $E_{H_2^j}$ is provable, for all $1\leq j\leq s$) in
    $\oneEs$;
  \item[c.] $E_{H_1}$ and $\hl{H_1}{\mathsf{a}_n}{\mathsf{x}_{n+1}}$
    (resp. $E_{H_2^j}$ and
    $\hl{H_2^j}{\mathsf{a}_n}{\mathsf{x}_{n+1}}$) are
    height-preserving equivalent in $\oneEs$.
  \end{itemize}
\end{lemma}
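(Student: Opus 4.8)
The plan is to prove the three items separately, treating (a) and (c) as routine \hp\ structural-admissibility arguments and reserving the genuine work for (b), where the combinatorial rules on neighbourhood terms come into play. Throughout I would freely use the \hp\ admissibility of weakening and contraction and the invertibility of rules in $\HEstar$ (Proposition~\ref{prop:adm-struct-rules}), together with the analogous well-known structural properties of the labelled calculus $\oneEs$.

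For item (a) I would argue by induction on the height of the derivation, exploiting that $U_H$ arises from $H$ by deleting exactly one occurrence of the principal formula of $\varR$ (a $\Box A$ for \lbox, or a conjunction/disjunction/implication for the propositional left rules), while its ``unpacked'' content (the block $\str A$, respectively the immediate subformulas) is already present in $U_H$. The direction from $U_H$ to $H$ is immediate from \hp\ admissibility of weakening. For the converse I would push the deletion upward through the derivation of $H$: whenever the deleted formula is not principal in the last rule the induction hypothesis applies directly, and in the single case where it is principal (for instance a further backward \lbox\ on the copied $\Box A$, producing a second $\str A$) the induction hypothesis removes the superfluous copy and \hp\ admissibility of contraction on blocks or formulas (\hyplctr, or \textsf{Sctr} inside a block) merges the duplicate while preserving the height. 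The propositional cases are identical, contracting the duplicated immediate subformulas instead.

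For item (c) the sequents $E_{H_1}$ and $\hl{H_1}{\mathsf{a}_n}{\mathsf{x}_{n+1}}$ differ only by the membership atoms $x_{n+1}\in a$ (a harmless duplicate) and $x_{n+1}\in a_j$ for $1\le j\le s$. Since $a=a_1\ldots a_s$, these atoms are precisely the decomposition of $x_{n+1}\in a$ produced by \ruledec, so I would establish (c) by combining \hp\ admissibility of weakening in $\oneEs$ (to insert the atoms in one direction) with \hp\ admissibility of the term-decomposition steps on relational atoms (to absorb them in the other), so that neither insertion nor deletion affects the height. The case of $E_{H_2^j}$ is symmetric, using \ruleovdec\ in place of \ruledec\ on the negative term $\n a$.

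The crux is item (b), which records the exact book-keeping of the \rbox\ simulation. For the left premiss I would read $S_1=\hl{H}{\mathsf{a}_n}{\mathsf{x}_n}\otimes(\ \seq a\ufor B)$ from the conclusion upward: a backward \rufor\ introduces a fresh world $y$ with $y\in a$ in the antecedent and goal $y:B$; applying \ruledec\ to $y\in a=a_1\ldots a_s$ yields all $y\in a_j$, and \lufor\ against the atoms $a_j\ufor A_j$ supplied by the translation of the block $\str\Sigma$ extracts $y:A_j$ for each $j$. Renaming $y$ to $x_{n+1}$, the resulting sequent is exactly $E_{H_1}$, and since each of these steps is invertible or structurally reversible the equivalence holds in both directions. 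For the right premiss the argument is dual: on $S_2=\hl{H}{\mathsf{a}_n}{\mathsf{x}_n}\otimes(\n a\efor B\seq\ )$ a backward \lefor\ introduces a fresh world $y$ with $y\in\n a$ and $y:B$ in the antecedent, \ruleovdec\ decomposes the negative term $\n a$, and \refor\ against $\overline{a_j}\efor A_j$ produces $y:A_j$ in the succedent; the branching of \ruleovdec\ over the constituents of $\n a$ is exactly what generates the family $\{E_{H_2^j}\}_{1\le j\le s}$, explaining the ``for all $j$'' in the statement. I expect the main obstacle to be precisely this negative-term case: matching the branching pattern of \ruleovdec\ with the indexed family of extensions, and discharging the freshness side-conditions of \rufor\ and \lefor\ uniformly with the single fresh world label $x_{n+1}$, so that the two formalisms align component by component.
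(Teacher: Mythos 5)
Your proposal follows essentially the same route as the paper's proof: item (a) by the standard invertibility/structural-admissibility argument, item (b) by unfolding $S_1$ and $S_2$ backwards with \rufor, \ruledec, \lufor\ (resp.\ \lefor, \ruleovdec, \refor) down to $E_{H_1}$ and the family $\{E_{H_2^j}\}_{j}$ and appealing to invertibility of all rules of $\oneEs$ for the converse direction, and item (c) by weakening in one direction and dispensability of the extra membership atoms in the other. The only place where the paper is more explicit than you are is the deletion direction of (c): rather than an ``admissibility of the decomposition steps'', it observes that the only rules applicable to $x_{n+1}\in a$ and $x_{n+1}\in a_j$ in a proof of $E_{H_1}$ (namely \ruledec\ and \lufor) would merely duplicate formulas already present in $E_{H_1}$, so these applications can be eliminated and the atoms then removed without increasing the height.
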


\begin{proof}
  (a) is easily proven by the usual invertibility argument. Regarding
  (b), observe that all the rules in $\oneEs$ are invertible. Hence,
  in the derivation $\pi_1$:
\vspace{0.2cm}
\noindent
\resizebox{\textwidth}{!}{$
\infer[\Rufor]{S_1=X, \iset{a_{j}\ufor A_{j}}{j}\seq a\ufor B,Y}
{\infer=[\dec]{X, \iset{a_{j}\ufor A_{j}}{j}, x_{n+1}\in a\seq x_{n+1}:B,Y}
{\infer=[\Lufor]{X,\iset{a_{j}\ufor A_{j}}{j},\iset{x_{n+1}\in a_{j}}{j}, x_{n+1}\in a\seq x_{n+1}:B,Y}
{E_{H_1}=X,\iset{a_{j}\ufor A_{j}}{j},\iset{x_{n+1}\in a_{j}}{j}, x_{n+1}\in a,\iset{x_{n+1}: A_j}{j}\seq x_{n+1}:B,Y}}} 
$}

\vspace{0.2cm}
\noindent
the sequent $S_1$
is provable iff $E_{H_1}$ is provable. Analogously for the derivation $\pi_2$:
\[
\infer[\Lefor]{S_2=X,\overline{a}\efor B\seq \iset{\overline{a_{j}}\efor A_{j}}{j},Y}
{\infer=[\ovdec]{X,x_{n+1}\in \overline{a}, x_{n+1}:B\seq \iset{\overline{a_{j}}\efor A_{j}}{j},Y}
{\infer[\Refor]{\iset{X,x_{n+1}\in \overline{a},x_{n+1}\in \overline{a_j}, x_{n+1}:B\seq  \iset{\overline{a_{j}}\efor A_{j}}{j},Y}{j}}
{\iset{E_{H_2^j}=X,x_{n+1}\in \overline{a},x_{n+1}\in \overline{a_j}, x_{n+1}:B\seq x_{n+1}:A_j,\iset{\overline{a_{j}}\efor A_{j}}{j},Y}{j}}}}
\]
Finally, for (c), assume that there is a proof $\pi$ of $E_{H_1}$ with
height $n$. Observe that the only rules that can be applied over
$x_{n+1}\in a$ and $x_{n+1}\in a_j$ in $\pi$ are \ruledec\ and \lufor,
respectively. But applying such rules would only duplicate formulas
already in $E_{H_1}$, and thus could be eliminated. Hence $\pi$ can be
transformed into a proof $\pi'$ of $E_{H_1}$ with height at most $n$,
where no rules are applied over $x_{n+1}\in a$ or $x_{n+1}\in a_j$,
and the result follows. The case for $E_{H_2^j}$ is similar.
\end{proof}

The next result establishes the relationship between $\HEstar$ and $\oneEs$.

\begin{theorem}
  Let $H$ be an hypersequent in $\HEstar$ with length $n$. The
  following are equivalent.
  \begin{itemize}
  \item[1.] $H$ is provable in $\HEstar$;
  \item[2.] $\hl{H}{\mathsf{a}_n}{\mathsf{x}_n}$ is provable in
    $\oneEs$.
  \end{itemize}
\end{theorem}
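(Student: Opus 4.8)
The plan is to establish the two implications separately, each by induction on the height of the relevant derivation, with Lemma~\ref{lemma:equiv} as the bridge for the modal rules. Under the translation the propositional rules act on the world label of the active component and match their labelled homonyms verbatim, the rule \rulec\ matches \ruleC\ (the block $\str{\Sigma,\Pi}$ being represented by the concatenated term $ab$), and \rulen\ matches \ruleNtau\ (the block $\str\top$ being the constant $\tau$). Hence the whole difficulty concentrates on the interplay of the two box rules \lbox\ and \rbox.

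For the direction $1 \Rightarrow 2$ I would induct on the height of the $\HEstar$-derivation of $H$. In the base case $H$ is initial, and one checks directly that $\hl{H}{\mathsf{a}_n}{\mathsf{x}_n}$ is an initial labelled sequent: the witnessing component of $H$ yields an \init\ instance $x_k:p$ on both sides, or $x_k:\bot$ on the left, or $x_k:\top$ on the right. For the inductive step I translate the last rule. For \lbox\ I invoke Lemma~\ref{lemma:equiv}(a): the hypersequent rule retains the principal $\Box A$, whereas the labelled \lbox\ consumes $x_k:\Box A$ and spawns a fresh neighbourhood $a$ with $a\nbr x_k$, $a\ufor A$ and $\n a\efor A$; unkleening shows the retained copy is harmless, and $a$ is precisely the neighbourhood variable that the translation assigns to the new singleton block $\str A$ in the premise. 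The pivotal case is \rbox, which creates a new component, i.e.\ a fresh world $x_{n+1}$: here Lemma~\ref{lemma:equiv}(b) shows that the premises $S_1,S_2$ of the labelled \rbox\ are provable exactly when the extensions $E_{H_1},E_{H_2^j}$ are, and Lemma~\ref{lemma:equiv}(c) shows that these extensions are \hp\ equivalent to the plain translations $\hl{H_1}{\mathsf{a}_n}{\mathsf{x}_{n+1}}$ and $\hl{H_2^j}{\mathsf{a}_n}{\mathsf{x}_{n+1}}$ of the hypersequent premises, to which the \ih\ applies. Choosing the parameter $b_n$ of the translation to be $a$ (respectively $\n a$) aligns the world-creation atom $x_{n+1}\in b_n$ with the \rbox\ application.

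For the converse $2 \Rightarrow 1$ I would start from a labelled derivation of $\hl{H}{\mathsf{a}_n}{\mathsf{x}_n}$ and normalise it. Since every rule of $\oneEs$ is invertible, I first permute the derivation into a canonical shape in which the applications of \lbox, \rbox, \ruleC\ and \ruleNtau\ occur in the pattern dictated by the translation, surrounded by the purely term- and forcing-manipulating inferences ($\dec$, $\ovdec$, \lufor, \refor\ and the local-forcing rules). Lemma~\ref{lemma:equiv}(c) then permits deleting, height-preservingly, the redundant $\dec$ and \lufor\ steps over the freshly created membership atoms $x_{n+1}\in a$ and $x_{n+1}\in a_j$, while Lemma~\ref{lemma:equiv}(b) replaces each labelled \rbox-premise by the corresponding extension, hence by a translation of a hypersequent premise. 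Reading the normalised derivation from the root, each contiguous block of labelled inferences is then the faithful image of a single $\HEstar$ rule, and erasing the labels yields a hypersequent derivation of $H$.

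The main obstacle is exactly this normalisation in the backward direction. The labelled calculus may interleave term decompositions and local-forcing inferences freely, so there is no a priori reason why an arbitrary labelled proof should factor through the rigid discipline of the hypersequent box rules---decompose boxes into (intersections of) blocks, then create a single new world. The work is to show, combining invertibility with the redundancy guaranteed by Lemma~\ref{lemma:equiv}(c), that every labelled proof of a translated hypersequent can be reorganised and pruned so that each surviving modal inference mirrors a hypersequent rule; granted this, the rest of the correspondence is routine bookkeeping.
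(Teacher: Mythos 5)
Your proposal is correct and follows essentially the same route as the paper: a rule-by-rule translation with Lemma~\ref{lemma:equiv} as the bridge, a straightforward induction for $1\Rightarrow 2$, and for $2\Rightarrow 1$ the same invertibility-driven permutation of the local-forcing and term rules ($\Rufor$, $\dec$, $\Lufor$) into the canonical shape of the translated \rbox{} block, followed by an appeal to parts (b) and (c) of the lemma and the inductive hypothesis. You have also correctly identified the backward normalisation as the crux, which is exactly where the paper invests its effort via its focusing-style argument.
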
    

\begin{proof}
  Consider the following translation between hypersequent {\em rule
    applications} and {\em derivations} in the labelled calculi, where
  the translation for the propositional rules is the trivial one.
  \begin{itemize}
  \item Case \lbox.
    \[
      \vcenter{\infer[]{\hG \hyp \G, \Box A \seq \D}{H=\hG \hyp \G,
          \Box A, \str A \seq \D}} \quad\leadsto\quad
      \vcenter{\infer[\Lbox]{\hl{\hG \hyp \G, \Box A \seq
            \D}{\mathsf{a}_n}{\mathsf{x}_n}}{\hl{U_H=\hG \hyp \G, \str
            A \seq \D}{\mathsf{a}_n\cup\{a\}}{\mathsf{x}_n}}}
    \]
    where $a$ is a fresh neighbourhood variable to be added, in
    $\mathsf{a}_n$, to the set of neighbourhood variables representing
    blocks in the $n$-th component.

  \item Case \rbox. 
\[
\vcenter{\infer[]{H=\hG \hyp \G,  \langle\Sigma\rangle \seq \Box B, \D}
{H_1=\hG \hyp \G, \langle\Sigma\rangle \seq \Box B, \D \hyp \Sigma \seq B &
\iset{H_2^j=  \hG \hyp \G,  \langle\Sigma\rangle \seq \Box B, \D \hyp B \seq A_j }{j}}}
\]
$\leadsto$

\vspace{0.2cm}
\noindent
\resizebox{\textwidth}{!}{$
\infer[\Rbox]{\hl{H}{\mathsf{a}_n}{\mathsf{x}_{n}}}
{\deduce{S_1=\hl{H}{\mathsf{a}_n}{\mathsf{x}_n}\otimes(\;\seq   a \ufor B)}{
\deduce{\vdots\;\pi_1}
{E_{H_1}=\hl{H_1}{\mathsf{a}_n}{\mathsf{x}_{n+1}}\otimes(x_{n+1}\in a,\iset{x_{n+1}\in a_j}{j}\seq \;)}} &
\deduce{S_2=\hl{H}{\mathsf{a}_n}{\mathsf{x}_n}\otimes(\n{a} \efor B \seq  \;)} 
{\deduce{\vdots\;\pi_2}{\iset{E_{H_2^j}=\hl{H_2^j}{\mathsf{a}_n}{\mathsf{x}_{n+1}}\otimes(x_{n+1}\in \n{a},x_{n+1}\in \n{a_j} \seq \;)}{j}}}}
$}

\vspace{0.2cm}
\noindent
where $\pi_1,\pi_2$ are the derivations in the proof
Lemma~\ref{lemma:equiv} (b).
\item Case \ruleM. Similar and simpler to the case \rbox, since the right premise in the derivation above has the proof
\[ 
\infer[\Lefor]{X,a\nbr x,\overline{a}\efor B\seq Y}
{\infer[\RLM]{X,a\nbr x,x_{n+1}\in \overline{a}, x_{n+1}:B\seq Y}{}}
\]
Hence,

\vspace{0.2cm}
\noindent
\resizebox{\textwidth}{!}{$
\vcenter{\infer[]{H=\hG \hyp \G,  \langle\Sigma\rangle \seq \Box B, \D}
{H_1=\hG \hyp \G, \langle\Sigma\rangle \seq \Box B, \D \hyp \Sigma \seq B}}
\leadsto
\vcenter{
\infer[\Rbox]{\hl{H}{\mathsf{a}_n}{\mathsf{x}_{n+1}}}
{\deduce{S_1=\hl{H}{\mathsf{a}_n}{\mathsf{x}_n}\otimes(\;\seq   a \ufor B)}{
\deduce{\vdots\;\pi_1}
{E_{H_1}=\hl{H_1}{\mathsf{a}_n}{\mathsf{x}_{n+1}}\otimes(x_{n+1}\in a,\iset{x_{n+1}\in a_j}{j}\seq \;)}}}}
$}

\item Case \ruleC.
\[
\vcenter{\infer{\hG \hyp \G, \langle\Sigma\rangle, \langle\Pi\rangle \seq \D}
{\hG \hyp \G, \langle\Sigma\rangle, \langle\Pi\rangle,  \langle\Sigma,\Pi\rangle \seq \D}}
\quad\leadsto\quad
\vcenter{\infer[\RLC]{\hl{\hG \hyp \G, \langle\Sigma\rangle, \langle\Pi\rangle \seq \D}{\mathsf{a}_n}{\mathsf{x}_n}}
{\hl{\hG \hyp \G, \langle\Sigma\rangle, \langle\Pi\rangle,  \langle\Sigma,\Pi\rangle \seq \D}{\mathsf{a}_n}{\mathsf{x}_n}}}
\]
\item Case \rulen. 
\[
\vcenter{\infer{\hG \hyp \G \seq \D}{\hG \hyp \G, \langle \top\rangle \seq \D}}
\quad\leadsto\quad
\vcenter{\infer[\RLN]{\hl{\hG \hyp \G \seq \D}{\mathsf{a}_n}{\mathsf{x}_n}}{\hl{\hG \hyp \G \seq \D}{\mathsf{a}_n}{\mathsf{x}_n}\otimes(\tau\nbr x_n\seq\; )}}
\]
Observe that
$\hl{\hG \hyp \G, \langle \top\rangle \seq
  \D}{\mathsf{a}_n}{\mathsf{x}_n}=\hl{\hG \hyp \G \seq
  \D}{\mathsf{a}_n}{\mathsf{x}_n}\otimes(\tau\nbr x_n, \tau\ufor
\top\seq \ov{\tau}\efor \top)$. But this sequent is provable iff
$\hl{\hG \hyp \G \seq \D}{\mathsf{a}_n}{\mathsf{x}_n}\otimes(\tau\nbr
x_n\seq\; )$ is provable, since $\tau\ufor \top$ can only add $x:\top$
to the right context, while $\overline{\tau}\efor \top$ can only be
triggered if $x\in \overline{\tau}$ is already in the left context for
some $x$.
\end{itemize}
Given this transformation and in the view of Lemma~\ref{lemma:equiv},
$(1)\Rightarrow(2)$ is easily proved by induction on a proof of $H$ in
$\HEstar$.

For proving $(2)\Rightarrow(1)$ observe that {\em provability} is
maintained from the end-sequent to the open leaves in the translated
derivations.  This means that choosing a formula
$\hl{H}{\mathsf{a}_n}{\mathsf{x}_n}$ to work on is equivalent to
performing all the steps of the translation given above, ending with
translated hypersequents of smaller proofs.  This is, in fact, one of
the pillars of the {\em focusing} method ~\cite{liang09tcs}.

In order to illustrate this, let $H=\hG \hyp \G,  \langle\Sigma\rangle \seq \Box B, \D$ and consider
the following derivation in the monotonic case
\[
\infer[\Rbox]{\hl{H}{\mathsf{a}_n}{\mathsf{x}_{n}}}
{\deduce{\hl{H}{\mathsf{a}_n}{\mathsf{x}_n}\otimes(\;\seq   a \ufor B)}{\pi}}
\]
where one decides to work on
$\hl{H}{\mathsf{a}_n}{\mathsf{x}_{n}}$. If $a\ufor B$ is never
principal in $\pi$, then $\pi$ acts over
$\hl{H}{\mathsf{a}_n}{\mathsf{x}_{n}}$ only and this derivation can be
substituted by
\[\infer[\Rbox]{\hl{H}{\mathsf{a}_n}{\mathsf{x}_{n}}}
{\deduce{\hl{H}{\mathsf{a}_n}{\mathsf{x}_n}\otimes(\;\seq   a \ufor B)}{
\deduce{\vdots \;\pi_1}
{\deduce{\hl{H}{\mathsf{a}_n}{\mathsf{x}_{n}}\otimes(x_{n+1}\in a,\iset{x_{n+1}\in a_j}{j}, \iset{x_{n+1}:A_j}{j}\seq x_{n+1}:B)}{\pi}}}}
\]
where $\pi_1$ is the derivation presented in the proof of
Lemma~\ref{lemma:equiv} (b).  Observe that
$\hl{H}{\mathsf{a}_n}{\mathsf{x}_{n}}\otimes(x_{n+1}\in
a,\iset{x_{n+1}\in a_j}{j}, \iset{x_{n+1}:A_j}{j}\seq x_{n+1}:B)$ is,
in fact,
$\hl{H_1}{\mathsf{a}_n}{\mathsf{x}_{n+1}}\otimes(x_{n+1}\in
a,\iset{x_{n+1}\in a_j}{j}\seq \;)$.

Suppose that $a\ufor B$ is principal at some point in $\pi$. Since
$\Rufor$ is invertible, it can be eagerly applied and $\pi$ can be
re-written as
\[
\infer[\Rbox]{\hl{H}{\mathsf{a}_n}{\mathsf{x}_{n+1}}}
{\infer[\Rufor]{\hl{H}{\mathsf{a}_n}{\mathsf{x}_n}\otimes(\;\seq   a \ufor B)}
{\deduce{\hl{H}{\mathsf{a}_n}{\mathsf{x}_n}\otimes(x_{n+1}\in a\seq   x_{n+1}:B)}{\pi'}}}
\]
where the application of the rule $\Rufor$ over $a\ufor B$ is permuted
down (and thus it does not appear in $\pi'$). This same argument can
be applied to $\dec$ and $\Lufor$ over $x_{n+1}\in a$ and
$x_{n+1}\in a_j$, respectively, obtaining the proof
\[\infer[\Rbox]{\hl{H}{\mathsf{a}_n}{\mathsf{x}_{n+1}}}
{\deduce{\hl{H}{\mathsf{a}_n}{\mathsf{x}_n}\otimes(\;\seq   a \ufor B)}{
\deduce{\vdots \;\pi_1}
{\deduce{\hl{H_1}{\mathsf{a}_n}{\mathsf{x}_{n+1}}\otimes(x_{n+1}\in a,\iset{x_{n+1}\in a_j}{j}\seq \;)}{\pi''}}}}\]
According to Lemma~\ref{lemma:equiv} (c),
$\hl{H_1}{\mathsf{a}_n}{\mathsf{x}_{n+1}}\otimes(x_{n+1}\in
a,\iset{x_{n+1}\in a_j}{j}\seq \;)$ is provable iff
$\hl{H_1}{\mathsf{a}_n}{\mathsf{x}_{n+1}}$ is provable. By inductive
hypothesis, $H_1$ is provable in $\HEstar$ with proof $\delta$. Hence
$H$ is provable with proof
\[
\infer[\MRbox]{\hG \hyp \G,  \langle\Sigma\rangle \seq \Box B, \D}
{\deduce{\hG \hyp \G, \langle\Sigma\rangle \seq \Box B, \D \hyp \Sigma \seq B}{\delta}}
\]
\end{proof}

We finish this section by illustrating the translation in the monotonic case.

\begin{example}
Consider the following derivation of the axiom $M$ in $\hypcalc\EM$
\[
\infer[\Lbox]{\Box (A\land B) \seq \Box A}
{\infer[\MRbox]{\Box (A\land B), \str{A \land B} \seq \Box A}
{\infer[\Lwedge]{\Box (A\land B), \str{A \land B} \seq \Box A \hyp A\land B \seq A}
{\infer[\Init]{H=\Box (A\land B), \str{A \land B} \seq \Box A \hyp A\land B, A, B \seq A}{}}}}
\]
This is mimicked in $\oneEs$ by

\vspace{0.2cm}
\noindent
\resizebox{\textwidth}{!}{$
\infer[\Lbox]{x_1:\Box (A\land B) \seq x_1:\Box A}
{\infer[\Rbox]{a \nbr x_1, a \ufor (A\land B)\seq x_1:\Box A,\n a \efor (A\wedge B)}
{\infer[\Rufor]{a \nbr x_1, a \ufor (A\land B) \seq  x_1:\Box A,\n a \efor (A\wedge B), a \ufor A}
{\infer[\Lufor]{a \nbr x_1, a \ufor (A\land B),x_2\in a \seq  x_1:\Box A,\n a \efor (A\wedge B), x_2:A}
{\infer[\Lwedge]{a \nbr x_1, a \ufor (A\land B), x_2\in a, x_2:(A\wedge B) \seq  x_1:\Box A,\n a \efor (A\wedge B), x_2: A}
{\infer[\Init]{S=a \nbr x_1, a \ufor (A\land B), x_2\in a, x_2:A, x_2:B \seq  x_1:\Box A,\n a \efor (A\wedge B), x_2: A}{}}}}&
\pi}}
$}

\vspace{0.2cm}
\noindent
where $\pi$ is
\[ 
\infer[\Lefor]{a \nbr x_1, \n{a} \efor A \seq  x_1:\Box A,\n a \efor (A\wedge B)}
{\infer[\RLM]{a\nbr x_1,x_{2}\in \overline{a}, x_{2}:A\seq x_1:\Box A,\n a \efor (A\wedge B)}{}}
\]
Observe that
$S=\hl{U_H}{a}{x_1x_2}\otimes(x_2\in a \seq \ )$.

\end{example}



\section{Discussion and Conclusions} 
We have presented hypersequent calculi for the cube of classical
non-normal modal logics extended with axioms \axT, \axP, \axD, and
rules \RDnplus.  Apart from the distinction between monotonic and
non-monotonic systems, the calculi are modular.  They also have a
natural, and ``almost internal'' interpretation, as each component of
a hypersequent can be read as a formula of the language.  We have
shown that the hypersequent calculi have good structural properties,
in particular they enjoy cut elimination.  The calculi provide a
decision procedure, which is of optimal (\textsf{coNP}) complexity for
logics without \axC.  Moreover, from a failed proof we can easily
extract a countermodel (of polynomial size for logics without axiom
\axC) in the bi-neighbourhood semantics, whence by an easy
transformation also in the standard one.  Finally, the hypersequent
calculi can be embedded in the labelled calculi of
\cite{Dalmonte:2018} for the classical cube, providing thereby a kind
of ``compact encoding'' of derivations in the latter.

As we have already observed, not many works in the literature present
proof systems both allowing countermodel construction and enjoying
optimal complexity.  In this respect, the nested sequent calculus for
$\EM$ proposed in \cite{DBLP:conf/tableaux/Lellmann19} achieves both
goals it allows for both direct countermodel construction and can be
adapted for optimal complexity, similarly to what we did in
Section~\ref{sec:complexity hyp}.  However, as we explained, the
nested-sequent structure is of no help for non-monotonic
logics. Additionally, since the logics there also contain normal modal
logic $\K$, they are of \textsf{PSPACE} complexity.

Furthermore, optimal decision procedures for all logics of the
classical cube are presented in \cite{Giunchiglia}. The procedures
reduce validity/satisfiability in each modal logic to a set of SAT
problems, to be handled by a SAT solver; despite their efficiency, the
procedures provide neither ``proofs'', nor countermodels, whence
having a different aim from the calculi of this work. Our hypersequent
calculi have nonetheless an interest for automated reasoning: for
systems within the classical cube, they have been implemented in the
theorem prover HYPNO~\cite{HYPNO}.

All in all, the structure of our calculi, namely hypersequents with
blocks, provides an adequate framework for extracting countermodels
from a single failed proof, ensuring, at the same time, good
computational and structural properties, as well as modularity.  In
particular, we believe that this structure is likely the simplest and
the most adequate having these features in the non-monotonic case.

Two issues remain open: how to extend the present framework to deal
with the axioms 4, 5, B of normal modal logic, in particular in the
non-monotonic case.  Since our calculi are based on the \bin \
semantics, this investigation presupposes an extension of the
semantics itself to cover these axioms.

Another issue is the one of interpolation: \cite{Pattinson:2013}
presents a general result on uniform interpolation for rank-1 logics,
which would cover all examples considered here. However, since there
seem to be some issues with this result~\cite{Seifan:2017CALCO}, and
since the construction of the interpolants is not fully explicit
there, it is worth continuing the exploration of proof theoretic ways
of showing interpolation results. In \cite{Orlandelli:2019} a
constructive proof of Craig interpolation is provided for a good part
of the logics considered in this work, but not for non-monotonic
logics with \axC. Could our calculi be used to cover these missing
cases, perhaps using methods like those of~\cite{Kuznets:2018apal}?
We intend to investigate this issue in future work.


\bibliographystyle{abbrv}
\bibliography{references}

\begin{thebibliography}{10}

\bibitem{Askounis}
D.~Askounis, C.~D. Koutras, and Y.~Zikos.
\newblock Knowledge means 'all', belief means 'most'.
\newblock In L.~F. del Cerro, A.~Herzig, and J.~Mengin, editors, {\em Logics in
  Artificial Intelligence - 13th European Conference, {JELIA} 2012, Toulouse,
  France, September 26-28, 2012. Proceedings}, volume 7519 of {\em Lecture
  Notes in Computer Science}, pages 41--53. Springer, 2012.

\bibitem{Avron}
A.~Avron.
\newblock The method of hypersequents in the proof theory of propositional
  non-classical logics.
\newblock In C.~S. W.~Hodges, M.~Hyland and J.~Truss, editors, {\em Logic: From
  Foundations to Applications}, pages 1--32. Oxford Science Publications, 1996.

\bibitem{Blackburn:2001fk}
P.~Blackburn, M.~de~Rijke, and Y.~Venema.
\newblock {\em Modal Logic}.
\newblock Cambridge University Press, 2001.

\bibitem{Chellas:1980}
B.~Chellas.
\newblock {\em Modal Logic: An Introduction}.
\newblock Cambridge University Press, 1980.

\bibitem{DBLP:conf/tableaux/CiabattoniMS13}
A.~Ciabattoni, P.~Maffezioli, and L.~Spendier.
\newblock Hypersequent and labelled calculi for intermediate logics.
\newblock In D.~Galmiche and D.~Larchey{-}Wendling, editors, {\em Automated
  Reasoning with Analytic Tableaux and Related Methods - 22th International
  Conference, {TABLEAUX} 2013, Nancy, France, September 16-19, 2013.
  Proceedings}, volume 8123 of {\em Lecture Notes in Computer Science}, pages
  81--96. Springer, 2013.

\bibitem{DBLP:conf/lfcs/DalmonteLOP20}
T.~Dalmonte, B.~Lellmann, N.~Olivetti, and E.~Pimentel.
\newblock Countermodel construction via optimal hypersequent calculi for
  non-normal modal logics.
\newblock In S.~N. Art{\"{e}}mov and A.~Nerode, editors, {\em Logical
  Foundations of Computer Science - International Symposium, {LFCS} 2020,
  Deerfield Beach, FL, USA, January 4-7, 2020, Proceedings}, volume 11972 of
  {\em Lecture Notes in Computer Science}, pages 27--46. Springer, 2020.

\bibitem{Dalmonte:2018}
T.~Dalmonte, N.~Olivetti, and S.~Negri.
\newblock Non-normal modal logics: Bi-neighbourhood semantics and its labelled
  calculi.
\newblock In G.~Bezhanishvili, G.~D'Agostino, G.~Metcalfe, and T.~Studer,
  editors, {\em Advances in Modal Logic 12, proceedings of the 12th conference
  on ``Advances in Modal Logic,'' held in Bern, Switzerland, August 27-31,
  2018}, pages 159--178. College Publications, 2018.

\bibitem{HYPNO}
T.~Dalmonte, N.~Olivetti, and G.~L. Pozzato.
\newblock {HYPNO}: Theorem proving with hypersequent calculi for non-normal
  modal logics.
\newblock In {\em Proccedings of IJCAR 2020 -- to appear}, 2020.

\bibitem{Elgesem:1997}
D.~Elgesem.
\newblock The modal logic of agency.
\newblock {\em Nordic Journal of Philosophical Logic}, 2:1--46, 1997.

\bibitem{Fitting:1983book}
M.~Fitting.
\newblock {\em Proof Methods for Modal and Intuitionistic Logics}.
\newblock Number 169 in Synthese Library. Springer, 1983.

\bibitem{gentzen35}
G.~Gentzen.
\newblock Investigations into logical deduction.
\newblock In {\em {The Collected Papers of Gerhard Gentzen}}, pages 68--131.
  1969.

\bibitem{Gilbert}
D.~R. Gilbert and P.~Maffezioli.
\newblock Modular sequent calculi for classical modal logics.
\newblock {\em Studia Logica}, 103(1):175--217, 2015.

\bibitem{DBLP:conf/aiml/GirlandoON18}
M.~Girlando, N.~Olivetti, and S.~Negri.
\newblock Counterfactual logic: Labelled and internal calculi, two sides of the
  same coin?
\newblock In G.~Bezhanishvili, G.~D'Agostino, G.~Metcalfe, and T.~Studer,
  editors, {\em Advances in Modal Logic 12, proceedings of the 12th conference
  on "Advances in Modal Logic," held in Bern, Switzerland, August 27-31, 2018},
  pages 291--310. College Publications, 2018.

\bibitem{Giunchiglia}
E.~Giunchiglia, A.~Tacchella, and F.~Giunchiglia.
\newblock Sat-based decision procedures for classical modal logics.
\newblock {\em J. Autom. Reasoning}, 28(2):143--171, 2002.

\bibitem{Goble:2013}
L.~Goble.
\newblock Prima facie norms, normative conflicts, and dilemmas.
\newblock {\em Handbook of deontic logic and normative systems}, 1:241--352,
  2013.

\bibitem{Hansson3}
S.~O. Hansson.
\newblock Formalization.
\newblock In S.~O. Hansson and V.~F. Hendricks, editors, {\em Introduction to
  Formal Philosophy}, pages 3--59. Springer, 2018.

\bibitem{Indr:2005}
A.~Indrzejczak.
\newblock Sequent calculi for monotonic modal logics.
\newblock {\em Bulletin of the Section of logic}, 34:151--164, 2005.

\bibitem{Indr:2011}
A.~Indrzejczak.
\newblock Admissibility of cut in congruent modal logics.
\newblock {\em Logic and Logical Philosophy}, 20:189--203, 2011.

\bibitem{KripkeII}
S.~A. Kripke.
\newblock Semantical analysis of modal logic ii. non-normal modal propositional
  calculi.
\newblock In L.~H. J.~W.~Addison and A.~Tarski, editors, {\em The Theory of
  Models}, pages 206--220. Amsterdam: North-Holland, 1965.

\bibitem{Kuznets:2018apal}
R.~Kuznets.
\newblock Multicomponent proof-theoretic method for proving interpolation
  properties.
\newblock {\em Annals of Pure and Applied Logic}, 169:1369--1418, 2018.

\bibitem{Lavendhomme:2000}
R.~Lavendhomme and T.~Lucas.
\newblock Sequent calculi and decision procedures for weak modal systems.
\newblock {\em Studia Logica}, 65:121--145, 2000.

\bibitem{DBLP:conf/tableaux/Lellmann19}
B.~Lellmann.
\newblock Combining monotone and normal modal logic in nested sequents - with
  countermodels.
\newblock In S.~Cerrito and A.~Popescu, editors, {\em Automated Reasoning with
  Analytic Tableaux and Related Methods - 28th International Conference,
  {TABLEAUX} 2019, London, UK, September 3-5, 2019, Proceedings}, volume 11714
  of {\em Lecture Notes in Computer Science}, pages 203--220. Springer, 2019.

\bibitem{Lellmann:2019}
B.~Lellmann and E.~Pimentel.
\newblock Modularisation of sequent calculi for normal and non-normal
  modalities.
\newblock {\em {ACM} Trans. Comput. Log.}, 20(2):7:1--7:46, 2019.

\bibitem{liang09tcs}
C.~Liang and D.~Miller.
\newblock Focusing and polarization in linear, intuitionistic, and classical
  logics.
\newblock {\em Theoretical Computer Science}, 410(46):4747--4768, 2009.

\bibitem{Negri:2017}
S.~Negri.
\newblock Proof theory for non-normal modal logics: The neighbourhood formalism
  and basic results.
\newblock {\em IFCoLog J. Log. Appl.}, 4:1241--1286, 2017.

\bibitem{Orlandelli:2014}
E.~Orlandelli.
\newblock Proof analysis in deontic logics.
\newblock In F.~Cariani, D.~Grossi, J.~Meheus, and X.~Parent, editors, {\em
  Deontic Logic and Normative Systems - 12th International Conference, {DEON}
  2014, Ghent, Belgium, July 12-15, 2014. Proceedings}, volume 8554 of {\em
  Lecture Notes in Computer Science}, pages 139--148. Springer, 2014.

\bibitem{Orlandelli:2019}
E.~Orlandelli.
\newblock Sequent calculi and interpolation for non-normal logics.
\newblock http://arxiv.org/abs/1903.11342, 2019.

\bibitem{Pacuit:2017}
E.~Pacuit.
\newblock {\em Neighborhood Semantics for Modal Logic}.
\newblock Springer, 2017.

\bibitem{Pattinson:2013}
D.~Pattinson.
\newblock The logic of exact covers: Completeness and uniform interpolation.
\newblock In {\em 28th Annual {ACM/IEEE} Symposium on Logic in Computer
  Science, {LICS} 2013, New Orleans, LA, USA, June 25-28, 2013}, pages
  418--427. {IEEE} Computer Society, 2013.

\bibitem{DBLP:journals/logcom/Pauly02}
M.~Pauly.
\newblock A modal logic for coalitional power in games.
\newblock {\em J. Log. Comput.}, 12(1):149--166, 2002.

\bibitem{Priest:2008}
G.~Priest.
\newblock {\em An introduction to non-classical logic: From if to is}.
\newblock Cambridge University Press, 2008.

\bibitem{Schroeder:2008}
L.~Schr{\"{o}}der and D.~Pattinson.
\newblock Shallow models for non-iterative modal logics.
\newblock In A.~Dengel, K.~Berns, T.~M. Breuel, F.~Bomarius, and
  T.~Roth{-}Berghofer, editors, {\em {KI} 2008: Advances in Artificial
  Intelligence, 31st Annual German Conference on AI, {KI} 2008, Kaiserslautern,
  Germany, September 23-26, 2008. Proceedings}, volume 5243 of {\em Lecture
  Notes in Computer Science}, pages 324--331. Springer, 2008.

\bibitem{Seifan:2017CALCO}
F.~Seifan, L.~Schr{\"o}er, and D.~Pattinson.
\newblock Uniform interpolation in coalgebraic modal logic.
\newblock In F.~Bonchi and B.~K{\"o}nig, editors, {\em 7th Conference on
  Algebra and Coalgebra in Computer Science (CALCO~2017)}, Leibniz
  International Proceedings in Informatics, pages 21:1--21:16. Dagstuhl
  Publishing, 2017.

\bibitem{simpson94phd}
A.~K. Simpson.
\newblock {\em The Proof Theory and Semantics of Intuitionistic Modal Logic}.
\newblock PhD thesis, University of Edinburgh, 1994.

\bibitem{Vardi:1986}
M.~Y. Vardi.
\newblock On epistemic logic and logical omniscience.
\newblock In J.~Y. Halpern, editor, {\em Proceedings of the 1st Conference on
  Theoretical Aspects of Reasoning about Knowledge, Monterey, CA, USA, March
  1986}, pages 293--305. Morgan Kaufmann, 1986.

\bibitem{Vardi:1989}
M.~Y. Vardi.
\newblock On the complexity of epistemic reasoning.
\newblock In {\em Proceedings of the Fourth Annual Symposium on Logic in
  Computer Science {(LICS} '89), Pacific Grove, California, USA, June 5-8,
  1989}, pages 243--252. {IEEE} Computer Society, 1989.

\bibitem{DBLP:books/daglib/0003059}
L.~Vigan{\`{o}}.
\newblock {\em Labelled non-classical logics}.
\newblock Kluwer, 2000.

\end{thebibliography}
\end{document}